\documentclass[11pt]{article}
\usepackage[utf8]{inputenc} 
\usepackage[T1]{fontenc}    

\usepackage{xcolor}
\definecolor{ForestGreen}{rgb}{0.1333,0.5451,0.1333}
\definecolor{DarkRed}{rgb}{0.65,0,0}
\definecolor{Red}{rgb}{1,0,0}
\usepackage[linktocpage=true,
pagebackref=true,colorlinks,
linkcolor=DarkRed,citecolor=ForestGreen,
bookmarks,bookmarksopen,bookmarksnumbered, breaklinks]
{hyperref}      
\usepackage{url}            
\usepackage{booktabs}       
\usepackage{amsfonts}       
\usepackage{nicefrac}       
\usepackage{microtype}      

\usepackage[round,longnamesfirst]{natbib}

\usepackage{fullpage}
\usepackage[margin = 1in]{geometry}

\usepackage{amsmath, amssymb, amsthm}
\usepackage{thmtools}
\usepackage{cleveref}

\usepackage{mathtools}  
\usepackage{xfrac} 
\usepackage[normalem]{ulem}

\usepackage{algorithm}
\usepackage{algpseudocode}

\usepackage{graphicx}

\usepackage{tcolorbox}

\usepackage{tikz}
\usepackage{pgfplots}
\pgfplotsset{compat=1.18}

\newtheorem{theorem}{Theorem}[section]
\newtheorem{claim}[theorem]{Claim}

\newtheorem{lemma}[theorem]{Lemma}
\newtheorem{definition}[theorem]{Definition}

\newcommand {\Exp}       {\mathbb{E}}
\newcommand {\Prob}  [1] {\Pr \{#1 \}}

\newcommand {\E}     [1] {\Exp\left[#1\right]}

\newcommand {\ncoord}    {\ell}

\DeclareMathOperator {\cost}{cost}

\DeclareMathOperator{\OPT}{OPT}

\title{Socially Fair Clustering: Parameterized Approximation and Local Search}

\author{
Aditya Anand\thanks{University of Michigan, \texttt{adanand@umich.edu}} 
\and 
Yury Makarychev\thanks{Toyota Technological Institute at Chicago (TTIC), \texttt{yury@ttic.edu}}
\and 
Liren Shan\thanks{Toyota Technological Institute at Chicago (TTIC), \texttt{lirenshan@ttic.edu}}
}
\date{}

\begin{document}

\maketitle

\begin{abstract}
We study the Socially Fair Clustering problem introduced by~\cite{abbasi2021fair} and~\cite{ghadiri2021socially}, along with its extension, the $(p,q)$-Socially Fair Clustering problem. This problem generalizes $k$-medians and $k$-means to settings where data points are partitioned into $\ell$ groups, and the goal is to find a fair clustering that is simultaneously good for all groups. We present several algorithms for this problem.

\begin{enumerate}

\item For $\ell_p$-Socially Fair Clustering, we give the first constant-factor FPT-approximation parameterized by the number of groups $\ell$, resolving the open question raised by~\cite{ghadiri2022constant}. Our main ingredient is a new algorithm for closing additional centers in parameterized time inspired by local search.   

\item We then turn to the more general $(p,q)$-Socially Fair Clustering problem. The known algorithm for this problem, proposed by~\cite{chlamtavc2022approximating} achieves a very good approximation but is complex, slow and difficult to implement. We analyze the performance of a simple local search algorithm and show that it provides an $O(q)$ approximation in the worst case.

\item Finally, we design approximation algorithms for the facility location variant of the problem, where the number of facilities (centers) is not fixed in advance, and opening each facility incurs an opening cost. Unlike in previous work, we do not assume these opening costs are the same for all groups.
\end{enumerate}

\end{abstract}


\section{Introduction}
\subsection{Overview}
Today, computers are widely used to analyze and classify data, generate recommendations, and make decisions that affect many aspects of people’s lives. Traditionally, algorithms have focused on optimizing the overall quality of a solution. As a result, their decisions may be unfair to specific individuals or groups. Given the significance of these decisions, addressing fairness in algorithmic decision-making has become essential.

The $k$-medians and $k$-means objectives are widely used in clustering and typically yield solutions that perform well for most data points. However, some points may be much farther from their assigned centers than others, and therefore incur disproportionately high costs. This is especially problematic when data points represent people, and those bearing the highest costs disproportionately belong to minority groups. Consider, for example, a facility location problem modeled using $k$-medians: we wish to open $k$ facilities (e.g., hospitals or grocery stores) in a city to serve all residents effectively. The goal is to minimize the sum of distances from each person (data point) to their nearest facility (center): $\sum_{u \in X} d(u, C)$. If a minority group resides in an isolated neighborhood, it is possible that none of the facilities will be placed nearby, leaving that group underserved.

To address this issue, \cite{abbasi2021fair} and \cite{ghadiri2021socially} independently introduced the \textit{Socially Fair Clustering} problem. This problem was also previously studied by~\cite{anthony} in the context of clustering under uncertainty.
In the fair clustering setting, the $n$ data points belong to $\ell$ groups $G_1, \dots, G_\ell\subseteq X$, representing minority or otherwise protected populations (the groups may overlap and do not have to cover $X$). The goal is to find a clustering that performs well for all groups simultaneously. 
It is convenient to assign weights to data points: for group $i$, let $w_i(u)$ denote the weight of point $u$, where $w_i(u) > 0$ if $u \in G_i$ and $w_i(u) = 0$ otherwise; one natural choice is $w_i(u) = 1/|G_i|$ for $u \in G_i$.

The $\ell_p$-Socially Fair Clustering problem, referred to as $\ell_p$-Clustering for brevity, asks for a set of $k$ centers $C$ from a set of potential centers $F \subseteq X$ that minimizes
\begin{equation} \label{eq:ell-p-objective}
\max_{i \in \{1,\dots,\ell\}} \left( \sum_{u \in X} w_i(u) d(u, C)^p \right)^{1/p},
\end{equation}
ensuring that the solution performs well across all groups. In other words, we compute the $p$-norm of distances $\left( \sum_{u \in X} w_i(u) d(u, C)^p \right)^{1/p}$ for each group $G_i$, and the objective aggregates these by taking a maximum (many papers define the objective as the $p$-th power of ours, the two formulations are essentially equivalent: an $\alpha$-approximation for our objective corresponds to an $\alpha^p$-approximation for the alternative, and vice versa).

A more general variant of the problem, known as $(p,q)$-Socially Fair Clustering with $1 \leq p \leq q \leq \infty$, employs a different aggregation function.\footnote{We assume that $q \geq p$, since otherwise the objective would favor \textit{unfair} clusterings. 
For example, let $p = 2q$ and $\ell = 2$. Suppose there are two candidate solutions: in the first, each group has $p$-norm distance $1$, while in the second, the $p$-norm of the distances for the two groups are $2^{1/p}$ and $0$ (where the $p$-norm distance of group $i$ is given by $\left( \sum_{u \in X} w_i(u) d(u, C)^p \right)^{1/p}$). The corresponding aggregated objectives are $(1^{1/2} + 1^{1/2})^{1/q} = 2^{1/q}$ and $(2^{1/2} + 0^{1/2})^{1/q} = 2^{1/(2q)}$. 
Thus, we would prefer the second solution, even though it is clearly less fair.}
In this formulation, we construct a vector whose $i$-th coordinate represents the $p$-norm of the distances for group $i$, and then take the $\ell_q$ norm of this vector:
\begin{equation}\label{eq:pq-objective}
\left( \sum_{i=1}^{\ell} \left( \sum_{u \in X} w_i(u) d(u, C)^p \right)^{q/p} \right)^{1/q}.
\end{equation}
We refer to this problem simply as $(p,q)$-Clustering. When $q = \infty$, the objective in~\eqref{eq:pq-objective} reduces to that in~\eqref{eq:ell-p-objective}; thus, $\ell_p$-Clustering and $(p,\infty)$-Clustering are equivalent.
Moreover, the $(p,\log \ell)$-objective approximates the $\ell_p$-objective within a constant factor. This follows from the more general fact that the $\ell_q$ and $\ell_r$ norms of a $d$-dimensional vector are within a factor of $d^{|\frac{1}{q} - \frac{1}{r}|}$ of each other, which can be shown by a simple computation. Hence in all our results for $(p,q)$-Clustering, we obtain results for $\ell_p$-Clustering by setting $q = \log \ell$.

\subsection{Known results} The problem has been extensively studied since its introduction. We begin by summarizing previously published approximation algorithms.
\cite{anthony} gave an $O(\log n + \log \ell)$-approximation for $\ell_1$-Clustering.
\cite{abbasi2021fair} and \cite{ghadiri2021socially} provided $O(\ell^{1/p})$-approximation algorithms for $\ell_p$-Clustering with $p \in \{1, 2\}$. Additionally, \cite{abbasi2021fair} presented a bicriteria approximation algorithm that achieves a constant-factor approximation by opening $O(k)$ centers.

The approximation factor was later improved by \cite{MV21}, who obtained an $O\left(\frac{\log \ell}{\log \log \ell}\right)^{1/p}$-approximation. 
This approximation factor is optimal under the assumption that $NP\not\subseteq \bigcap_{\delta> 0}DTIME(2^{n^\delta})$, as was shown by \cite{Bhattacharya}.
Finally, \cite{chlamtavc2022approximating} studied the $(p,q)$-Clustering problem and gave an $O\left(\frac{q}{\ln(1 + q/p)}\right)^{1/p}$-approximation for it.
This approximation factor matches that for $\ell_p$-clustering when $q=\log \ell$.

Another line of research has focused on designing fixed-parameter tractable (FPT) algorithms for $\ell_p$-Clustering.
In a significant result, \cite{goyal2023tight} settled the approximability of $\ell_p$-Fair Clustering for FPT algorithms parameterized by $k$ by designing a $3 + \varepsilon$-approximation algorithm and proving a matching lower bound of $3 - \varepsilon$.
They also established lower bounds of $1 + 2/e - \varepsilon$ and $\sqrt{1 + 8/e} - \varepsilon$ for the approximability of $\ell_1$ and $\ell_2$-Clustering when parameterized by both $k$ and $\ell$.

\cite{ghadiri2022constant} observed that in practice, $\ell$ is typically a small constant, while $k$ may be very large.
They presented a polynomial-time algorithm achieving a $5 + 2\sqrt{6}$ bicriteria approximation by opening at most $k + \ell$ centers. Building on the bicriteria result mentioned above, they also designed algorithms with running times $n^{O_p(\ell^2)}$ and $k^{\ell} \mathrm{poly}(n)$, achieving approximation factors of $5 + 2\sqrt{6} + \varepsilon$ and $15 + 6\sqrt{6} + \varepsilon$, respectively.
The latter is an FPT algorithm parameterized by both $k$ and $\ell$. They then raised the question of whether there exists a FPT algorithm achieving a constant approximation parameterized solely by $\ell$.
This question remained open until our current work.

We also note that the problem admits a $(1 + \varepsilon)$-approximation when parameterized by $k$ for certain special classes of metric spaces, as shown by~\cite{abbbasi-EPAS}.
In many real-world applications, the number of centers—typically referred to as facilities in this context—is not fixed in advance. Instead, opening each facility incurs a cost. As is standard, we refer to this setting as the Facility Location problem. \cite{abbasi2021fair} presented a 4-approximation algorithm for the Facility Location variant of $\ell_1$-Clustering, assuming homogeneous costs across groups (i.e., the cost of opening a facility is shared equally among all groups).

\subsection{Other related work}
The theme of fair clustering has received a lot of attention in recent years. Some other notions of fairness that have been adopted include balancedness~\cite{chierichetti2017fair}, no over-representation~\cite{ahmadian2019clustering}, fair hierarchical clustering~\cite{ahmadian2020fair}, individual fairness~\cite{mahabadi2020individual}, pairwise fairness~\cite{bandyapadhyay2024polynomial, bandyapadhyay2025constant}. Many of these incorporate constraints on clusters rather than enforcing fairness through a new objective, as is the case for Socially Fair Clustering.

\subsection{Our results}
\subsubsection{\texorpdfstring{FPT Algorithm parameterized by $\ell$}{FPT Algorithm parameterized by l}}
In this paper, we address several important open problems in the field. First, we resolve the open question posed by~\cite{ghadiri2022constant} by presenting a constant-factor approximation FPT algorithm for $\ell_p$-Clustering, parameterized by the number of groups $\ell$.

\begin{theorem}\label{thm:FPTmain}
For every $\varepsilon > 0$, there exists a $(17 + 6\sqrt{6} + \varepsilon)$-approximation algorithm for $\ell_p$-Clustering that runs in time $f(\ell, p, \frac{1}{\varepsilon}) \cdot n^{O(1)}$.
\end{theorem}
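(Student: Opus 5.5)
We start from the polynomial-time bicriteria algorithm of \cite{ghadiri2022constant}, which opens a set $C_0$ of at most $k+\ell$ centers whose $\ell_p$-Clustering cost is at most $\beta\cdot\OPT$ with $\beta = 5+2\sqrt{6}$. (We may assume $\OPT$ is known up to a factor $1+\varepsilon$ by guessing.) It then remains to close $\ell$ of these centers in FPT time parameterized by $\ell$, losing only an additive constant in the approximation factor. The target $17+6\sqrt6 = 3(5+2\sqrt6)+2$ suggests we should aim for a final cost of at most $3\beta\,\OPT + 2\,\OPT$ (times $1+\varepsilon$).

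\textbf{Step 1: what the closing step must satisfy.} Closing $c\in C_0$ and reassigning its points to a nearby center $c'$ costs each point $u$ at most $d(u,c)+d(c,c') \le d(u,c) + 2d(u,c)$ plus a term coming from the optimal centers. We compare against an optimal solution $C^*$ with $|C^*|=k$. Map each $c\in C_0$ to its nearest optimal center $\eta(c)\in C^*$. Since $|C_0|\le k+\ell$, at least $|C_0|-k$ centers of $C_0$ are ``redundant'': their images under $\eta$ are shared, or some optimal center receives several preimages. For every optimal center $o$ with preimages, keep the preimage closest to $o$ and close the others. A point $u$ served by a closed center $c$ is rerouted to the kept center $c'$ with $\eta(c')=\eta(c)=o$. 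This gives
\[
d(u,c')\le d(u,c)+d(c,o)+d(o,c') \le d(u,c)+2d(c,o),
\]
and $d(c,o)\le d(c,\eta_u)\le d(u,c)+d(u,C^*)$. Hence the new distance is at most $3d(u,C_0)+2d(u,C^*)$. Taking the per-group $p$-norm and applying Minkowski's inequality gives $3\beta\OPT + 2\OPT$. This needs a rerouting structure in which each kept center absorbs closed centers along the same optimal center. The problem is that $C^*$ is unknown.

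\textbf{Step 2: guessing the structure in FPT time (main obstacle).} We must identify, without knowing $C^*$, which $\ell$ centers to close and where to reroute them. The key observation is that only $\ell$ groups matter and we need only close $\ell$ centers. Following a local-search flavor, I would encode the requirement as follows. For each group $i$, the cost increase is linear in the ``closing set'', so we need a set $S\subseteq C_0$ with $|S|=|C_0|-k\le\ell$ together with a reroute map $\pi:S\to C_0\setminus S$. The per-group extra cost $\sum_{c\in S}\Delta_i(c,\pi(c))$ must be at most $(3\beta+2)^p\OPT^p - \cost_i(C_0)$ in $p$-th power form.

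The existence argument of Step 1 shows that such a pair $(S,\pi)$ exists, and moreover that it takes a ``forest'' shape: closed centers point to kept centers. To find a feasible pair we discretize each group's charge vector. Each candidate pair $(c,c')$ gets an $\ell$-dimensional cost vector $(\Delta_i(c,c'))_i$. We round each coordinate to multiples of $\varepsilon\OPT^p/\ell$ and cap it, which yields $(\ell/\varepsilon)^{O(\ell)}$ types. We then guess the multiset of at most $\ell$ types, giving $f(\ell,1/\varepsilon)$ options, and find disjoint pairs realizing it by a matching or color-coding argument in time $2^{O(\ell)}\mathrm{poly}(n)$. Rounding loses at most $\varepsilon$ in the approximation.

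The delicate part is that the rerouting targets must remain open and that two closed centers sharing a target do not interact. Interaction is harmless because the costs are additive across closed centers. Requiring targets to lie outside $S$ is exactly what the Step 1 construction guarantees, so color coding on the at most $2\ell$ involved centers handles it.

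\textbf{Step 3: conclusion.} The output opens $k$ centers and has cost at most $(1+\varepsilon)(3\beta+2)\OPT = (17+6\sqrt6+\varepsilon')\OPT$. The running time is $f(\ell,p,1/\varepsilon)\,n^{O(1)}$. I expect the hardest point to be making Step 2's rounding rigorous. The $p$-th power makes the triangle-inequality charges non-additive per point, so I would first bound the per-point rerouted distance by $3d(u,C_0)+2d(u,C^*)$. I would then charge in $p$-th-power form only through the Minkowski bound at the group level, using a guessed threshold $\OPT$.
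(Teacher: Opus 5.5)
Your proposal is correct. It shares the paper's architecture: start from the bicriteria solution of Theorem~\ref{thm:ghadiri-et-al}, prove existence of a closing structure with per-point bound $3d(u,C_0)+2d(u,C^*)$, then run red/blue color coding on the at most $2\ell$ relevant centers and guess discretized $\ell$-dimensional cost profiles. Two ingredients differ.

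\textbf{Existence.} You map each center of $C_0$ to its nearest optimal center and keep the closest preimage of each optimal center. The paper's Lemma~\ref{lemma:redirect} goes the other way: it marks the nearest current center of each optimal center, closes $t$ unmarked centers, and reroutes each to its nearest surviving center. Both give the same per-point bound. Your Minkowski step is equivalent to the paper's use of Claim~\ref{claim:MMR} with $\delta=2/(3c)$.

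\textbf{Search.} The paper gives each red center a single canonical profile: the cost of rerouting its clients to its nearest blue center. It then only guesses how many closed centers have each profile and picks red centers greedily. Your types belong to pairs $(c,c')$, and one red center can realize several types with different blue partners. So realizing a guessed multiset with distinct closed centers needs a bipartite matching between type slots and red centers. That is the concrete content of your ``matching'' step, and it works because shared targets are harmless (costs add over closed centers) and a good coloring keeps every chosen target open.

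You can drop the matching entirely. Under a good coloring $\pi(c)$ is blue, so the nearest blue center $b$ to $c$ satisfies $d(c,b)\le d(c,\pi(c))\le d(c,\eta(c))+d(\eta(c),\pi(c))\le 2d(c,\eta(c))$. This preserves your per-point bound and lets you use the paper's canonical profiles. The paper's existence lemma is set up so this nearest-receiver property holds by construction, which is what makes its search step a simple one-to-one correspondence.

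A minor point: because of the cap $(3\beta+2)^p\OPT^p$, the number of types is about $\bigl(\ell(3\beta+2)^p/\varepsilon\bigr)^{\ell}$, not $(\ell/\varepsilon)^{O(\ell)}$. This is harmless since $f$ may depend on $p$.
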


We prove this result by designing a new algorithm that reduces the number of centers in fixed-parameter time.

\begin{theorem}\label{thm:FPT}
There exists an algorithm that, given an instance of $\ell_p$-Clustering, a $c$-approximate solution $C$ with $k + t$ centers, and a parameter $\varepsilon > 0$, returns a $(3c + 2 + \varepsilon)$-approximate solution with $k$ centers. The algorithm runs in time $f(\ell, p, t, \frac{1}{\varepsilon}) \cdot n^{O(1)}$.
\end{theorem}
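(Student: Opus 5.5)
Given $C$ with $|C|=k+t$ and cost at most $c\cdot\OPT$, we reduce it to $k$ centers with a local-search-inspired argument. Fix an optimal solution $O$ with $|O|=k$, and for each $o\in O$ let $\phi(o)\in C$ be its nearest center in $C$. At least $t$ centers of $C$ are not of the form $\phi(o)$; call a center \emph{unmatched} if it has no preimage. The target solution is $C'=C\setminus S$, where $S$ is a set of $t$ unmatched centers; more precisely, we allow $C'$ to be $\phi$-type images together with some centers ``swapped in''.

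\textbf{Cost of deleting unmatched centers.} Take a point $u$ served in $C$ by a deleted center $c_u$. Route $u$ to $\phi(o_u)$, where $o_u$ is $u$'s optimal center. By the triangle inequality,
\[
d(u,\phi(o_u))\le d(u,o_u)+d(o_u,\phi(o_u))\le d(u,o_u)+d(o_u,c_u)\le 2d(u,o_u)+d(u,c_u).
\]
Taking $p$-norms within each group (Minkowski) gives group cost at most $2\OPT+c\OPT$ for every group simultaneously. Since $\phi(o_u)$ is never deleted, the bound holds whatever unmatched set is removed. This already gives $(c+2)$, which is better than $3c+2$. The catch is that we do not know $O$, hence we do not know which centers are unmatched; we need to find such a set in FPT time. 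This is the real difficulty, and it explains the weaker constant $3c+2$.

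\textbf{Finding the set to close.} The plan is to guess structure of bounded size. We want to choose $t$ centers to close so that every closed center $c$ has some surviving ``witness'' center reasonably close to the points it serves. Concretely, if the deleted center $c_u$ is instead \emph{matched} to some $o$ with $\phi(o)=c_u$, we reroute $u$ through $o$ to a surviving center. To keep $O$ unknown, I would guess, via enumeration over $f(\ell,p,t,1/\varepsilon)$ options, a discretized profile of each group's cost contributions. For each $i\le\ell$ and each candidate center this is its contribution to group $i$, rounded to powers of $(1+\varepsilon)$ relative to $\OPT$ (which is itself guessed up to $1+\varepsilon$). Centers whose contribution to every group is tiny ($\le \varepsilon\OPT^p/t$) can be closed cheaply. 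Only $f(\ell,t,\varepsilon)$ ``heavy'' centers matter, and we branch over which of them to close or keep. This is the local-search flavour: we consider closing a set of $t$ centers and reassigning each closed cluster to the nearest remaining center. The surviving centers' profiles are compared to the solution induced by the unmatched set, which exists by the argument above.

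\textbf{Bounding the reassignment.} If a closed center $c$ is matched, i.e.\ $c=\phi(o)$, the cost analysis routes a point through $c$, $o$ and a nearby surviving center. Using $d(o,c)\le d(o,u)+d(u,c)$ twice gives a per-point bound $d(u,\text{new})\le 3d(u,c_u)+2d(u,o_u)$, and summing gives the claimed $(3c+2+\varepsilon)\OPT$. The main obstacle is showing that the enumeration finds a closing set for which such a rerouting exists for every group simultaneously, i.e.\ a max over $\ell$ groups rather than a sum. I would handle it by the per-group profile guessing: there are at most $t\ell/\varepsilon$-many heavy (center, group) pairs, and we brute-force them. The light ones are charged additively as $\varepsilon$ to each group, with the exponent $p$ entering through the rounding granularity.
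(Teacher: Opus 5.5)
Your first step is correct, and it even shows existentially that closing any $t$ unmatched centers costs only a factor $c+2$. The gap is in the algorithmic part, which is the actual content of the theorem.

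\textbf{Why the heavy/light scheme fails.} A center whose current contribution is tiny in every group (e.g.\ its clients sit on top of it) may be far from all other centers of $C$, so it is not ``cheap to close.'' Suppose instead you measure lightness by the cost of closing that center alone. Then the number of heavy centers is no longer bounded by a function of $\ell,t,1/\varepsilon$: take $k+t$ equally weighted clients at pairwise distance $D$, with $C$ equal to all of them, and every center has closing cost exactly $\OPT^p/t$. Moreover, closing several individually cheap centers is not additive, because their nearest surviving centers may themselves be among the closed ones. Your third paragraph also misattributes the bound. The inequality $d(u,\mathrm{new})\le 3d(u,c_u)+2d(u,o_u)$ requires $\phi(o_u)$ to survive, which is exactly the \emph{unmatched} property. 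For a closed matched center $c=\phi(o)$ there is no surviving center near $o$ to route through. Finally, routing $u$ to $\phi(o_u)$ depends on the unknown $o_u$, so nothing in your scheme can be evaluated without $O$.

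\textbf{The missing ideas.} First, reassign the \emph{entire} cluster of each closed unmatched center $c_1$ to its nearest center $c_2$ in $C\setminus C_1$. Since $\phi(o_u)\notin C_1$, we get $d(c_1,c_2)\le d(c_1,\phi(o_u))\le 2d(c_1,o_u)$, which is where $3c+2$ comes from. Paying $3c+2$ instead of $c+2$ buys three things: the reassignment depends only on $c_1$, the receivers form a set $C_2$ with $|C_2|\le t$, and the cost vector is additive, $\cost(\beta,C_1,C_2)=\sum_{c_1\in C_1}\cost(\beta,c_1,C_2)$.

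Second, since you do not know $C_1,C_2$, use color coding. Color $C$ red/blue at random; with probability at least $2^{-2t}$, $C_1$ is red and $C_2$ is blue. In that case the nearest blue center of each $c_1$ is its nearest center in $C_2$, so its $B$-profile, which you can compute, equals its $C_2$-profile.

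Third, discretize profiles with granularity $\varepsilon^p\OPT^p/t$ per coordinate. Guess how many centers of $C_1$ have each profile; this takes $|\mathcal P|^t$ guesses with $|\mathcal P|\le (t((3c+2)/\varepsilon)^p)^\ell$. Then close any red centers realizing these counts and send their clients to the nearest blue center. Because blue centers are never closed, the costs add up. Each chosen center costs at most its counterpart in $C_1$ plus $\varepsilon^p\OPT^p/t$, so $\|\cost(\beta,D_1,D_2)\|_\infty\le((3c+2)^p-c^p+\varepsilon^p)\OPT^p$, which yields the $(3c+2+\varepsilon)$ guarantee.
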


Combining~\Cref{thm:FPT} with the following result of~\cite{ghadiri2022constant}, we immediately obtain~\Cref{thm:FPTmain}, thereby answering their open question.

\begin{theorem}[\cite{ghadiri2022constant}]\label{thm:ghadiri-et-al}
There is a bicriteria polynomial-time algorithm for $\ell_p$-Clustering that finds a $(5 + 2\sqrt{6})$-approximate solution with $k + \ell$ centers.
\end{theorem}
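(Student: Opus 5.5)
The plan is to use LP rounding with a filtering step, and then an extreme-point argument to control how many centers are opened. First, guess the optimal value $\OPT$ up to a $(1+\varepsilon)$ factor by binary search. Then solve the natural relaxation: variables $y_f\in[0,1]$ for $f\in F$ and $x_{fu}\ge 0$, with constraints $\sum_f x_{fu}=1$, $x_{fu}\le y_f$, $\sum_f y_f\le k$, and, for every group $i$, $\sum_{u}w_i(u)\sum_f x_{fu}d(u,f)^p\le \OPT^p$. For each point $u$, define its fractional radius $R_u=\bigl(\sum_f x_{fu}d(u,f)^p\bigr)^{1/p}$. By Markov's inequality, the ball $B(u,\gamma R_u)$ carries $y$-mass at least $1-\gamma^{-p}$ for a parameter $\gamma>1$ to be tuned later.

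Next comes the filtering step. Process points in increasing order of $R_u$, and keep $u$ as a representative unless some already kept $v$ satisfies $d(u,v)\le \beta R_u$ (with $\beta\ge 2\gamma$). This makes the balls $B_v=B(v,\gamma R_v)$ of kept points pairwise disjoint. Every discarded $u$ has a representative $v$ with $R_v\le R_u$ and $d(u,v)\le\beta R_u$.

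We then write a reduced LP over the kept points $v$ with variables $z_v\in[0,1]$. Here $z_v$ means "open a center inside $B_v$", and otherwise $v$ is served through the ball of its nearest other kept point $v'$. The cost of $v$ is linearized as $z_v(\gamma R_v)^p+(1-z_v)\,D_v^p$, where $D_v$ bounds the distance to a center in $B_{v'}$ (via $d(v,v')$ plus that ball's radius). The constraints are $\sum_v z_v\le k$ and one cost constraint per group, where each group's cost is the weighted cost of its points transported to their representatives. Setting $z_v$ proportional to the $y$-mass of $B_v$ shows that this LP is feasible with group costs $O(1)\cdot\OPT^p$. The key counting step follows: the LP has box constraints plus only $\ell+1$ other constraints, so a basic optimal solution has at most $\ell+1$ fractional coordinates. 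Rounding every positive fractional $z_v$ up to $1$ opens at most $\lfloor\sum z_v\rfloor+\ell\le k+\ell$ balls: integer coordinates plus fractional ones sum to at most $k$, and ceiling $\ell+1$ fractional values adds strictly less than $\ell+1$. We place one center in each opened ball, at the facility closest to $v$.

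Finally, we bound the cost per point. A kept $v$ with $z_v=1$ pays at most $\gamma R_v$. If $z_v=0$, then $v$ pays at most $d(v,v')+\gamma R_{v'}$, and the filtering guarantees give that this is $O(\beta)R_v$ once we argue that $v'$ is opened or recursively covered. A discarded $u$ pays at most $d(u,v)$ plus the cost of $v$, which is at most $(\beta+\text{const})R_u$. To aggregate, use Minkowski's inequality in each group's weighted $\ell_p$ norm, which turns the per-point bounds $a\,R_u+b\,(\text{LP cost of }u)^{1/p}$ into a factor $a+b$ on $\OPT$. Since rounding up only decreases costs where the costs are monotone in $z$, the cost bound survives the rounding.

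The main obstacle is getting exactly the constant $5+2\sqrt6=(\sqrt2+\sqrt3)^2$. This requires carefully choosing how a closed kept point is charged: whether to its nearest kept neighbour or to the fractional mass outside its ball, with Markov's inequality controlling that mass. It then requires optimizing over $\gamma$ and $\beta$, and the optimum should come from balancing terms of the form $\gamma+c/(\gamma-1)$. I would first derive the per-point bound as a function of $(\gamma,\beta)$, and only then minimize it.
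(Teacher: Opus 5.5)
There is a genuine gap, and it sits exactly at the step you defer. Your vertex-counting argument is correct: box constraints plus $\ell+1$ further inequalities leave at most $\ell+1$ fractional coordinates, and rounding them up opens at most $k+\ell$ balls. The problem is that the reduced LP does not bound the cost of the rounded solution.

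For a kept point $v$ with $z_v=0$ you charge $D_v^p=(d(v,v')+\gamma R_{v'})^p$. This bounds $v$'s true cost only if the ball $B_{v'}$ actually receives a center. Nothing in the LP forces $z_{v'}=1$, or even $z_v+z_{v'}\ge 1$. Take an isolated, lightly weighted pair $v,v'$ of mutually nearest kept points. Its modelled cost with $z_v=z_{v'}=0$ is only a constant times what an optimal solution pays for it, so a vertex may well spend that center elsewhere. After rounding, both points can then be arbitrarily far from every open center. Your ``once we argue that $v'$ is opened or recursively covered'' is therefore precisely the step that cannot be carried out.

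Filtering also gives only \emph{lower} bounds $d(v,v')>\beta\max(R_v,R_{v'})$. So a closed point's cost is never $O(\beta)R_v$; it must be charged to the LP.

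For a single objective (Charikar--Guha--Tardos--Shmoys) this is repaired by imposing $z_v\ge 1/2$. One then rounds the half-integral part on the nearest-neighbour forest, opening either the odd or the even levels of each tree. With $\ell$ group budgets, the natural LP over these per-tree choices leaves up to $\ell+1$ trees undecided. Opening both sides of such a tree costs an unbounded number of extra centers.

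This is why the cited result of Ghadiri et al.\ (which the present paper only quotes, without proof) uses iterative rounding in the style of Krishnaswamy, Li and Sandeep. That scheme works as follows:
\begin{itemize}
\item it keeps a set of clients with pairwise disjoint balls, each constrained to carry $y$-mass exactly $1$;
\item the remaining clients pay a capped cost $\sum_{f\in F_j}d(j,f)^p y_f+(1-y(F_j))D_j^p$, with radii that shrink over the iterations;
\item every fully served client is charged to a nearby forced ball;
\item at the final extreme point, the tight constraints form a laminar family plus the $\ell+1$ budget constraints, and a rank argument bounds the number of fractionally open facilities.
\end{itemize}

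Two further problems. First, your feasibility claim with $z_v$ equal to the $y$-mass of $B_v$ is false. The leftover mass (at most $\gamma^{-p}$) may sit just outside radius $\gamma R_v$, while $D_v$ can be arbitrarily larger than $R_v$, so $(1-z_v)D_v^p$ need not be $O(R_v^p)$. You need to consolidate each facility's mass to its nearest kept point. Then mass not credited to $v$ lies at distance at least $d(v,v')/2$ from $v$, and $\gamma R_{v'}<d(v,v')/2$ gives $D_v<\tfrac32 d(v,v')$.

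Second, the constant $5+2\sqrt6$ is never derived. You only sketch the shape of an optimization you have not set up, so the stated factor would remain unproved even after the rounding is repaired.
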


Additionally, we strengthen Theorem~\ref{thm:ghadiri-et-al} in the important case of $\ell_1$-Clustering by presenting a local-search algorithm that finds a $(3+ \varepsilon)$-approximation algorithm with $k+f(\ell,\frac{1}{\epsilon})$ centers. 
(\Cref{thm:localsearchconstant}). 
This yields a variant of the algorithm from Theorem~\ref{thm:FPTmain} with a $(11+\varepsilon)$-approximation for $\ell_1$-Clustering. 

\subsubsection{Local Search Algorithm for \texorpdfstring{$(p,q)$}{(p,q)}-Clustering}
The second theme of our paper is the design of a more efficient and implementable algorithm for $(p,q)$-Clustering. The algorithm in~\cite{chlamtavc2022approximating} solves an exponential-size convex relaxation using the round-or-cut framework, which is impractical and difficult to implement. In contrast, local search algorithms are well known for their efficiency and ease of implementation.
Previously, \cite{Arya} showed how to obtain a $(5 + \varepsilon)$-approximation for $k$-median using a single-swap local search algorithm, and a $(3 + \varepsilon)$-approximation using a multiple-swap variant. This result was generalized by \cite{GT}, who gave an $O(p)$-approximation for the single-group case of $\ell_p$-Clustering and showed that this analysis is tight -- there are instances where the algorithm returns only an $\Omega(p)$-approximate solution.

We analyze the performance of the single-swap local search algorithm on instances of $(p,q)$-Clustering with an arbitrary number of groups $\ell$, and show that it yields an $O(q)$-approximation. The aforementioned lower bound in~\cite{GT} implies that our analysis is tight. While our approximation guarantee is worse than those of~\cite{MV21} and~\cite{chlamtavc2022approximating}, it comes close in a particularly important case: $\ell_1$-Clustering. For that problem, we obtain an $O(\log \ell)$-approximation by setting $q = \log \ell$, which matches the best known approximation factor of $O(\log \ell / \log \log \ell)$ up to a $\log \log \ell$ factor. Our algorithm yields a simple, implementable and much faster algorithm compared to the algorithm of~\cite{chlamtavc2022approximating}. \footnote{The running time in~\cite{chlamtavc2022approximating} is polynomial, but it is not stated explicitly.}

\begin{theorem}\label{thm:local_search}
There is a single-swap local search algorithm for the $(p,q)$-Clustering problem that gives an $O(q)$ approximation, performs at most $O(kq \log n)$ iterations and runs in time $O_{p,q}(n^2\ell k^2 \log n)$.
\end{theorem}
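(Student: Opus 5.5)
We plan to analyze the standard single-swap local search, run on a surrogate objective rather than on \eqref{eq:pq-objective} itself. For a center set $C$ write $\cost_i(C)=\sum_{u\in X} w_i(u)\,d(u,C)^p$ and
\[
\Phi(C)=\sum_{i=1}^{\ell}\cost_i(C)^{q/p},
\]
which is the $q$-th power of the objective. The algorithm starts from an arbitrary set of $k$ centers. It repeatedly performs a swap $C\to C-c+o$ whenever this decreases $\Phi$ by a factor of at least $(1-\frac{1}{\beta k})$, for a suitable $\beta=\Theta(q)$ or polynomial in $q$. The iteration bound follows by initializing with a solution that is $\mathrm{poly}(n)^{q}$-approximate for $\Phi$, for instance one with polynomially bounded aspect ratio obtained after standard rounding of distances. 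The number of improving steps is then $O(\beta k\cdot q\log n)$, which is $O(kq\log n)$ when the threshold is tuned appropriately. Each iteration scans all $k(n-k)$ swaps. Evaluating one swap costs $O(n\ell)$ once we maintain, for every point, its nearest and second-nearest centers. This gives the stated running time up to the $O_{p,q}$ factor.

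For the approximation guarantee, we fix a local optimum $C$ and an optimal solution $C^*$, and we use the Arya et al./Gupta--Tangwongsan swap pairs. Map each $o\in C^*$ to its nearest center $\eta(o)\in C$. Then build $k$ swaps $(c,o)$ such that every $o\in C^*$ is swapped in exactly once, every $c\in C$ is swapped out at most twice, and no $c$ that captures two or more optimal centers is ever swapped out. For every $i$ and every swap $(c,o)$, the usual reassignment argument bounds the per-group change:
\[
\cost_i(C-c+o)-\cost_i(C)\le \sum_{u\in N^*(o)} w_i(u)\bigl(d(u,o)^p-d(u,C)^p\bigr)+\sum_{u\in N(c)\setminus N^*(o)} w_i(u)\bigl(d(u,\eta(o^*_u))^p-d(u,C)^p\bigr),
\]
where $o^*_u$ is the optimal center of $u$. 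The second sum is controlled with $d(u,\eta(o^*_u))\le d(u,C)+2\,d(u,C^*)$ together with the relaxed triangle inequality $(a+b)^p\le(1+\delta)^{p-1}a^p+(1+1/\delta)^{p-1}b^p$. Summing over the $k$ swaps gives, for each group, $\sum_{\text{swaps}}\Delta_i\le -\cost_i(C)+O(p)^{p}\cost_i(C^*)+\tfrac12\cost_i(C)$ (or similar), exactly as in the single-group $O(p)$ analysis.

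The main obstacle is passing from these per-group linear inequalities to the nonlinear aggregate $\Phi$. Local optimality gives $\Phi(C-c+o)\ge(1-\frac1{\beta k})\Phi(C)$ for every swap. To use this we linearize $x\mapsto x^{q/p}$ at $\cost_i(C)$. Set the weights $\lambda_i=\frac{q}{p}\cost_i(C)^{q/p-1}$; then $\sum_i\lambda_i\Delta_i$ is the first-order change in $\Phi$. The difficulty is that convexity bounds the change from below, not from above. When $\Delta_i>0$ is large, the true increase $(\cost_i+\Delta_i)^{q/p}-\cost_i^{q/p}$ can far exceed $\lambda_i\Delta_i$. We would first try to bound this by observing that a single swap only increases $\cost_i$ through reassignment of points from $N(c)$, and each $c$ is swapped out at most twice. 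Using $(x+y)^{r}\le(1+\gamma)^{r-1}x^{r}+(1+1/\gamma)^{r-1}y^{r}$ with $r=q/p$ and $\gamma=\Theta(1/r)$ keeps the $(1+\gamma)^{r-1}=O(1)$ overhead. We then aggregate to obtain $\sum_i\lambda_i\cost_i(C)\lesssim O(q)^{q}\sum_i\lambda_i\cost_i(C^*)$ plus lower-order terms. Finally, H\"older's inequality with exponents $q/p$ and $q/(q-p)$ gives $\Phi(C)\le O(q)^{q}\,\Phi(C^*)^{p/q}\Phi(C)^{1-p/q}$, which rearranges to $\Phi(C)^{1/q}\le O(q)\,\Phi(C^*)^{1/q}$, the claimed $O(q)$ approximation.
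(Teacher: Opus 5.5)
The gap is at the step you yourself call the main obstacle. The mechanism you sketch there does not work as stated. Write $z_i=\cost_i(C)$, $z_i^*=\cost_i(C^*)$, and $\varphi(x)=x^r$ with $r=q/p$.

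\emph{The decrease side.} Linearizing at $\lambda_i=\varphi'(z_i)$ fails for decreases too, not only for increases. Convexity gives $\varphi(z')-\varphi(z_i)\ge\lambda_i(z'-z_i)$ for every $z'$. So a swap that drives $z_i$ toward $0$ gains only $\varphi(z_i)=\lambda_i z_i/r$, a factor $r$ less than its linearization. Local optimality only lower-bounds the true change of $\Phi$, so you need a valid upper bound on the change on both sides. For decreases the usable slope is the chord from the origin, $m_i=\varphi(z_i)/z_i=\lambda_i/r$.

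\emph{The increase side.} Applying $(x+y)^r\le(1+\gamma)^{r-1}x^r+(1+1/\gamma)^{r-1}y^r$ swap by swap with a fixed $\gamma=\Theta(1/r)$ leaves an additive $\bigl((1+\gamma)^{r-1}-1\bigr)\varphi(z_i)=\Theta(\varphi(z_i))$ in \emph{every} swap. Over the $k$ swaps that is $\Theta(k\,\varphi(z_i))$, which swamps the total gain of about $\varphi(z_i)$. The fact that each $c$ is swapped out at most twice controls the sum of the increments, not the sum of their images under $\varphi$.

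\emph{The bookkeeping.} The terms this inequality produces are in $\Phi$-units ($\varphi(z_i^*)$), not $\lambda$-weighted linear units. Even granting $\sum_i\lambda_i z_i\le K\sum_i\lambda_i z_i^*$, H\"older gives $\Phi^{p/q}\le K(\Phi^*)^{p/q}$, i.e.\ ratio $K^{1/p}$. For $K=O(q)^q$ this is $O(q)^{q/p}$, not $O(q)$.

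The paper closes this with a two-slope piecewise-linear upper bound on $\varphi$ and a case split.
\begin{itemize}
\item Item 3 of its swap lemma uses the \emph{inner} relaxed triangle inequality (exponent $p$) at $\delta_0=\frac{1}{5r}$. It gives $z_i'\le(1+\frac{1}{5r})z_i+(1+\theta_0)z_i^*$ for every swap in the support.
\item If $z_i\ge Tz_i^*$ with $T=O(q)^p$, every swap keeps $z_i'\le(1+\frac1r)z_i$. The increase slope is then $M_i\le\varphi'\bigl((1+\frac1r)z_i\bigr)\le e\,r\,z_i^{r-1}$, and the decrease slope is $m_i=z_i^{r-1}$. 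Rerunning the Gupta--Tangwongsan argument with coefficients in $[m_i,M_i]$ and $\delta=m_i/(4M_i)$ gives $\Delta_i\le\frac{m_i}{2k}(\alpha^p z_i^*-z_i)$ with $\alpha=10pM_i/m_i\le 10eq$. This is where $O(q)$, rather than $O(p)$, comes from.
\item If $z_i\le Tz_i^*$, the cruder item 2 gives $\Delta_i\le\frac{1}{4k}(\beta^q\varphi(z_i^*)-\varphi(z_i))$ with $\beta=O(q)$.
\end{itemize}
Summing yields an average change of at most $\frac{1}{4k}(\beta^q\Phi^*-\Phi)$. This also justifies a \emph{constant}-factor threshold $\Phi/(8k)$; your threshold $1-\frac{1}{\Theta(q)k}$ would only give $O(kq^2\log n)$ iterations.

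Your own route can also be repaired.
\begin{itemize}
\item Write the reassignment bound as $z_i^s\le z_i+P_s-N_s$ with $P_s,N_s\ge 0$, $\sum_s P_s\le 2\delta z_i+(1+2\theta)z_i^*$, and $\sum_s N_s\ge z_i-z_i^*$.
\item Let $g(x)=\varphi(z_i+x)-\varphi(z_i)$, and bound $\varphi(z_i^s)-\varphi(z_i)\le g(P_s)-z_i^{r-1}N_s$.
\item Since $g$ is convex with $g(0)=0$, it is superadditive, so $\sum_s g(P_s)\le g(\sum_s P_s)$.
\item Only then apply the outer inequality, with $\gamma,\delta=\Theta(1/r)$.
\end{itemize}
This gives $\sum_s(\Phi_s-\Phi)\le -c\,\Phi+O(q)^q\Phi^*$ for an absolute constant $c>0$ directly, with no H\"older step.
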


\subsubsection{\texorpdfstring{$(p,q)$}{(p,q)}-Socially Fair Facility Location}
Finally, we study the $(p,q)$-Socially Fair Facility Location problem, an important and practically motivated variant of $(p,q)$-Clustering.
Building on the result of~\cite{abbasi2021fair} for $\ell_1$, we present a $(4 + \varepsilon)$-approximation algorithm for $(p,q)$-Socially Fair Facility Location with homogeneous costs.
We then consider a more general setting with heterogeneous costs, where different groups incur different facility opening costs. In this case, we obtain an $O(q / \log q)$-approximation; in particular, for $p = 1, q = \infty$, this yields an $O(\log \ell / \log \log \ell)$-approximation\footnote{We recall that the $(1, \log \ell)$ objective approximates the $(1, \infty)$ objective upto a constant factor}, which almost matches the approximation hardness of $\Omega(\log^{1-\varepsilon} \ell)$ for this problem; 
see Theorem~\ref{thm:fl_vector_hardness} for details.

\begin{restatable}{theorem}{FLscalar}
\label{thm:fl_scalar}
For Socially Fair Facility Location with group homogeneous facility cost, there exists a polynomial-time algorithm that achieves a $4+\varepsilon$ approximation for any $\varepsilon > 0$.
\end{restatable}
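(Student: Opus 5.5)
We need to prove a $(4+\varepsilon)$-approximation for Socially Fair Facility Location with group-homogeneous facility costs. Here the objective is
\[
\sum_{f\in C} c(f) + \Big(\sum_i \big(\textstyle\sum_u w_i(u) d(u,C)^p\big)^{q/p}\Big)^{1/q},
\]
or the analogous max form. The opening cost is the same for all groups, so it is added once.

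The plan is LP relaxation plus filtering and rounding, in the style of the $\ell_1$ result of \cite{abbasi2021fair}. The first step is to write a convex program with variables $y_f\in[0,1]$ (facility opened) and $x_{uf}\ge 0$ with $\sum_f x_{uf}=1$ and $x_{uf}\le y_f$. Its objective is $\sum_f c(f)y_f$ plus the $\ell_q$ aggregate of the group costs $D_i=\sum_u w_i(u)\sum_f x_{uf}d(u,f)^p$, namely $(\sum_i D_i^{q/p})^{1/q}$. This function is convex in $x$: it is an $\ell_{q/p}$ norm of nonnegative linear forms, raised to the power $1/p$, which is a monotone norm composition. So the program is solvable to within $1+\varepsilon$ by the ellipsoid method or convex programming.

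The second step uses the fact that costs are homogeneous. We guess, up to a $(1+\varepsilon)$ factor, the optimal value $B$ of the aggregated connection cost. We then minimize the opening cost subject to the constraint $\|(D_i^{1/p})_i\|_q\le B$; equivalently, we run a Lagrangian search over the split between opening and connection cost.

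The third step is rounding. For each client $u$, define its fractional radius $R_u=\big(\sum_f x_{uf}d(u,f)^p\big)^{1/p}$. We use Markov-type filtering: restricted to the ball of radius $2^{1/p}R_u$ (or $\alpha R_u$ with a tuned $\alpha$), client $u$ retains at least half of its fractional mass. We scale $y$ by $2$ so that every ball carries mass $\ge 1$, which doubles the opening cost. We then greedily choose clients in increasing order of $R_u$, open the cheapest facility in each selected ball, and discard clients whose ball intersects a selected ball. Since the balls of selected clients are disjoint, the opening cost is at most $2\sum_f c(f)y_f$. A discarded client $u$ conflicts with a selected $v$ with $R_v\le R_u$, so its distance to an open facility is at most about $3\alpha R_u$. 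Thus every client satisfies $d(u,C)\le O(1)\cdot R_u$, and by the power-mean inequality $\sum_u w_i(u)R_u^p=D_i$. This bounds each group's $p$-norm by a constant times its fractional value, and since $\ell_q$ is monotone the aggregate bound follows.

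The main obstacle is reaching the constant $4$ instead of a larger constant like $6$ or $3\cdot 2^{1/p}$. A pointwise bound $d(u,C)\le \beta R_u$ gives connection cost at most $\beta\cdot$LP, and the filtering parameter trades $\beta$ against the opening blow-up $1/(1-1/\alpha^p)$. I would first try the approach of \cite{abbasi2021fair}. Since opening cost is shared, one can reduce to a single-group-like instance: treat the per-client connection bound as a budget for each group and use a primal–dual or LP rounding that gives a combined factor of $4$ on opening plus connection, for example by balancing $\alpha$ so that both the opening factor and the connection factor are at most $4$. Alternatively, one can use randomized rounding (clustering plus opening one facility per cluster with probability $\propto y$) to obtain the expected bound $d(u,C)^p\lesssim$ (a constant times) the fractional cost. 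Because the bound is in expectation for each group, concentration or derandomization via conditional expectations on the convex aggregate would be needed. The $\varepsilon$ absorbs the guessing and the convex-solve error.
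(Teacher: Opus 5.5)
\textbf{The framework matches the paper's, but the constant $4$ is never established.} Your approach is the paper's: a relaxation with a guessed connection budget, Markov filtering on $d(u,\cdot)^p$, greedy clustering in increasing order of fractional cost, opening the cheapest facility in each selected ball, and bounding the opening and connection terms separately.

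\textbf{The gap.} With your default radius $2^{1/p}R_u$ you get $\max\{2,\,3\cdot 2^{1/p}\}$, which is $6$ for $p=1$. You then call reaching $4$ the ``main obstacle'' and only speculate about primal--dual or randomized rounding. No new idea is needed, because your own tradeoff closes it. Take $\alpha=(4/3)^{1/p}$, i.e.\ the ball $B_u=\{f: d(u,f)^p\le \tfrac43 R_u^p\}$.
\begin{itemize}
\item By Markov, $B_u$ carries $x$-mass, hence $y$-mass, at least $1/4$. So the cheapest facility in $B_u$ costs at most $4\sum_{f\in B_u}c(f)y_f$.
\item Selected balls are disjoint, so the total opening cost is at most $4\sum_f c(f)y_f$.
\item Every client satisfies $d(u,C)\le 3\alpha R_u$, so $d(u,C)^p\le 3^p\cdot\tfrac43 R_u^p$.
\item Hence each group's $p$-norm is at most $3(4/3)^{1/p}D_i^{1/p}\le 4D_i^{1/p}$, using $p\ge 1$.
\item By monotonicity of the $\ell_q$ norm, the aggregate connection cost is at most $4B$.
\end{itemize}
The objective is a sum of these two terms, so the total is at most $4$ times the relaxation value. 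The $(1+\varepsilon)$ loss from guessing $B$ and solving approximately then gives $4+\varepsilon$. This is exactly the paper's choice $\alpha=4/3$ in its parametrization $d^p\le\alpha A_u$. The randomized alternative you mention is unnecessary, and it would create the concentration issue you flag.

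\textbf{The step-1 convexity claim is false.} You assert that $\sum_f c(f)y_f+\bigl(\sum_i D_i^{q/p}\bigr)^{1/q}$ is convex, but for $p>1$ this fails. The second term equals $\|D\|_{q/p}^{1/p}$, a convex function composed with the concave map $t\mapsto t^{1/p}$. For $\ell=1$ it is $D(x)^{1/p}$, which is concave in $x$.

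What actually makes the problem tractable is your second step. For a fixed guess $B$, the constraint $\sum_i D_i^{q/p}\le B^q$ is convex, since $q/p\ge 1$ and each $D_i$ is linear and nonnegative. You then minimize the linear opening cost subject to it. This is precisely how the paper proceeds: it fixes $H$, notes that the program is convex for each fixed $H$, and tries $H$ over a $(1+\varepsilon)$-geometric grid. You should drop the step-1 claim and present the guessed-budget program as the relaxation.
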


\begin{restatable}{theorem}{FLvector}
\label{thm:fl_vector}
For Socially Fair Facility Location with group heterogeneous facility cost, there exists a polynomial-time algorithm that achieves an $O(q/\log q)$ approximation.
\end{restatable}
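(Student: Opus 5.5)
The plan is to solve a convex relaxation and then round it by independent rounding across disjoint "clusters" of facilities. The $O(q/\log q)$ factor should come from a moment inequality for sums of independent nonnegative random variables (Latała / Rosenthal type): for independent $X_1,\dots,X_m\ge 0$,
\[
\Bigl(\E{(\textstyle\sum_j X_j)^q}\Bigr)^{1/q} = O\!\left(\tfrac{q}{\log q}\right)\max\Bigl(\textstyle\sum_j \E{X_j},\ \bigl(\sum_j \E{X_j^q}\bigr)^{1/q}\Bigr).
\]
I take group $i$'s cost to be its connection cost plus the opening costs $c_i(f)$ it pays, aggregated by the $\ell_q$ norm as in~\eqref{eq:pq-objective}.

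\textbf{Relaxation and preprocessing.} First guess (up to a factor $1+\varepsilon$) the optimal value $\OPT$. Discard every facility $f$ whose cost vector $(c_i(f))_i$ has $\ell_q$ norm larger than $\OPT$; the optimal solution never opens such a facility. Write the standard facility-location LP with variables $x_f$ and $y_{uf}\le x_f$, $\sum_f y_{uf}=1$. For each group $i$, let $L_i$ be the $p$-norm of the connection cost, $(\sum_u w_i(u)\sum_f y_{uf}d(u,f)^p)^{1/p}$, and let $O_i=\sum_f x_f c_i(f)$ be the opening cost. Minimize the $\ell_q$ norm of the vector $(L_i+O_i)_i$. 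This is a convex program, and its value is at most $\OPT$.

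\textbf{Rounding.} I would use Shmoys--Tardos--Aardal filtering adapted to $p$-th powers. For each client $u$, let $R_u$ be the radius for which $\sum_f y_{uf}d(u,f)^p\ge \tfrac12 R_u^p$ restricted to facilities farther than $R_u$ (Markov), so that the ball $B(u,R_u)$ carries at least half of $u$'s fractional mass. Greedily choose cluster centers in increasing order of $R_u$ so that their balls are disjoint, and assign every other client to a nearby center with smaller radius. Then:
\begin{enumerate}
\item in each center's ball, open exactly one facility, chosen with probability proportional to $x_f$ (scaled by at most $2$);
\item independently open every facility outside all balls with probability $\min(1,2x_f)$.
\end{enumerate}
Each client is deterministically connected within $O(R_u)$. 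Summing $R_u^p\le 2\sum_f y_{uf}d(u,f)^p$ gives connection cost $O(1)\cdot L_i$ for every group, with no randomness involved.

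\textbf{Opening costs (main obstacle).} For group $i$, the opening cost is a sum of independent nonnegative variables, one per ball and one per outside facility, with total expectation at most $2O_i$. Applying the moment inequality coordinatewise and summing the $q$-th powers over $i$ gives
\[
\E{\textstyle\sum_i(\text{open}_i)^q}\le O(q/\log q)^q\Bigl(\textstyle\sum_i O_i^q+\sum_i\sum_f x_f c_i(f)^q\Bigr).
\]
The hard step is controlling the second term, $\sum_f x_f\|c(f)\|_q^q$; the preprocessing only gives $\|c(f)\|_q\le \OPT$. I would first try to bound it via $\sum_f x_f\|c(f)\|_q^q\le \max_f\|c(f)\|_q^{q-1}\sum_f x_f\|c(f)\|_q$. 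If that is too weak, I would instead add a guessing and rescaling step: guess, up to constant factors, the few expensive facilities in the optimum (the number with $\|c(f)\|_q\ge \OPT/\text{poly}$ is bounded once things are normalized), open them deterministically, and apply the moment bound only to the remaining cheap facilities, where the $\ell_q$-mass term is dominated by $\OPT^q$.

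\textbf{Finishing.} Combining the two parts through the triangle inequality for the $\ell_q$ norm gives an expected bound of $O(q/\log q)\cdot\OPT$ on the $q$-th-moment objective. Markov's inequality plus repetition (or the method of conditional expectations) then yields a deterministic polynomial-time algorithm.
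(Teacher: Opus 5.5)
The step you flag as the main obstacle is a genuine gap. It breaks the algorithm itself, not just the analysis: your relaxation does not control $\sum_f x_f\|c(f)\|_q^q$, and neither proposed fix helps. Your clustering step, the one-facility-per-ball rounding and the use of Rosenthal's inequality do match the paper.

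Here is an instance showing the failure. Take $N$ pairwise far-apart locations. Each has one client (weight $1$ in every group), co-located facilities $f_{k,1},\dots,f_{k,\ell}$ with $c(f_{k,j})=Me_j$, and one more facility $g_k$ with $c(g_k)=\tfrac{2M}{\ell}\mathbf{1}$.

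\begin{itemize}
\item Each location needs unit opening mass, and $g$-mass costs twice as much in $\sum_i O_i$. The power-mean bound $\|O\|_q\ge\ell^{1/q-1}\sum_i O_i$ then shows that every optimum of your program uses only the $f_{k,j}$ (e.g.\ $x_{f_{k,j}}=1/\ell$, value $NM\ell^{1/q-1}$).
\item Your rounding opens one $f_{k,j_k}$ per location. The group counts are nonnegative integers summing to $N$, so the cost is at least $MN^{1/q}$. Opening all $g_k$ instead costs $2NM\ell^{1/q-1}$.
\item With $N=T\ell^{1-1/q}$ this gives $\OPT\in[TM,2TM]$, so every facility has $\|c(f)\|_q\le M\le\OPT/T$. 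Your preprocessing therefore discards nothing, and with threshold $\OPT/T$ there are no expensive facilities to guess.
\item Yet the approximation ratio is at least $\tfrac12T^{1/q-1}\ell^{(q-1)/q^2}$, e.g.\ $\Omega(\ell^{1/4})$ for $q=2$.
\end{itemize}

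Bound (a) fails for the same reason. Since $\sum_f x_f\|c(f)\|_q\ge\|\sum_f x_fc(f)\|_q$, the triangle inequality runs the wrong way; here $\sum_f x_f\|c(f)\|_q=NM\gg\OPT$. Fix (b) also fails with a polynomial threshold: the optimum can then contain $\mathrm{poly}^q$ facilities above it, so guessing them is not polynomial time.

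The missing idea is a valid inequality. For an integral $C$ and $q\ge1$, $\sum_{c\in C}f_i(c)^q\le\bigl(\sum_{c\in C}f_i(c)\bigr)^q$. The paper therefore adds $z_i\ge\sum_c y_cf_i(c)^q$ to its relaxation, next to $z_i\ge(\sum_c y_cf_i(c))^q$, with objective $H$ subject to $\sum_iz_i\le H^q$. The Rosenthal moment term is then immediately at most $2\sum_iz_i^*\le2(H^*)^q$. Since you guess $\OPT$ anyway, adding the single linear constraint $\sum_fx_f\|c(f)\|_q^q\le\OPT^q$ would work equally well; in the example it pushes the mass onto the $g_k$. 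After that your argument coincides with the paper's, and opening facilities outside the balls is unnecessary.

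Separately, your program is not convex for $p>1$. $L_i$ is the $1/p$-th power of a linear function, hence concave, so already for $\ell=1$ you would be minimizing a concave function plus a linear one. The paper decouples the two costs. It imposes $z_i\ge(\sum_uw_i(u)\sum_cd(u,c)^px_{u,c})^{q/p}$, which is convex because $q\ge p$, together with the opening-cost constraints. This costs only a factor $2$, from $a+b\le2\max(a,b)$.
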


\section*{Organization}
\Cref{sec:overview} gives an overview of the proofs of our results. In Section~\ref{sec:fpt}, we present our fixed-parameter tractable (FPT) algorithm for $\ell_p$-Clustering. In Section~\ref{sec:local-search}, we analyze a simple single-swap local search algorithm for $(p,q)$-Clustering. In Section~\ref{sec:facility}, we describe our algorithm for $(p,q)$-Socially Fair Facility Location. 
In Appendix~\ref{sec:extra-centers}, 
we give a $(3+\varepsilon)$-bicriteria local-search algorithm that opens $k + f(\ell)$ centers for $\ell_1$-Clustering.
\section{Preliminaries}
\label{sec:prelim}
This paper focuses on the $\ell_p$ and $(p,q)$-Fair Clustering Problems, referred to below simply as $\ell_p$- and $(p,q)$-Clustering. In these problems, we are given a metric space $(X,d)$, where the points represent data points or clients, and a set of potential center or facility locations $F \subseteq X$. Let $n$ be the number of data points or clients. For each point $u \in X$, we are also given a vector of weights $w(u) \in \mathbb{R}_{\geq 0}^\ell$, where the $i$-th coordinate $w_i(u)$ denotes the weight for group $i$. We denote the number of groups by $\ell$. The goal is to select a subset of $k$ centers $C \subseteq F$ that minimizes objective~\eqref{eq:ell-p-objective} for $\ell_p$-Clustering and~\eqref{eq:pq-objective} for $(p,q)$-Clustering.

To analyze these objectives, we define the $\ell_p$-cost of the clustering (raised to the power $p$) with respect to a weight function $w_i: X \to \mathbb{R}_{\geq 0}$ and centers $C$ as 
    $$
    \cost(w_i,C) = \sum_{u \in X} w_i(u) d(u,C)^p,
    $$
where $d(u,C) = \min_{c \in C} d(u,c)$. 
We then define the cost vector induced by weights $w$ and centers $C$ as the $\ell$-dimensional vector whose $i$-th coordinate is $\cost(w_i,C)^{1/p}$.
Then the $\ell_p$-objective in~\eqref{eq:ell-p-objective} is the $\ell_\infty$-norm of the cost vector, while the $(p,q)$-objective in~\eqref{eq:pq-objective} is its $\ell_q$-norm.

We will need the following claim.
\begin{claim}[see e.g.~Lemma A.1 in  \cite{MMR}]\label{claim:MMR}
    For every non-negative real numbers $a$ and $b$, and for every $p \geq 1$ and $\varepsilon > 0$, we have
    \[(a+b)^p \leq (1+\varepsilon)^{p-1} a^p + (1+1/\varepsilon)^{p-1} b^p,\]
\end{claim}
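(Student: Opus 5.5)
The plan is to reduce the claim to a one-variable inequality and use convexity of $x \mapsto x^p$. The cases $p=1$ (equality, since both coefficients are $1$) and $b=0$ or $a=0$ (immediate, since each coefficient is at least $1$) are trivial, so assume $p>1$ and $a,b>0$.

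Write $a+b$ as a convex combination of two rescaled quantities:
\[
a+b = \frac{1}{1+\varepsilon}\bigl((1+\varepsilon)a\bigr) + \frac{\varepsilon}{1+\varepsilon}\Bigl(\bigl(1+\tfrac{1}{\varepsilon}\bigr)b\Bigr).
\]
This identity holds because the first term is $a$ and the second is $\frac{\varepsilon}{1+\varepsilon}\cdot\frac{1+\varepsilon}{\varepsilon}\,b=b$. The weights $\lambda=\frac1{1+\varepsilon}$ and $1-\lambda=\frac{\varepsilon}{1+\varepsilon}$ lie in $(0,1)$ and sum to $1$.

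Since $x\mapsto x^p$ is convex on $[0,\infty)$ for $p\ge 1$, Jensen's inequality gives
\[
(a+b)^p \le \frac{1}{1+\varepsilon}(1+\varepsilon)^p a^p + \frac{\varepsilon}{1+\varepsilon}\bigl(1+\tfrac1\varepsilon\bigr)^p b^p .
\]
It remains to simplify the coefficients:
\begin{itemize}
\item $\frac{1}{1+\varepsilon}(1+\varepsilon)^p=(1+\varepsilon)^{p-1}$;
\item $\frac{\varepsilon}{1+\varepsilon}=\bigl(1+\frac1\varepsilon\bigr)^{-1}$, so the second coefficient is $(1+1/\varepsilon)^{p-1}$.
\end{itemize}
This yields exactly the stated inequality.

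No step is a real obstacle. The only point requiring care is choosing the convex weights so that each scaled term reproduces $a$ and $b$ and the coefficients collapse to the stated powers. An alternative route is Hölder's inequality, $a+b = (1+\varepsilon)^{-1/q}\bigl((1+\varepsilon)^{1/q}a\bigr)+\dots$ with conjugate exponents, but the Jensen argument is shorter and I would present that one.
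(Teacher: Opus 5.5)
Your Jensen argument is correct and complete: the paper gives no proof of this claim and simply cites Lemma A.1 of \cite{MMR}, which is proved by exactly this convexity step with weights $\frac{1}{1+\varepsilon}$ and $\frac{\varepsilon}{1+\varepsilon}$. Your preliminary case split ($p=1$, $a=0$ or $b=0$) is harmless but unnecessary, since convexity of $x\mapsto x^p$ on $[0,\infty)$ already covers those cases.
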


\section{Proof Overview}\label{sec:overview}
In this section, we offer a technical overview of the proofs of our main results. We focus primarily on the fixed-parameter tractable (FPT) approximation algorithm for $\ell_p$-Clustering parameterized by the number of groups $\ell$ (Section~\ref{sec:fpt}), and then briefly describe the ideas underlying the remaining results.

\subsection{FPT approximation for \texorpdfstring{$\ell_p$}{lp}-Clustering (Section~\ref{sec:fpt})}
Our goal is to design a constant-factor approximation algorithm for $\ell_p$-Clustering that runs in time $f(\ell)\cdot n^{O(1)}$, parameterized only by the number of groups $\ell$. As a starting point, we run the algorithm of \cite{ghadiri2022constant} (see Theorem~\ref{thm:ghadiri-et-al}) to get a $(5+2\sqrt{6})$-approximate solution with at most $k+\ell$ centers in polynomial time. 
In the special case of $\ell_1$-Clustering, we can alternatively use the local-search-based bicriteria approximation algorithm 
from Section~\ref{sec:extra-centers} (see Theorem~\ref{thm:localsearchconstant}), 
which yields a better approximation. 
Let $C$ denote the resulting set of $k+t$ centers, where $t=\ell$, and let $c$ be the corresponding approximation factor.

Given such a bicriteria solution, the main challenge is to close $t$ centers and reassign the clients assigned to them to other centers, while controlling the cost increase for all groups. We proceed in two steps. First, we show that \emph{existentially} it is possible to close $t$ centers $C_1$ and reassign every client assigned to a center $c$ in $C_1$ to the center closest to $c$ in $C\setminus C_1$ so that the cost of the resulting solution is within a constant factor of the cost of $C$ (see Lemma~\ref{lemma:redirect} for the exact guarantee). The key idea is that the new connection cost for a client $u$ can be bounded in terms of both the cost of connecting $u$ to its originally assigned center in $C_1$ and the cost of connecting $u$ to the center serving $u$ in the optimal solution $C^*$. This observation is important because it allows us to reassign the clients assigned to the $t$ closed centers to at most $t$ centers (that are not closed), and this property is crucial for our algorithm.

However, we cannot use the procedure from Lemma~\ref{lemma:redirect} directly, as it requires knowledge of $C^*$. Thus, to obtain an FPT algorithm, we need to identify the set $C_1$ without access to the optimal solution. This is achieved using the technique of color coding introduced by~\cite{alon1995color}.
Let $C_2 \subseteq C \setminus C_1$ be the set formed by selecting, for each $c \in C_1$, a closest center to $c$ in $C\setminus C_1$.
Note that $|C_2| \leq |C_1| = t$ (each center $c_1\in C_1$ contributes at most one selected center; and different centers in $C_1$ may potentially have the same closest center in $C\setminus C_1$). We color all centers in $C$ randomly using two colors, red and blue. Each center is colored red or blue independently with probability $1/2$. Note that with probability at least $2^{-2t}$, all centers in $C_1$ are colored red and all centers in $C_2$ are colored blue. By running the algorithm $2^{\Omega(t)}$ times, we may assume that we obtain a coloring with this property. We denote the sets of red and blue centers by $R$ and $B$, respectively.

Now, we introduce the notion of a $Z$-profile of a center $c$. Loosely speaking, the $Z$-profile is a vector in $\mathbb{R}^\ell$ that captures, for each of the $\ell$ groups, the reassignment cost incurred by closing center $c$ and reassigning all clients assigned to $c$ to the center closest to $c$ in $Z$.
Crucially, the $C_2$-profiles of all centers in $C_1$ fully determine the reassignment cost: the reassignment cost vector is simply the sum of the $C_2$-profiles of all centers in $C_1$. Importantly, if we were to find another set of centers in $R$ with the same profiles, the reassignment cost would be the same. However, there are two challenges in finding such a set. First, as defined, the number of possible profiles is unbounded. Second, since we do not know the set $C_2$, we cannot compute the $C_2$-profiles of centers in $R$.

To resolve the first issue, we discretize the set of all possible profiles so that their number is bounded by a function of $\ell$, $t$, $p$, and $1/\varepsilon$. We then ``guess'' the number of centers in $C_1$ with each possible profile. Next, we observe that the $C_2$-profiles and $B$-profiles of centers in $C_1$ are equal, and thus we can use $B$ as a proxy for $C_2$. We select centers in $R$ according to our guess of the profile counts and obtain a set $D_1$ of $t$ centers to close. Given the discussion above, it is not hard to see that by closing the centers in $D_1$ and reassigning clients to their closest centers in $B$, we obtain a constant-factor approximation.

\subsection{Local search for \texorpdfstring{$(p,q)$}{(p,q)}-Clustering (Section~\ref{sec:local-search})}
Building on the local search algorithms from~\cite{Arya} and~\cite{GT} for $k$-medians, $k$-means, and $(k,p)$-clustering, we design and analyze a local search algorithm for the more general $(p,q)$-Clustering problem.
The general approach is similar to that of~\cite{Arya} and~\cite{GT}, but the analysis is more complex, as it requires fine-grained control over the distribution of swaps (see below).

The algorithm starts with an arbitrary set $C$ of $k$ centers and repeatedly finds and performs a swap $c \to c'$ that replaces one of the centers $c \in C$ with a center $c'$ outside of $C$. The swaps are chosen to minimize the $(p,q)$-Clustering objective, and the algorithm terminates when no improving swap exists.
To prove that the algorithm achieves an $O(q)$-approximation, we show that if the cost of the current solution exceeds the optimal cost by $\Omega(q)$ times, then there must exist an improving swap.
To this end, we present a distribution over special swaps that replace centers in $C$ with centers in an unknown optimal solution $C^*$ (The algorithm may choose a swap from the distribution or any other swap improving the objective; we use the distribution only for the analysis). 
While this approach resembles that of~\cite{Arya} and~\cite{GT}, we require significantly more control over the distribution of swaps to handle the more complex objective. In particular, we use careful linear approximations for the change in the non-linear objective that allow us to show that there is indeed such a distribution over improving swaps.

To obtain these linear approximations, roughly speaking, we can obtain upper and lower bounds on the derivative of the cost function in the desired range. One can then use the mean-value theorem to find a linear approximation of the change in the objective using these bounds. The actual proof is slightly different, and uses convexity arguments along with the mean value theorem. Carefully combining this with the standard linear-objective analysis for $k$-Median, we show that the algorithm is an $O(q)$ approximation.

\subsection{Socially Fair Facility Location (Section~\ref{sec:facility})}
Finally, we study $(p,q)$-Socially Fair Facility Location. We consider two variants of the problem: facility location with homogeneous opening costs and facility location with heterogeneous opening costs. For the homogeneous case, we use a rounding algorithm inspired by the standard facility location algorithm of~\cite{shmoys1997approximation}. For the heterogeneous case, we replace the greedy step that selects and opens the cheapest facility within a certain ball $B_u$ with a randomized rounding step that opens a facility in $B_u$ with probability proportional to the target probability given by the convex relaxation. We then use Rosenthal's inequality to bound the expected $\ell_q$-norm of the solution. We also show a hardness result that shows that our algorithm is near-optimal for $\ell_1$-Socially Fair Facility Location (the case when $p = 1$, $q = \infty$).

\section{FPT algorithm for \texorpdfstring{$\ell_p$}{lp}-socially fair clustering}
\label{sec:fpt}
As discussed in the introduction, we provide an FPT algorithm for $\ell_p$-Clustering, parameterized by the number of groups $\ell$. 
The algorithm first finds an approximate solution with $k+t$ centers using Theorem~\ref{thm:ghadiri-et-al}.
This solution provides only a \textit{bicriteria} approximation. To obtain a \textit{true} approximation, the algorithm closes $t$ centers -- using Theorem~\ref{thm:FPT}.

We now proceed with the proof of~\Cref{thm:FPT}. Given the instance of $\ell_p$-Clustering on point set $X$ and metric $d$, a set of centers $C$ and an assignment $\alpha: X \rightarrow C$, we define the cost with respect to the weight function $w_i$ and assignment $\alpha$ as follows:

\begin{align*}
     \cost_i(\alpha) = \cost(w_i,\alpha) &:= \sum_{u \in X} w_i(u) d(u,\alpha(u))^p.
\end{align*}

Note that the $\ell_p$-objective function for the assignment $\alpha$ is then $\max_{i \in [\ell]} \cost_i(\alpha)^{1/p}$. Throughout this section, we fix an optimal set of centers $C^*$ and a corresponding optimal assignment $\beta^*$. Let us define $\OPT$ to be the optimal assignment $\ell_p$-objective, so that $\OPT = \max_i \cost_i(\beta^*)^{1/p}$.

First, we show that given any set $C$ of $k + t$ centers which is  a $c$-approximate solution, indeed there exists a subset of $t$ centers that can be closed, such that the resulting subset of $k$-centers is a $(3c + 2)$-approximate solution. However, besides this, we show a stronger property that our algorithm requires: every client assigned to some center of $C$ (in an optimal assignment of clients to $C$) that is closed in this process can now be re-assigned to one among at most $t$ many open centers in $C$. The next key lemma proves this result. Before we state the lemma, the following definition will be useful.

\begin{definition}
Given a set of open centers $C$, an assignment $\alpha: X \rightarrow C$ of points to centers in $C$, and two disjoint subsets $C_1, C_2 \subseteq C$, we define a new assignment $\alpha'$ obtained from $\alpha$ by reassigning clients as follows. For each client $u \in X$, if $\alpha(u) \in C_1$, we find the closest center $c_2$ in $C_2$ to $c_1 \equiv \alpha(u)$ and let $\alpha'(u) = c_2$. If $\alpha(u) \notin C_1$, we set $\alpha'(u) = \alpha(u)$ (do not change the assignment).

We then define $\cost(\alpha, C_1
, C_2)$ to be the vector $(v_1, v_2 \ldots v_{\ell})$ with $v_i = \cost_i(\alpha') - \cost_i(\alpha)$ for each $i \in [\ell]$.
\end{definition}

Simply put, $\cost_i(\alpha, C_1, C_2)$ measures the increase in cost for group (coordinate) $i$ by re-assigning each point assigned to a center in $c_1 \in C_1$ to the center in $C_2$ closest to $c_1$.
We use the shorthand $\cost(\beta, c_1, C_2)$ to denote $\cost(\beta, \{c_1\}, C_2)$. If $\beta$ is an optimal assignment, then $\cost(\beta, c_1, C_2) \geq 0$, since any reassignment can only increase the cost. We also have,
\begin{equation}\label{eq:additivity}
\cost(\beta, C_1, C_2) = \sum_{c_1\in C_1}  \cost(\beta, c_1, C_2)
\end{equation}

We formalize this nearest-neighbor redirection operation in the following lemma, showing that one can close a set $C_1$ of $t$ centers and reassign all affected clients to a set $C_2$ of at most $t$ receiving centers while keeping the reassignment cost controlled. 
The redirection operator in the definition captures exactly what happens when we close centers $C_1\subseteq C$: for each closed center $c_1\in C_1$, we send all of its clients to the closest open center $c_2\in C\setminus C_1$. Let $C_2$ be the set of such receiving centers. 

\begin{lemma}\label{lemma:redirect}
Given a $c$-approximate solution $C$ with $k + t$ centers and a corresponding optimal assignment $\beta$ of clients to centers in $C$, there exist disjoint sets $C_1,C_2 \subseteq C$ with $|C_2| \leq |C_1| = t$ such that 
\begin{itemize}
    \item $\cost(\beta, C_1, C_2) \leq ((3c + 2)^p - c^p)\mathrm{OPT}^p$
    \item $C_2 = \{c_2: c_2 $ is the closest point from $C\setminus C_1$ to $c_1$ for some $c_1 \in C_1\}$.
\end{itemize}
\end{lemma}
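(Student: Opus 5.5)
The plan is to build $C_1$ by the standard ``kept centers'' argument, and then bound each reassigned client's distance by the triangle inequality through the optimal center nearest to its old center.

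\textbf{Choosing $C_1$.} For each optimal center $c^* \in C^*$, mark as \emph{kept} one center of $C$ closest to $c^*$. At most $k$ centers are kept, so at least $t$ centers of $C$ are not kept; let $C_1$ be any $t$ of them. Define $C_2$ exactly as in the statement: for each $c_1 \in C_1$, take the closest point of $C\setminus C_1$ to $c_1$. Then $C_2$ is disjoint from $C_1$ and $|C_2|\le |C_1|=t$ automatically. It remains to bound the cost increase.

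\textbf{Distance bound.} Fix $c_1 \in C_1$, and let $\sigma(c_1)\in C^*$ be the optimal center closest to $c_1$. Let $c'$ be the kept center for $\sigma(c_1)$. Since $c'$ is kept, $c' \notin C_1$. Because $c'$ is at least as close to $\sigma(c_1)$ as $c_1$ is, the receiving center $c_2$ of $c_1$ satisfies
\[ d(c_1,c_2) \le d(c_1,c') \le d(c_1,\sigma(c_1)) + d(\sigma(c_1),c') \le 2\,d(c_1,\sigma(c_1)). \]
Now take a client $u$ with $\beta(u)=c_1$. Since $\sigma(c_1)$ is the optimal center nearest to $c_1$, we have $d(c_1,\sigma(c_1)) \le d(c_1,\beta^*(u)) \le d(u,c_1)+d(u,\beta^*(u))$. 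Writing $a_u = d(u,\beta(u))$ and $b_u = d(u,\beta^*(u))$, this gives
\[ d(u,c_2)\le a_u + 2\,d(c_1,\sigma(c_1)) \le 3a_u + 2b_u . \]

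\textbf{Aggregating per group.} Fix a group $i$ and let $S$ be the set of clients assigned by $\beta$ to $C_1$. The increase in group $i$ is at most
\[ \sum_{u\in S} w_i(u)\bigl[(3a_u+2b_u)^p - a_u^p\bigr]. \]
Let $A=(\sum_{u\in S} w_i(u)a_u^p)^{1/p}$ and $B=(\sum_{u\in S} w_i(u)b_u^p)^{1/p}$. By Minkowski's inequality for the weighted $\ell_p$ norm, $\sum_{u\in S} w_i(u)(3a_u+2b_u)^p \le (3A+2B)^p$, so the increase is at most $g(A,B):=(3A+2B)^p - A^p$. We have $A\le \cost_i(\beta)^{1/p}\le c\,\OPT$ and $B\le \cost_i(\beta^*)^{1/p}\le \OPT$. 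The function $g$ is nondecreasing in $B$ trivially, and in $A$ because $\partial_A g = 3p(3A+2B)^{p-1} - pA^{p-1}\ge 0$. Hence $g(A,B)\le ((3c+2)^p - c^p)\OPT^p$, which is the first bullet of the statement.

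\textbf{Main obstacle.} The delicate step is this last one. Bounding the whole new cost by $((3c+2)\OPT)^p$ does not by itself give the stated bound on the \emph{difference}, because $\cost_i(\beta)$ may be much smaller than $c^p\OPT^p$. Restricting to the affected clients $S$ and using the monotonicity of $g$ in $A$ resolves this. Ties in ``closest'' can be broken arbitrarily, provided the tie-breaking is consistent with the definition of the reassignment operator.
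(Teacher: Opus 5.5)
Your proof is correct, and its skeleton matches the paper's. You use the same choice of $C_1$ (centers of $C$ not nearest to any optimal center) and the same pointwise bound $d(u,c_2)\le 3a_u+2b_u$. You route through $\sigma(c_1)$, which gives the client-independent bound $d(c_1,c_2)\le 2\,d(c_1,C^*)$; the paper routes through $\gamma(\beta^*(u))$ for each client, with the same outcome.

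The genuine difference is the aggregation step. The paper applies Claim~\ref{claim:MMR} pointwise with the fixed parameter $\delta=2/(3c)$, giving $d(u,c_2)^p\le 3^p(1+\delta)^{p-1}a_u^p+2^p(1+1/\delta)^{p-1}b_u^p$. It sums this over all clients and replaces $\sum_u w_i(u)a_u^p$ by $c^p\OPT^p$. That replacement is legitimate because the net coefficient $3^p(1+\delta)^{p-1}-1$ is nonnegative. This is the linear counterpart of your monotonicity of $g$ in $A$, and it handles the same ``difference'' obstacle you flagged.

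Your Minkowski step is in effect Claim~\ref{claim:MMR} with $\delta$ optimized per instance. In the summed inequality $\sum_u w_i(u)(3a_u+2b_u)^p\le(1+\delta)^{p-1}(3A)^p+(1+1/\delta)^{p-1}(2B)^p$, the choice $\delta=2B/(3A)$ gives exactly $(3A+2B)^p$. So the paper's $\delta=2/(3c)$ is simply the choice that is optimal at the extreme point $A=c\,\OPT$, $B=\OPT$.

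Your route gives a sharper intermediate bound, $(3A+2B)^p-A^p$ over the affected clients only, and needs no parameter tuning; the cost is a separate (easy) monotonicity check. The paper's route keeps the per-client bound linear in $a_u^p$ and $b_u^p$. That is the same form it reuses in the local search analysis, where per-client bounds are averaged over a distribution of swaps.
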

\begin{proof}

Let $C^* = \{c_1^*, c_2^*, \ldots, c_k^*\}$ be an optimal set of centers, and let $\beta^*$ be an optimal assignment of clients to the centers in $C^*$. For each $c_i^*$, mark $\gamma(c_i^*) \in C$, the closest center to $c_i^*$ in $C$. 
Then there are at most $k$ marked centers in $C$ and at least $t$ unmarked ones. Choose any subset of $t$ unmarked centers in $C$ to be the set $C_1$.
For each center of $C_1$, we take the closest center in $C \setminus C_1$ to obtain the set $C_2$. Note that this means $|C_2| \leq |C_1|$.

We now show that this choice of $C_1, C_2$ satisfies the statement of the lemma. Let $c_1 \in C_1$ and let $c_2$ be the closest center to $c_1$ in $C \setminus C_1$ (which must be in $C_2$). Then, for every point $u \in X$ with $\beta(u) = c_1$, we reassign $u$ to $c_2$.

We wish to bound the reassignment cost of $u$ to $c_2$. Let $c^* = \beta^*(u)$ be $u$'s assigned center in $C^*$.
By the triangle inequality, 
\begin{align*}
d(u,c_2) 
  &\le d(u,c_1) + d(c_1,c_2) \\
  &\le d(u,c_1) + d\bigl(c_1,{\gamma}(c^*)\bigr)\, \text{($c_2$ is the closest center to $c_1$ in $C \setminus C_1$)}\\
  &\le d(u,c_1) + d(c_1,c^*) + d\bigl(c^*,{\gamma}(c^*)\bigr) 
  \\
  &\le d(u,c_1) + 2d(c_1,c^*)\, \text{(by definition of $\gamma(c^*)$)} \\
  &\le d(u,c_1) + 2\bigl(d(u,c_1) + d(u,c^*)\bigr)\,\text{(again by triangle inequality)}\\ 
  &\le 3d(u,c_1) + 2d(u,c^*)
\end{align*}

It follows from Claim~\ref{claim:MMR} that for every $\delta > 0$, we get the inequality
\[
d(u,c_2)^p \leq 3^p d(u,c_1)^p (1 + \delta)^{p-1} + 2^p d(u,c^*)^p \left(1 + \frac{1}{\delta}\right)^{p-1}
\]

We can now use this to bound the change in cost. By definition, it follows that the contribution of client $u$ to $\cost_i(\beta,C_1,C_2)$ is  $w_i(u)\bigl(d(u,c_2)^p - d(u,c_1)^p\bigr)$. 
Now repeating the analysis for each point $u$, and recalling that $c_1 = \beta(u), c^* = \beta^*(u)$, we obtain
\begin{align*}
\cost_i(\beta, C_1, C_2)
\le& \sum_{u \in X}w_i(u)\Bigl(3^p (1 + \delta)^{p-1}\,d\bigl(u,\beta(u)\bigr)^p
   \\
   &+ 2^p\Bigl(1 + \frac{1}{\delta}\Bigr)^{p-1} d\bigl(u,\beta^*(u)\bigr)^p- d(u,\beta(u))^p\Bigr)\\
\le& \Bigl((3c)^p(1 + \delta)^{p-1} + 2^{p}\Bigl(1 + \frac{1}{\delta}\Bigr)^{p-1} - c^p\Bigr)\,\OPT^p
   \quad\text{for each }i \in [\ell],
\end{align*}
where we use the fact that $\sum_{u \in X} w_i(u)d\bigl(u,\beta^*(u)\bigr)^p \leq \OPT^p$ and $\sum_{u \in X} w_i(u) d\bigl(u,\beta(u)\bigr)^p \leq c^p\OPT^p$, since $C$ is a $c$-approximate solution.

Now we set \(\delta = \frac{2}{3c}\). It then follows that
\[
\cost_i(\beta, C_1, C_2) \le ((3c + 2)^p - c^p)\OPT^p.
\]
\end{proof}

Next, we describe how to find such sets in FPT time. We will use the color coding technique of \cite{alon1995color}. We remark that while our algorithm is stated as a randomized algorithm for simplicity, it can be easily derandomized using standard techniques (see for example.~\cite[Lemma~1]{chitnis2016designing}).

\begin{lemma}\label{lemma:findsets}
Given a c-approximate solution $C$ with $k + t$ centers and a corresponding optimal assignment $\beta$ of clients to centers in $C$, one can find disjoint sets $D_1, D_2$ with $|D_2| \leq |D_1| = t$ satisfying $\|\cost(\beta, D_1, D_2)\|_\infty \leq \bigl((3c + 2 )^p - c^p + \varepsilon^p\bigr) \OPT^p$ in time $f(\ell, t, p, \frac{1}{\varepsilon})n^{O(1)}$.
\end{lemma}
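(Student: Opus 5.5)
Fix the sets $C_1, C_2$ given by Lemma~\ref{lemma:redirect}. We recover sets that behave as well as $C_1, C_2$, up to an additive $\varepsilon^p\OPT^p$ per coordinate, using color coding together with a discretization of profiles. The algorithm guesses $\OPT$ up to a factor $(1+\varepsilon')$ (or runs over the polynomially many candidate values). It then colors every center of $C$ red or blue independently with probability $1/2$. With probability at least $2^{-2t}$, every center of $C_1$ is red and every center of $C_2$ is blue; repeating $2^{O(t)}$ times (or derandomizing via splitters) we may assume this holds. Let $R$ and $B$ be the red and blue sets.

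The key observation concerns a red center $c_1 \in C_1$. Its closest center in $B$ equals its closest center in $C_2$, since $C_2 \subseteq B \subseteq C\setminus C_1$ and the closest point of $C\setminus C_1$ lies in $C_2$. Ties are broken consistently, say by a fixed order. Hence $\cost(\beta, c_1, B) = \cost(\beta, c_1, C_2)$. By additivity~\eqref{eq:additivity},
\[
\cost(\beta, C_1, B) = \cost(\beta, C_1, C_2) \le \bigl((3c+2)^p - c^p\bigr)\OPT^p
\]
coordinatewise. It therefore suffices to find a set $D_1 \subseteq R$ with $|D_1| = t$ and $\|\cost(\beta, D_1, B)\|_\infty$ bounded as required. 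We then take $D_2$ to be the set of nearest $B$-centers of the elements of $D_1$. Then $|D_2| \le t$, $D_1 \cap D_2 = \emptyset$, and $\cost(\beta,D_1,D_2)=\cost(\beta,D_1,B)$.

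For every $r \in R$ the $B$-profile $\cost(\beta, r, B) \in \mathbb{R}^\ell_{\ge 0}$ is computable, and it is nonnegative because $\beta$ is optimal for $C$. Only profiles whose coordinates are at most $M := ((3c+2)^p - c^p)\OPT^p$ matter, since every member of $C_1$ satisfies this bound. Round each coordinate up to a multiple of $\eta := \varepsilon^p \OPT^p / t$. This gives at most $(M/\eta + 1)^\ell = (t((3c+2)^p-c^p)/\varepsilon^p+1)^\ell$ profile classes. Here $c$ is a constant (at most $5+2\sqrt6$ in our application), so the bound is a function of $\ell, t, p, 1/\varepsilon$. We guess how many centers of $C_1$ lie in each class; the number of guesses is at most (number of classes)$^{t}$. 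For each guess we pick arbitrary red centers of the prescribed classes to form $D_1$.

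Rounding moves each element by less than $\eta$ per coordinate. Hence for the correct guess, each coordinate satisfies
\[
\cost_i(\beta,D_1,B) \le \cost_i(\beta,C_1,B) + t\eta \le \bigl((3c+2)^p - c^p + \varepsilon^p\bigr)\OPT^p.
\]
For an arbitrary guess, $\|\cost(\beta,D_1,B)\|_\infty$ is directly computable. So we simply output the candidate of minimum $\ell_\infty$ norm over all colorings and guesses, and no knowledge of $C_1$ is needed.

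The main obstacle is keeping the profile count independent of $n$ and $k$. Two facts handle it. Additivity means only the multiset of rounded profiles matters. Restricting to profiles bounded by $M$ and rounding at granularity $\varepsilon^p\OPT^p/t$ then bounds the number of classes. The dependence on the unknown $\OPT$ is removed by the guessing step. If guessing $\OPT$ is awkward, one can use the value of $C$ itself as a proxy, since it lies within a factor $c$ of $\OPT$, and rescale $\varepsilon$ accordingly.
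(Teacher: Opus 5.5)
Your proposal is correct and follows the paper's proof essentially step for step. It uses color coding so that $C_1 \subseteq R$ and $C_2 \subseteq B$, observes that the $C_2$- and $B$-profiles of every $c_1 \in C_1$ coincide, discretizes profiles at granularity $\varepsilon^p \OPT^p / t$, guesses the class counts, and fills each class with red centers. Your additions (consistent tie-breaking, outputting the candidate of minimum computed $\ell_\infty$ norm, and using the cost of $C$ as a proxy for $\OPT$) only make explicit details the paper leaves implicit or handles in a footnote.
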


\begin{proof}

Let $C_1, C_2 \subseteq C$ be the (unknown) subsets of $C$ guaranteed by~\Cref{lemma:redirect}. 
We use color coding. Randomly and independently color each center in $C$ red or blue, each with probability half. Call a coloring \emph{good} if every center of $C_1$ is colored red and every center of $C_2$ is colored blue. Then a random coloring is good with probability at least $2^{-2t}$. By independently sampling $O(2^{2t}\log n)$ random colorings, we can ensure that we enumerate a good coloring with high probability. Henceforth, assume that we are given a good coloring (we will try all the $O(2^{2t} \log n)$ colorings). Let $R \subseteq C$ be the red colored centers, and $B \subseteq C$ be the blue colored centers in this good coloring.

 For a set $Z \subseteq C$, we say that the $Z$-profile of $c \in C$ is $(i_1, i_2, \ldots, i_{\ell})$ if the $m^{th}$ coordinate of $\cost(\beta, c, Z)$ is in the interval $\left[\frac{(i_{m} - 1)\varepsilon^p}{t}\OPT^p, \frac{i_m\varepsilon^p}{t}\OPT^p\right]$ for each $m \in [\ell]$. Here $i_1, i_2 \ldots i_m$ are positive integers. Let $\mathcal{P}$ be the set of profiles with all $i_j \leq t(\frac{3c + 2}{\varepsilon})^p$. Then $|\mathcal{P}| \leq ({t}(\frac{3c + 2}{{\varepsilon}})^p)^{\ell}$. 
 Note that the $C_2$-profile of every $c_1\in C_1$ lies in $\cal P$. This is the case because 
 (1) $\cost(\beta, c_1, C_2) \geq 0$, since $\beta$ is an optimal assignment and 
 (2) $\cost(\beta, c_1, C_2) \leq \cost(\beta, C_1, C_2) \leq (3c +2)^p\OPT^p$ by~\eqref{eq:additivity} and~\Cref{lemma:redirect}.

For each profile $P \in \mathcal{P}$, guess the number of centers in $C_1$ with $C_2$-profile $P$. Let $n_P$ denote this number for each $P$. Then $\sum_{P \in \mathcal{P}} n_P = |C_1| = t$. The number of guesses we need to make is at most $|\mathcal{P}|^t \leq \bigl({t}(\frac{3c +2}{\varepsilon})^p\bigr)^{\ell t}$.

Consider $c_1\in C_1$ and let $c_2$ be the closest point to $c_1$ in $C\setminus C_1$. Note that $c_2\in C_2$ (per item 2 in Lemma~\ref{lemma:redirect}) and $c_2$ also is the closest point to $c_1$ in $B$, since $c_2 \in C_2 \subseteq B\subseteq C\setminus C_1$.
Therefore, the $C_2$- and $B$-profiles of $c_1$ are the same. In particular, there are $n_P$ centers with $B$-profile $P$ in $C_1$. Since all centers in $C_1$ are red (recall that we assumed that the coloring is good), there are at least $n_P$ red centers with $B$-profile $P$.

Now for each $P \in \mathcal{P}$, we choose exactly $n_P$ red centers whose $B$-profile is $P$ to form the set $D_1$\footnote{while we do not know $OPT$, we can guess $OPT^p$ upto a factor of $(1 + \epsilon)$, but we omit this detail in the proof for the sake of simplicity}. 
Once we choose $D_1$, we let $D_2$ to be the set of all centers $d_2 \in B$ such that $d_2$ is the closest center in $B$ to some $d_1 \in D_1$. This concludes the construction of the sets $D_1$ and $D_2$.

\begin{lemma}\label{lemma:costfpt}
$\|\cost(\beta, D_1, D_2)\|_{\infty} \leq ((3c + 2)^p - c^p +\varepsilon^p)\OPT^p$.
\end{lemma}

\begin{proof}
There is a one-to-one map $f: C_1 \rightarrow D_1$ such that the $C_2$-profile of $c_1$ and the $B$-profile of $d_1 = f(c_1)$ are the same for every $c_1 \in C_1$.

It follows that for every $d_1 \in D_1$, 
$$\cost(\beta, d_1, D_2) = \cost(\beta,d_1,B)  \leq \cost(\beta, c_1, C_2) + \frac{\varepsilon^p}{t} \OPT^p.$$ (The term $\frac{\varepsilon^p}{t} \OPT^p$ is exactly the length of an interval in the definition of profiles).
Summing this upper bound on $\cost(\beta, d_1, B)$ over all $d_1 \in D_1$ and using~\eqref{eq:additivity},
we get
\begin{align*}
\cost_i(\beta,D_1,D_2) &= \sum_{d_1\in D_1} \cost_i(\beta,d_1,D_2)\\
&\leq
\sum_{d_1\in D_1} \Bigl(\cost_i(\beta,f^{-1}(d_1),C_2) + \frac{\varepsilon^p}{t} \OPT^p\Bigr) \\
&\leq 
\cost_i(\beta,C_1,C_2) + \varepsilon^p \OPT^p \leq 
\bigl((3c+2)^p - c^p + \varepsilon^p\bigr)\,\mathrm{OPT}^p.
\end{align*}
\end{proof}
\vspace*{-4mm}
\end{proof}







\begin{proof}[Proof of~\Cref{thm:FPT}]
Let $\beta$ be an optimal assignment for the center set $C$. Invoke~\Cref{lemma:findsets} to obtain the sets $D_1, D_2$. After obtaining the sets $D_1, D_2$, we close the centers $D_1$. Since $|D_1| = t$, it follows that we obtain a solution with at most $k$ centers.

We have $\|\cost(\beta)\|_{\infty} \le c^p\OPT^p$, and by~\Cref{lemma:findsets}, the additional cost incurred is at most $\|\cost(\beta, D_1, D_2) \|_{\infty} \leq ((3c + 2)^p - c^p + \varepsilon^p)\OPT^p$. Thus the $\ell_p$-objective of the resulting solution with $k$ centers is at most $\Bigl((3c + 2)^p - c^p + \varepsilon^p + c^p\Bigr)^{1/p} \OPT \leq (3c + 2 + \varepsilon) \OPT$ where in the final step we use the inequality $(a + b)^{1/p} \leq a^{1/p} + b^{1/p}$, which holds for every $p \geq 1,a,b \geq 0$.\end{proof}

By Theorem~\ref{thm:ghadiri-et-al} of~\cite{ghadiri2022constant}, we can find a $c = 5+2\sqrt{6}$ approximation solution with $k+\ell$ centers in polynomial time. By Lemma~\ref{lemma:findsets}, we can close $\ell$ centers in this bicriteria approximation solution to get a solution with $k$ centers in time $\bigl({\ell}(\frac{3c +2}{\varepsilon})^p\bigr)^{\ell^2} n^{O(1)}$. This proves~\Cref{thm:FPTmain}.

\section{Local search algorithm for \texorpdfstring{$(p,q)$}{(p,q)}-clustering}
\label{sec:local-search}
\subsection{Local search with scalar costs}
Following~\cite{Arya}
and~\cite{GT}, we first revisit the local search algorithm for $(k,p)$-clustering. We will later apply the results of this section to analyze the performance of the local search algorithm for $(p,q)$-Clustering.

Let $(X,d)$ be a metric space, and let $w(u)$ be non-negative vertex weights. We consider one step of the local search algorithm. Let $C = \{c_1,\dots,c_k\}$ be the current set of $k$ centers, and $C^* = \{c_1^*,\dots,c_k^*\}$ be a target set. Later, $C^*$ will be an optimal solution for the $(p, q)$-objective, but for now, we make no assumptions about it.

We define the cost of $C$ as 
\[
\cost(w) = \cost(w,C) =  \sum_{u\in X} w(u) d(u, C)^p,
\]
and the cost of $C^*$ as
\[
\cost^*(w) = \cost(w,C^*) = \sum_{u\in X} w(u) d(u, C^*)^p.
\]
Consider a swap operation $c_a \rightarrow c_b^*$ that replaces $c_a$ with $c_b^*$. We denote the resulting set of centers by
\[
C_{ab} = (C \setminus \{c_a\}) \cup \{c_b^*\},
\]
and its cost by $\cost_{ab}(w) = \cost(w,C_{ab}) = \sum_{u\in X} w(u) d(u, C_{ab})^p$.

Consider a probability distribution $\mathcal{D}$ on the set of swaps $c_a \rightarrow c_b^*$ (equivalently, on $[k] \times [k]$). The expected cost after a random swap drawn from $\mathcal{D}$ is $\mathbb{E}_{(a,b) \sim \mathcal{D}}[\cost_{ab}(w)]$.

We derive the following guarantees for local search. Conceptually, the proof closely follows the analyses in~\cite{Arya} and~\cite{GT}. However, the bounds we require are more intricate, as we need finer control over the distribution of swaps. We prove this lemma in \Cref{apx:local-search}.
\begin{lemma}\label{thm:basic-local-search}
Let $(X,d)$ be a metric space, and let $C$ and $C^*$ be two sets of centers. There exists a probability distribution $\mathcal{D}$ over swaps such that the following holds. For every $p\geq 1$, weights $w$, parameters $M \geq m > 0$, and coefficients $\xi_{ab}\in [m,M]$, we have.
\begin{enumerate}
\item For $\alpha = \frac{10Mp}{m}$, the following inequality holds.
\begin{equation}\label{eq:basic-local-search-1}
    \mathbb{E}_{\mathcal{D}}[\xi_{ab}\cdot(\cost_{ab}(w) - \cost(w))] \leq \frac{m}{2k}(\alpha^p \cost^*(w) - \cost(w))
\end{equation}    
\item We also have,
\begin{equation}\label{eq:basic-local-search-2}
\mathbb{E}_{\mathcal{D}}[\xi_{ab}\cdot( \cost_{ab}(w) - \cost(w))] \leq \frac{(1+2^{2p})M}{k} (\cost^*(w)+\cost(w))
\end{equation}
    \item Finally, for every swap $c_a \rightarrow c_b^*$ in the support of $\mathcal{D}$ and every $\delta_0\in (0,1/4)$, we have
\begin{equation}\label{eq:basic-local-search-3}
\cost_{ab}(w)  \leq (1+\theta_0)\cost^*(w) + (1+ \delta_0) \cost(w),
\end{equation}
where $\theta_0 = 2^p(5p/(4\delta_0))^{p-1}$.
\end{enumerate}
\end{lemma}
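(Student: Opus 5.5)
We will construct $\mathcal{D}$ by adapting the Arya et al./Gupta--Tangwongsan swap structure. For each $c_b^*\in C^*$ let $\eta(c_b^*)$ be its nearest center in $C$. Each $c_a\in C$ is \emph{captured} by the optimal centers mapped to it. Centers capturing exactly one optimal center are paired with that center. Centers capturing none are \emph{free*}, and every optimal center whose $\eta$-image captures $\ge 2$ optimal centers is matched to free centers so that each free center receives at most $2$ of them. The usual counting ($\#$free $\ge$ half the number of such optimal centers) makes this possible. This gives a set $S$ of at most $2k$ swaps $c_a\to c_b^*$ with three properties: every $c_b^*$ appears at least once; every $c_a$ appears at most twice; and $c_a$ captures nothing except possibly $c_b^*$. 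We take $\mathcal{D}$ to be the distribution that picks each pair of $S$ with a weight $\lambda_{ab}\in\{1/k, 1/(2k)\}$, giving each optimal center total mass exactly $1/k$, with the remaining mass put on the trivial (null) swap. Any remaining mass can be spread so that every $c_a$ is removed with probability at most $1/k$. The construction does not depend on $w,p,\xi$, as required.

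For a fixed swap $(a,b)\in S$ we bound $\cost_{ab}(w)-\cost(w)$ pointwise. Clients $u$ with $\beta^*(u)=c_b^*$ move to $c_b^*$ and contribute at most $w(u)(d(u,C^*)^p-d(u,C)^p)$. Clients with $\beta(u)=c_a$ but $\beta^*(u)\ne c_b^*$ are rerouted to $\eta(\beta^*(u))\ne c_a$. Their distance is at most $d(u,C)+2d(u,C^*)$, since $d(\beta^*(u),\eta(\beta^*(u)))\le d(\beta^*(u),c_a)$. All other clients contribute at most $0$. Averaging over $\mathcal{D}$, the first type contributes $\frac1k(\cost^*-\cost)$ in total, because each optimal center has mass $1/k$. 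The second type contributes at most $\frac{1}{k}\sum_u w(u)\bigl((d(u,C)+2d(u,C^*))^p-d(u,C)^p\bigr)$, since each $c_a$ is removed with mass $\le 1/k$.

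Now insert the coefficients $\xi_{ab}\in[m,M]$. Negative terms get multiplied by at least $m$ and positive terms by at most $M$. This yields
\[
\mathbb{E}[\xi_{ab}(\cost_{ab}-\cost)]\le \frac{1}{k}\Bigl(M\cost^* - m\cost + M\sum_u w(u)\bigl((d_u+2d_u^*)^p-d_u^p\bigr)\Bigr),
\]
where $d_u=d(u,C)$ and $d_u^*=d(u,C^*)$. Part 2 follows from $(d_u+2d_u^*)^p\le 2^{p-1}d_u^p+2^{2p-1}d_u^{*p}$, which gives a bound of $\frac{(1+2^{2p})M}{k}(\cost^*+\cost)$. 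For part 1, the main obstacle is to absorb the rerouting term into $\frac{m}{2}\cost$. We use Claim~\ref{claim:MMR} with $\varepsilon=\frac{m}{4Mp}$. Then $(1+\varepsilon)^{p-1}-1\le e^{\varepsilon(p-1)}-1\le \frac{m}{2M}$, roughly, so $M((1+\varepsilon)^{p-1}-1)d_u^p\le \frac m2 d_u^p$. The remaining term is $M\,2^p(1+4Mp/m)^{p-1}d_u^{*p}$, plus the $M\cost^*$ from before. Since $2(1+4Mp/m)\le 10Mp/m=\alpha$, the total is at most $\frac{m}{2}\alpha^p\cost^*$, because $M\le m\alpha^{p}/(2\cdot\text{const})$ with room to spare. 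We expect the main work to be this constant bookkeeping, and some slack in the choice of $\varepsilon$ may need to be traded.

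Part 3 is a per-swap bound. Here we apply the pointwise estimates without averaging. Clients moved to $c_b^*$ pay at most $d_u^{*p}$. Rerouted clients pay $(d_u+2d_u^*)^p\le(1+\delta_0)d_u^p+2^p(1+1/\delta_0)^{p-1}d_u^{*p}$ by Claim~\ref{claim:MMR} with $\varepsilon=\delta_0$. We strengthen this to match the stated form by choosing $\varepsilon=\delta_0/(p)$-type scaling, so that $(1+\varepsilon)^{p-1}\le 1+\delta_0$ while $(1+1/\varepsilon)^{p-1}\le(5p/(4\delta_0))^{p-1}$ using $\delta_0<1/4$. All other clients keep cost $d_u^p$. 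Summing gives $\cost_{ab}\le(1+\theta_0)\cost^*+(1+\delta_0)\cost$.
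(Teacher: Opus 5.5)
Your architecture matches the paper's: the capture map, swaps $c_a\to c_b^*$ in which $c_a$ captures nothing except possibly $c_b^*$, the rerouting bound $d(u,C_{ab})\le d_u+2d_u^*$, replacing $\xi_{ab}$ by $M$ on positive terms and $m$ on negative terms, and Claim~\ref{claim:MMR} with a tuned $\varepsilon$. Your matching sends optimal centers into free centers, at most two per free center. It has the same marginals as the paper's product distribution, so that difference is immaterial.

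\textbf{The gap.} You claim that with every $c_b^*$ carrying mass exactly $1/k$, every $c_a$ can be removed with probability at most $1/k$. This is impossible. Write $A_1,A_0,A_\star$ for the centers of $C$ capturing exactly one, zero, and at least two optimal centers, and $B_\star$ for the optimal centers captured by $A_\star$.
\begin{itemize}
\item An optimal center in $B_\star$ can only be paired with a center of $A_0$. Otherwise $c_a$ captures some $c_t^*$ with $t\ne b$, and the rerouting breaks.
\item $|A_0|=k-|A_1|-|A_\star|=|B_\star|-|A_\star|<|B_\star|$ whenever $B_\star\ne\emptyset$.
\item So the mass $|B_\star|/k$ forces some free center to be removed with probability larger than $1/k$.
\end{itemize}
There is also no leftover mass for a null swap: $k$ optimal centers of mass $1/k$ already use all of it. The correct marginal is $\Pr[a=a_0]\le 2/k$, which is what your own ``at most twice'' property gives. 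Hence the rerouting term carries $2M/k$, not $M/k$.

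\textbf{Consequences.} Part 2 survives the extra factor: the $\cost$ coefficient becomes $-m+M(2^p-2)$ and the $\cost^*$ coefficient becomes $(1+2^{2p})M$. Part 3 is a per-swap bound and is unaffected.

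Part 1 fails as written. You now need $2M\delta\le m/2$ with $\delta=(1+\varepsilon)^{p-1}-1$, i.e.\ $\delta\le m/(4M)$. Your $\varepsilon=m/(4Mp)$ only gives $\delta\le e^{m/(4M)}-1$, which exceeds $m/(4M)$ for large $p$. So the $-\frac{m}{2}\cost$ term is lost. The ``room to spare'' also disappears: at $p=1$ you have $\theta=2$ and $(1+2\theta)M=5M=\frac{m}{2}\alpha$ exactly.

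\textbf{The fix} is Lemma~\ref{lem:delta-theta}. For $p>1$, take $\varepsilon=(1+\delta_0)^{1/(p-1)}-1$ with $\delta_0=m/(4M)$, so that $\delta=\delta_0$ exactly; for $p=1$, take $\varepsilon=1$, giving $\delta=0$ and $\theta=2$. The mean value theorem gives $\varepsilon\ge\frac{4\delta_0}{5(p-1)}$. Hence $\theta\le 2^p(5pM/m)^{p-1}$, and $(1+2\theta)M\le\frac{5}{2}\theta M\le\frac{1}{2}(10Mp/m)^p m$.

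Part 3 needs the same exact choice with general $\delta_0$ to reach the stated $\theta_0$. A literal $\varepsilon=\delta_0/p$ only gives $(1+\varepsilon)^{p-1}\le e^{\delta_0}$, not $1+\delta_0$.
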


On a high level, the necessity for such guarantees is as follows. In $k$-median, the objective is linear, and the effects of the change in cost due to a single swap can be directly accounted for. However, in $(p,q)$-Clustering, the objective involves two layers of non-linearity (the inner and outer norms). To deal with this non-linearity, we can write the change in the objective for each swap $c_a \rightarrow c_b^*$ as a linear change with multipliers $\xi_{ab}$ using the mean value theorem and convexity. Here the multipliers $\xi_{ab}$ are bounded in the range $[m,M]$ for some appropriate choice of $m$ and $M$. The above lemma then assumes this boundedness of $\xi_{ab}$ and derives guarantees on the change in $\cost(w)$, which in turn bounds the change in the objective.

\subsection{Single swap analysis w.r.t.\ the \texorpdfstring{$(p,q)$}{(p,q)}-objective}
Now we consider the local search algorithm for $(p,q)$-Clustering --
the goal is to minimize the following objective function:
\[
\left(\sum_{i=1}^{\ncoord} \left(\sum_{u\in X} w_i(u) d(u, C)^p \right)^{q/p}\right)^{1/q} = \left(\sum_{i=1}^{\ncoord} \cost(w_i,C)^{q/p}\right)^{1/q}
\]
From now on, let $C^*$ be an optimal solution for the $(p,q)$-Clustering problem. We assume that swap $c_a \rightarrow c_b^*$ is sampled from distribution $\cal D$ constructed in Theorem~\ref{thm:basic-local-search}.
Define $r= q/p$ and $\varphi(t) = t^{r}$, and
\begin{align*}
    z_i &= \cost(w_i), &
    z'_i &= \cost_{ab}(w_i), &
    z^*_i &= \cost^*(w_i) \\
    \Phi &= \sum_{i=1}^{\ncoord} \varphi(z_i), &
    \Phi_{ab} &= \sum_{i=1}^{\ncoord} \varphi(z'_i), &
    \Phi^* &= \sum_{i=1}^{\ncoord} \varphi(z^*_i)
\end{align*}
Note that  $(\Phi^*)^{1/q}$ is the cost of the optimal solution, $\Phi^{1/q}$ is the cost of the current solution, and $\Phi_{ab}^{1/q}$ is the cost of the solution after a swap $c_a \to c_b^*$. Also, let 
\[\Delta_i = \E{\varphi(z_i') - \varphi(z_i)}.\]

Fix a coordinate $i$. Now we get an upper bound for $\varphi(z'_i)$ by approximating $\varphi$ with a piecewise linear function. Let $\hat{z}_i$ be the maximum of $z_i'$ over all swaps in the support of $\cal D$ (we will obtain an upper bound on $\hat z_i$ later using Theorem~\ref{thm:basic-local-search}, item 3). 
Let $M = \frac{\varphi(\hat{z}_i) - \varphi(z_i)}{\hat{z}_i - z_i} \leq \varphi'(\hat z_i)$ (by the Mean Value Theorem and monotonicity of $\varphi'$) and $m = \varphi(z_i)/z_i = z_i^{r-1}$.
Since $\varphi$ is convex, we have (see Figure~\ref{fig:approximation-for-phi})

\begin{figure}[!htbp]
\centering
\begin{tikzpicture}[scale=1.2,font=\footnotesize]
    \draw[->] (0,0) -- (2.1,0) node[right] {$z'_i$};
    \draw[->] (0,0) -- (0,2.1) node[right] {$\varphi(z'_i)$};

    \draw[red, domain=0:1.9, samples=100] plot (\x, {0.5*\x*\x});

    \draw[blue] (0,0) -- (1,0.5);
    \draw[blue] (1,0.5) -- (1.7,1.445);

    \draw[dotted] (1,0) -- (1,0.5);
    \draw[dotted] (1.7,0) -- (1.7,1.445);
    
    \draw[dotted] (0,0.5) -- (1,0.5);
    \draw[dotted] (0,1.445) -- (1.7,1.445);

    \filldraw[black] (1,0.5) circle (0.02);

    \node[below] at (0,0) {\strut $0$};   

    \draw (1,0) -- (1,-0.05);
    \node[below] at (1, 0) {\strut $z_i$};

    \draw (1.7,0) -- (1.7,-0.05);
    \node[below] at (1.7,0.0) {\strut $\hat{z}_i$};

    \draw (0,0.5) -- (-0.05,0.5);
    \node[left] at (0,0.5) {$\varphi(z_i)$};
    \draw (0,1.445) -- (-0.05,1.445);
    \node[left] at (0,1.445) {$\varphi(\hat{z}_i)$};    
\end{tikzpicture}
\caption{We approximate $\varphi(z'_i)$ on the interval $[0, \hat{z}_i]$ using a piecewise linear function consisting of two segments: the first with slope $m$, and the second with slope $M$.}
\label{fig:approximation-for-phi}
\end{figure}

\[
\varphi(z_i') \leq 
\begin{cases}
    \varphi(z_i) + m \cdot (z_i' - z_i), & \text{if } z'_i < z_i \\
    \varphi(z_i) + M\cdot (z'_i - z_i), & \text{if } z'_i \geq z_i 
\end{cases}
\]
Define coefficients $\xi_{ab}$ as follows
\[
\xi_{ab} = 
\begin{cases}
m, & \text{if } z_i' < z_i\\
M, & \text{otherwise}
\end{cases}
\]
Then $\varphi(z_i') \leq \varphi(z_i) + \xi_{ab}\cdot (z_i' - z_i)$, and we have
$\Delta_i\equiv\E{\varphi(z'_i) - \varphi(z_i)} \leq \E{\xi_{ab} \cdot (z_i' - z_i)}$.

\paragraph*{Now we are in a position to use~\Cref{thm:basic-local-search}, using which we derive the following key inequality, whose proof is deferred to~\Cref{sec:prooflocalsearch}.}

\begin{equation}\label{eq:localsearchguarantee}
\Delta_i\leq \frac{1}{4k}\left(\beta^q \varphi(z^*_i) - \varphi(z_i)\right)
\end{equation}
with $\beta = O(q)$.
Adding up the inequality over all coordinates $i\in [\ncoord]$, we get
\[\E{\Phi_{ab} - \Phi} = \sum_{i=1}^{\ncoord} \Delta_i \leq \frac{1}{4k}\left(\beta^q \Phi^* - \Phi\right).
\]
In particular, this means that there is always a swap $c_a\rightarrow c_b^*$, for which 
\[\Phi_{ab} - \Phi \leq \frac{1}{4k}\left(\beta^q \Phi^* - \Phi\right).\]

\subsection{Putting everything together}
Consider the local search algorithm that at each iteration finds a swap $a\to b$ with $a\in C$ and $b\in F \setminus C$ that minimizes the potential function $\Phi$ the most, where $C$ is the current set of open centers and $F$ is the candidate facility set. The algorithm terminates when no swap decreases the current potential $\Phi$ by more than $\frac{\Phi}{8k}$.

\subsubsection{Bounding the number of iterations}  
Suppose that $\Phi > 2\beta^q \Phi^*$. Then, for an optimal swap $(a, b)$, we have  
\[
\Phi_{ab} - \Phi \leq -\frac{\Phi}{8k}
\]
and  
\[
\Phi_{ab} - \Phi^* \leq (\Phi - \Phi^*) - \frac{\Phi}{8k} \leq \left(1 - \frac{1}{8k}\right)(\Phi - \Phi^*).
\]

Thus when the algorithm stops, we must have $\Phi \leq 2\beta^q \Phi^*$. Till the algorithm stops, in every iteration, the above inequality holds. It follows that after at most $8k \log_e(\Phi / \Phi^*)$ iterations, the algorithm will find a clustering of cost at most $2^{1/q} \cdot\beta\cdot (\Phi^*)^{1/q}$.

As is standard, by truncating distances and rounding weights, we may assume that all distances and weights are integers bounded polynomially in $n$. Therefore, the cost of any clustering is polynomial in $n$; hence, $\log_e(\Phi / \Phi^*)$ is bounded by $O(q \log n)$. We conclude that the algorithm returns a solution of cost at most $O(q)$ in at most $O(qk \log n)$ iterations.

\subsubsection{Running time per iteration} In each iteration, we maintain the current set of centers $C$. For each point $u$ we maintain its closest and second-closest centers $s_1(u)$ and $s_2(u)$ in $C$. At the beginning of every iteration, these quantities can be computed in time $O(nk)$. We can compute the change in cost due to a swap $a \rightarrow b$ for each point $u$ using $s_1(u), s_2(u)$ and the distances $d(u,a), d(u,b)$. Then we add these costs for each group in time $O_{p,q}(n\ell)$ and re-compute the potential. Thus, for every swap $a \rightarrow b$ we spend time $O_{p,q}(n\ell)$ and there are $O(nk)$ swaps, giving a per-iteration running time of $O(n^2k\ell)$. Since there are $O(kq\log n)$ iterations, the running time is at most $O_{p,q}(n^2 \ell k^2 \log n)$. This concludes the proof of~\Cref{thm:local_search}.

\subsection{Proof of Lemma~\ref{thm:basic-local-search}}
\label{apx:local-search}
In this Section, we prove Lemma~\ref{thm:basic-local-search}.
\begin{proof}
\textbf{Defining the distribution of swaps.} For each center $c_b^* \in C^*$, let $\gamma(c_b^*)$ denote the nearest center in $C$ (see Figure~\ref{fig:exchange}). We partition the centers in $C$ based on the number of preimages under $\gamma$:
\begin{figure}[!htbp]
\centering
\begin{tikzpicture}[scale=0.8, every node/.style={font=\small}]
\usetikzlibrary{arrows.meta, positioning}
\usetikzlibrary{decorations.pathreplacing}

\def\squareSize{0.6}
\def\circleRadius{0.36}
\def\n{6}
\def\spacing{2*\squareSize}
\def\xleft{0}
\def\xright{3.5}
\def\ymid{\spacing*(\n-1)/2 + 2}

\node at ({(\xleft+\xright)/2}, {\ymid + 1.5}) {$\gamma$};

\draw[decorate, decoration={brace, mirror, amplitude=5pt}, thick] 
  ({\xleft - \squareSize/2 - 0.3}, {\spacing*(\n-0.7)}) -- 
  ({\xleft - \squareSize/2 - 0.3}, {\spacing*(\n-2.3)}) node[midway, left=6pt] {$B_1$};

\draw[decorate, decoration={brace, mirror, amplitude=5pt}, thick]
  ({\xleft - \squareSize/2 - 0.3}, {\spacing*(\n-2.7)}) -- 
  ({\xleft - \squareSize/2 - 0.3}, -0.3) node[midway, left=6pt] {$B_{\star}$};

\draw[decorate, decoration={brace, amplitude=5pt}, thick] 
  ({\xright + \circleRadius + 0.3}, {\spacing*(\n -0.7)}) -- 
  ({\xright + \circleRadius + 0.3}, {\spacing*(\n - 2.3)}) node[midway, right=6pt] {$A_1$};

\draw[decorate, decoration={brace, amplitude=5pt}, thick]
  ({\xright + \circleRadius + 0.3}, {\spacing*(\n - 2.7)}) -- 
  ({\xright + \circleRadius + 0.3}, {\spacing*(\n - 4.3)}) node[midway, right=6pt] {$A_{\star}$};

\draw[decorate, decoration={brace, amplitude=5pt}, thick]
  ({\xright + \circleRadius + 0.3}, {\spacing*(\n - 4.7)}) -- 
  ({\xright + \circleRadius + 0.3}, {\spacing*(\n - 6.3)}) node[midway, right=6pt] {$A_0$};

\foreach \i in {1,...,6} {
  \pgfmathsetmacro{\y}{\spacing*(6-\i)}
  
  \draw[thick] ({\xleft - \squareSize/2}, \y - \squareSize/2) rectangle ({\xleft + \squareSize/2}, \y + \squareSize/2);
  \ifnum\i=1
    \node at ({\xleft  }, \y) [anchor=center] {$c_1^*$};
  \else\ifnum\i=6
    \node at ({\xleft }, \y ) [anchor=center] {$c_k^*$};
  \fi\fi

  \draw[thick] (\xright,\y) circle (\circleRadius);
  \ifnum\i=1
    \node at ({\xright }, \y) [anchor=center] {$c_1$};
  \else\ifnum\i=6
    \node at ({\xright  }, \y) [anchor=center] {$c_k$};
  \fi\fi
}

\foreach \s/\t in {1/1, 2/2, 3/3, 4/3, 5/4, 6/4} {
  \pgfmathsetmacro{\yfrom}{\spacing*(6-\s)}
  \pgfmathsetmacro{\yto}{\spacing*(6-\t)}
  \draw[-{Latex[length=2.2mm]}, thin] 
    ({\xleft + \squareSize/2}, \yfrom) -- ({\xright - \circleRadius}, \yto);
}
\end{tikzpicture} 
\qquad\qquad
\begin{tikzpicture}[scale=0.8, every node/.style={font=\small}]
\usetikzlibrary{arrows.meta, positioning}
\usetikzlibrary{decorations.pathreplacing}

\def\squareSize{0.6}
\def\circleRadius{0.36}
\def\n{6}
\def\spacing{2*\squareSize}
\def\xleft{0}
\def\xright{3.5}
\def\ymid{\spacing*(\n-1)/2 + 2}

\draw[decorate, decoration={brace, mirror, amplitude=5pt}, thick] 
  ({\xleft - \squareSize/2 - 0.3}, {\spacing*(\n-0.7)}) -- 
  ({\xleft - \squareSize/2 - 0.3}, {\spacing*(\n-2.3)}) node[midway, left=6pt] {$B_1$};

\draw[decorate, decoration={brace, mirror, amplitude=5pt}, thick]
  ({\xleft - \squareSize/2 - 0.3}, {\spacing*(\n-2.7)}) -- 
  ({\xleft - \squareSize/2 - 0.3}, -0.3) node[midway, left=6pt] {$B_{\star}$};

\draw[decorate, decoration={brace, amplitude=5pt}, thick] 
  ({\xright + \circleRadius + 0.3}, {\spacing*(\n -0.7)}) -- 
  ({\xright + \circleRadius + 0.3}, {\spacing*(\n - 2.3)}) node[midway, right=6pt] {$A_1$};

\draw[decorate, decoration={brace, amplitude=5pt}, thick]
  ({\xright + \circleRadius + 0.3}, {\spacing*(\n - 2.7)}) -- 
  ({\xright + \circleRadius + 0.3}, {\spacing*(\n - 4.3)}) node[midway, right=6pt] {$A_{\star}$};

\draw[decorate, decoration={brace, amplitude=5pt}, thick]
  ({\xright + \circleRadius + 0.3}, {\spacing*(\n - 4.7)}) -- 
  ({\xright + \circleRadius + 0.3}, {\spacing*(\n - 6.3)}) node[midway, right=6pt] {$A_0$};

\foreach \i in {1,...,6} {
  \pgfmathsetmacro{\y}{\spacing*(6-\i)}
  
  \draw[thick] ({\xleft - \squareSize/2}, \y - \squareSize/2) rectangle ({\xleft + \squareSize/2}, \y + \squareSize/2);
  \ifnum\i=2
    \node at ({\xleft  }, \y) [anchor=center] {\textcolor{red}{$c_1^*$}};
  \else\ifnum\i=3
    \node at ({\xleft }, \y ) [anchor=center] {\textcolor{blue}{$c_b^*$}};
  \fi\fi

  \draw[thick] (\xright,\y) circle (\circleRadius);
  \ifnum\i=2
    \node at ({\xright }, \y) [anchor=center] {\textcolor{red}{$c_a$}};
  \else\ifnum\i=5
    \node at ({\xright  }, \y) [anchor=center] {\textcolor{blue}{$c_a$}};
  \fi\fi
}

\draw[blue, thick, ->] ({\xright - \circleRadius}, {\spacing*(6 - 5)}) -- ({\xleft + \squareSize/2}, {\spacing*(6 - 3)});
\draw[red, thick, ->] ({\xright - \circleRadius}, {\spacing*(6 - 2)}) -- ({\xleft + \squareSize/2}, {\spacing*(6 - 2)});

\end{tikzpicture} 

\caption{The left figure shows map $\gamma$ that maps every optimal center $c_i$ to the closest current center $c_j$. This map defines the partition of the optimal centers to sets $B_1$ and $B_{\star}$ and the current centers to sets $A_0$, $A_1$, and $A_{\star}$. The right figure illustrates two possible types of swaps $c_a \rightarrow c_b$.}
\label{fig:exchange}

\end{figure}
\begin{itemize}
\item $A_1$: centers with exactly one preimage,
\item $A_0$: centers with no preimages,
\item $A_{\star}$: centers with two or more preimages.
\end{itemize}
Define $B_1 = \gamma^{-1}(A_1)$ and $B_{\star} = \gamma^{-1}(A_{\star})$. We observe:
\begin{itemize}
    \item $\gamma$ is a one-to-one map between $B_1$ and $A_1$,
    \item $\gamma$ maps $B_{\star}$ onto $A_{\star}$, with every $c \in A_{\star}$ having at least two preimages.
\end{itemize}
Denote $r = |A_1|$. From the observations above, we get $|B_1| = |A_1| = r$ and $|A_{\star}| \leq |B_{\star}|/2 = (k - r)/2$. Thus, $|A_0| = k - |A_1| - |A_\star| \geq (k-r)/2$.
Now we are ready to define a probability distribution $\mathcal{D}$ over swaps.
\begin{enumerate}
    \item With probability $\frac{r}{k}$, select $c_b^* \in B_1$ uniformly at random and set $c_a = \gamma(c_b^*)\in A_1$,
    \item With probability $\frac{k-r}{k}$, select $c_a$ uniformly from $A_0$ and $c_b^*$ uniformly from $B_{\star}$.
\end{enumerate}
This yields a random swap $c_a \rightarrow c_b^*$. 
Note that $b$ is uniformly distributed in $[k]$ and therefore the probability that $b = b_0$ is $1/k$  for every fixed $b_0\in [k]$. In general, $a$ is not uniformly distributed; however, the probability that $a = a_0$ for every fixed $a_0\in [k]$ is at most $\max(\frac{r}{k}\cdot\frac{1}{|A_1|}, \frac{k-r}{k}\cdot\frac{1}{|A_0|}) \leq \max(1/k, 2/k) = 2/k$:
\begin{equation}\label{eq:prob}
    \Prob{a = a_0} \leq 2/k \qquad\text{and} \qquad \Prob{b = b_0} = 1/k 
\end{equation}

We will need below that for all $t \neq b$,
\begin{equation}\label{eq:no-hit}
\gamma(c_t^*) \neq c_a.
\end{equation}
Indeed, in item 1, $\gamma^{-1}(c_a) = \{c_b^*\}$; in item 2, $\gamma^{-1}(c_a) = \varnothing$. In either case, $c_t^*\notin \gamma^{-1}(c_a)$.

\subsubsection{Upper bounding connection cost}
Let $x_u = d(u,C)$ denote the distance from $u$ to the closest center in the current solution $C$, $x_u' = d(u,C_{ab})$ the updated distance after the swap, and $x_u^* = d(u,C^*)$ the distance to the closest center in $C^*$. Let $V_a$ be the Voronoi cell of $c_a$ in $C$, and $V_b^*$ the Voronoi cell of $c_b^*$ in $C^*$. For $u \in V_b^*$,
    \[
    x_u^{\prime p} - x_u^p \leq x_u^{*p} - x_u^p.
    \]
For $u \in V_a \setminus V_b^*$, assume $u \in V_t^*$ for some $t \neq b$. By~\eqref{eq:no-hit}, $\gamma(c_t^*) \neq c_a$; thus, $\gamma(c_t^*)\in C_{ab}$ and
\begin{figure}
\centering
\begin{tikzpicture}[scale=1.5, every node/.style={font=\small}]
  \coordinate (p1) at (-0.5, 1.2);      
  \coordinate (p2) at (1, 2);      
  \coordinate (p3) at (-1, 0);     
  \coordinate (p4) at (1, 1);      

  \draw[thick] (p1) -- (p4); 
  \draw[thick, red] (p1) -- (p3); 
  \draw[thick, red] (p3) -- (p4); 
  \draw[dotted, thick] (p3) -- (p2);

  \filldraw[black] (p1) circle (0.03);
  \node[above left] at (p1) {$u$};
    
  \filldraw[black] (p2) circle (0.03);
  \node[right] at (p2) {$c_a$};
    
  \filldraw[black] (p3) circle (0.03);
  \node[left] at (p3) {$c_t^*$};
    
  \filldraw[black] (p4) circle (0.03);
  \node[below right] at (p4) {$\gamma(c_t^*) \in C_{ab}$};    
\end{tikzpicture}
\qquad
\begin{tikzpicture}[scale=1.5, every node/.style={font=\small}]
  \coordinate (p1) at (-0.5, 1.2);      
  \coordinate (p2) at (1, 2);      
  \coordinate (p3) at (-1, 0);     
  \coordinate (p4) at (1, 1);      

  \draw[thick, brown] (p1) -- (p3); 
  \draw[thick, brown] (p1) -- (p2); 
  \draw[dotted, thick] (p3) -- (p2);

  \filldraw[black] (p1) circle (0.03);
  \node[above left] at (p1) {$u$};
    
  \filldraw[black] (p2) circle (0.03);
  \node[right] at (p2) {$c_a$};
    
  \filldraw[black] (p3) circle (0.03);
  \node[left] at (p3) {$c_t^*$};
    
  \filldraw[black] (p4) circle (0.03);
  \node[below right] at (p4) {$\gamma(c_t^*) \in C_{ab}$};    
\end{tikzpicture}
\caption{We assign point $u$ to $\gamma(c_t^*)$, which is guaranteed to be in $C_{ab}$. The distance from $u$ to $\gamma(c_t^*)$ is upper bounded by $d(u,c_t^*) + d(c_t^*, \gamma(c_t^*))$. Further, $d(c_t^*, \gamma(c_t^*)) \leq d(c_t^*, c_a)$, since $\gamma(c_t^*)$ is the closest current center to $c_t^*$. In turn, $d(c_t^*, c_a)$ is upper bounded by $d(c_t^*, u) + d(u,c_a)$.}
\label{fig:charging}
\end{figure}
\begin{align*}
    x_u' &= d(u, C_{ab}) 
    \leq d(u, \gamma(c_t^*)) 
    \leq d(u, c_t^*) + d(c_t^*, \gamma(c_t^*)) 
    \leq d(u, c_t^*) + d(c_t^*, c_a) \\
    &\leq d(u, c_t^*) + \big(d(c_t^*, u) + d(u, c_a)\big) 
    = 2x_u^* + x_u.
\end{align*}
Here, we used that $d(c_t^*, \gamma(c_t^*)) \leq d(c_t^*, c_a)$, 
since $\gamma(c_t^*)$ is the nearest center in $C$ to $c_t^*$.

Applying Claim~\ref{claim:MMR} with $\delta = (1+\varepsilon)^{p-1} - 1$ and $\theta = 2^p (1+1/\varepsilon)^{p-1}$, we get
\[x^{\prime p}_u \leq (x_u + 2x_u^*)^p \leq (1+\delta) x_u^p + \theta x_u^{*p}.\]
Thus,
\[x^{\prime p}_u - x_u^p\leq  \delta x_u^p + \theta x_u^{*p}.\]

Finally, if $u\notin V_a \cup V^*_b$, then $x_u' \leq x_u$.
We have,
\begin{align}
\cost_{ab}(w) - \cost(w) &\leq \sum_{u\in V_b^*} w_u \left( x_u^{\prime p} - x_u^p \right) + \sum_{u\in V_a\setminus V_b^*} w_u \left( x_u^{\prime p} - x_u^p \right) \\
&\leq 
\sum_{u\in V_b^*} w_u \left( x_u^{*p} - x_u^p \right) + \sum_{u\in V_a} w_u \left( \delta x_u^p + \theta x_u^{*p} \right). \label{eq:cost-always}
\end{align}
We now derive an upper bound for $\xi_{ab}\cdot( \cost_{ab}(w) - \cost(w) )$. Since $0< m \leq \xi_{ab} \leq M$, we can replace $\xi_{ab}$ with $M$ in positive terms and with $m$ in negative terms. We have,
\[\xi_{ab}\cdot( \cost_{ab}(w) - \cost(w) ) \leq 
\sum_{u\in V_b^*} w_u \left( M x_u^{*p} - m x_u^p \right) + M\sum_{u\in V_a} w_u \left( \delta x_u^p + \theta x_u^{*p} \right)
\]
Now we assume that the swap $c_a\to c_b^*$ is sampled from the distribution $\cal D$ defined above. We get
\begin{align*}
&\E{\xi_{ab}\cdot( \cost_{ab}(w) - \cost(w) )} \\
&\qquad \leq\sum_{u\in V} \Prob{u\in V_b^*} \cdot w_u \left( M x_u^{*p} - m x_u^p \right) + M\sum_{u\in V}\Prob{u\in V_a} \cdot w_u \left( \delta x_u^p + \theta x_u^{*p} \right)
\end{align*}
From~\ref{eq:prob}, we get $\Prob{u\in V_b^*} = \frac{1}{k}$ and $\Prob{u\in V_a} \leq \frac{2}{k}$ and thus
\begin{equation}\label{eq:main-expected-change}
k \E{\xi_{ab}\cdot( \cost_{ab}(w) - \cost(w) )} \leq (1 + 2\theta) M \cost^*(w) - \left(m - 2\delta M \right) \cost(w).
\end{equation}
We will use (\ref{eq:main-expected-change}) to derive (\ref{eq:basic-local-search-1}), (\ref{eq:basic-local-search-2}), and (\ref{eq:basic-local-search-3}). We will use different values of parameters $\varepsilon$, $\delta$, and $\theta$ in these derivations. We will need the following lemma.

\begin{lemma}\label{lem:delta-theta}
For every $p \geq 1$ and every value of $\delta_0\in (0,1/4)$, there exists $\varepsilon > 0$ such that $\delta = (1+\varepsilon)^{p-1} - 1 \leq \delta_0$ and $\theta \leq 2^p\left(\frac{5p}{4\delta_0}\right)^{p-1}$.
\end{lemma}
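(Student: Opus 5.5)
We have $\delta = (1+\varepsilon)^{p-1}-1$ and $\theta = 2^p(1+1/\varepsilon)^{p-1}$. The plan is to pick an explicit $\varepsilon$ and verify both bounds directly. The natural choice is $\varepsilon = \frac{\delta_0}{p}$ (or a constant multiple of it). With this choice, $1+1/\varepsilon = 1 + p/\delta_0$.

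\emph{First bound.} For $p=1$ we have $\delta=0$ and $\theta = 2$, so both claims are trivial; hence assume $p>1$. Using $1+x\le e^x$,
\[
(1+\varepsilon)^{p-1} \le e^{(p-1)\delta_0/p} \le e^{\delta_0}.
\]
It then suffices that $e^{\delta_0}-1\le \delta_0$, which fails. So we shrink $\varepsilon$ slightly and take $\varepsilon = \frac{\delta_0}{2p}$. Then $\delta \le e^{\delta_0/2}-1$. For $\delta_0<1/4$, convexity gives $e^{x}-1\le x e^{x}\le 2x$ when $x\le 1/8$, so $\delta\le \delta_0$.

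\emph{Second bound.} Now $1+1/\varepsilon = 1+2p/\delta_0$, which exceeds the target $5p/(4\delta_0)$. So the factor $2$ is too lossy, and the choice of $\varepsilon$ must be tuned. We need both
\[
(1+\varepsilon)^{p-1}\le 1+\delta_0 \qquad\text{and}\qquad 1+1/\varepsilon \le \frac{5p}{4\delta_0}.
\]
The second condition, $1+1/\varepsilon\le \frac{5p}{4\delta_0}$, holds provided $\varepsilon \ge \frac{4\delta_0}{5p-4\delta_0}$. This is implied by $\varepsilon\ge \frac{4\delta_0}{5p-1}$, since $\delta_0<1/4$. So take $\varepsilon = \frac{4\delta_0}{5p-1}$; note it is at most $\frac{4\delta_0}{4p}=\delta_0/p$ when $p\ge1$.

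\emph{Main obstacle.} The remaining task is the first condition,
\[
(1+\varepsilon)^{p-1}\le \exp\!\Bigl(\frac{4\delta_0(p-1)}{5p-1}\Bigr)\le e^{4\delta_0/5},
\]
where the last step uses $\frac{p-1}{5p-1}\le \frac15$. So we must show $e^{4x/5}\le 1+x$ for $x\in(0,1/4)$. This is the tight step and where the constant $5/4$ comes from. The function $g(x)=1+x-e^{4x/5}$ is concave with $g(0)=0$, so it suffices to check the endpoint: $e^{1/5}\approx 1.2214 \le 1.25$. Hence $g\ge0$ on $[0,1/4]$, which gives $\delta\le\delta_0$.

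Finally, $\theta = 2^p(1+1/\varepsilon)^{p-1}\le 2^p\bigl(\frac{5p}{4\delta_0}\bigr)^{p-1}$ follows from the second condition by monotonicity, since $p-1\ge0$.
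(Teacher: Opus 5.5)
Your proof is correct, but it runs in the opposite direction from the paper's. The paper saturates the first constraint: it sets $\varepsilon=(1+\delta_0)^{1/(p-1)}-1$, so that $\delta=\delta_0$ exactly. It then applies the mean value theorem to $t\mapsto(1+t)^{1/(p-1)}$, using $(1+t)^{1/(p-1)-1}\ge(1+t)^{-1}\ge 4/5$, to get $\varepsilon\ge\frac{4\delta_0}{5(p-1)}$. This gives $1+1/\varepsilon\le\frac{5(p-1)+4\delta_0}{4\delta_0}\le\frac{5p}{4\delta_0}$.

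You instead take the explicit value $\varepsilon=\frac{4\delta_0}{5p-1}$, chosen so that the $\theta$-bound holds; this is where you use $4\delta_0\le 1$. You then verify $\delta\le\delta_0$ through three steps: $1+\varepsilon\le e^{\varepsilon}$, then $\frac{p-1}{5p-1}\le\frac15$, then the concavity argument for $e^{4x/5}\le 1+x$ on $[0,1/4]$, which rests on $e^{1/5}<5/4$. All of these steps check out, and your $\varepsilon$ also works verbatim for $p=1$.

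\textbf{Comparison.} Your route gives a closed-form $\varepsilon$ and avoids the case analysis hidden in the sign of the exponent $1/(p-1)-1$. It pays for this with a numerical check, and it uses $\delta_0<1/4$ in both conditions. The paper's route hits $\delta=\delta_0$ exactly and needs no numerics. It also leaves slack in the $\theta$-bound, which only requires $\delta_0\le 5/4$.

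\textbf{Presentation.} In a final write-up, delete the two abandoned attempts ($\varepsilon=\delta_0/p$ and $\varepsilon=\delta_0/(2p)$) and state the choice $\varepsilon=\frac{4\delta_0}{5p-1}$ up front.
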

\begin{proof}
If $p = 1$, we let $\varepsilon = 1$, resulting in $\delta = 0$ and $\theta = 2$, and we are done. If $p > 1$, we choose $\varepsilon$ so that $\delta = \delta_0$. Applying the Mean Value Theorem, we get 
\[\varepsilon = (1 + \delta_0)^{1/(p-1)} - 1 \geq \delta_0 \cdot \min_{t\in(0,\delta_0)}\frac{(1 + t)^{1/(p-1)-1}}{p-1} \geq \frac{4}{5}\cdot \frac{\delta_0}{p-1},\]
where we used that $1 + t\leq 1 + \delta_0 \leq 5/4$ and $1/(p-1) - 1 \geq -1$.
Therefore,
\[\theta= 2^p (1+1/\varepsilon)^{p-1} \leq 2^p \left(\frac{5(p-1) + 4\delta_0}{4\delta_0}\right)^{p-1} \leq 2^p \left(\frac{5p}{4\delta_0}\right)^{p-1}.\]
\end{proof}

First we establish upper bound (\ref{eq:basic-local-search-1}).
We choose parameters according to Lemma~\ref{lem:delta-theta} with $\delta_0 = m/(4M)$. Then $\theta \leq 2^p (5pM/m)^{p-1}$.
Since $0.5 \cdot 2^p (5Mp/m)^{p-1} \geq 2^{p-1} \geq 1$, we have
\[(1+2\theta) M \leq 2.5 \cdot 2^p \left(\frac{5Mp}{m}\right)^{p-1} \cdot \frac{M}{m} \cdot m \leq \frac{1}{2}\left(\frac{10Mp}{m}\right)^p m.\]
We conclude that
\[\E{\xi_{ab}\cdot( \cost_{ab}(w) - \cost(w) )} \leq \frac{m}{2k}\left(\left(\frac{10Mp}{m}\right)^p \cost^*(w) - \cost(w)\right).\]

Now we prove (\ref{eq:basic-local-search-2}). We set $\varepsilon = 1$; then $\theta = 2^{2p-1}$ and $\delta = 2^{p-1} -1$. From~(\ref{eq:main-expected-change}), we get
\begin{align*}
&k\E{\xi_{ab}\cdot( \cost_{ab}(w) - \cost(w) )} \\
&\leq (1+2\theta) M \cost^*(w) + 2\delta M \cost(w) \leq (1+2^{2p})M (\cost(w)+\cost^*(w)))
\end{align*}

Finally, for every swap $c_a \to c_b^*$ in the support of $\cal D$, we get the following upper bound from~\eqref{eq:cost-always}, 
\begin{align*}\cost_{ab}(w) - \cost(w) &\leq \sum_{u\in V_b^*} w_u \left( x_u^{*p} - x_u^p \right) + \sum_{u\in V_a} w_u \left( \delta x_u^p + \theta x_u^{*p} \right)\\
&\leq 
\sum_{u\in V} w_u x_u^{*p} + \sum_{u\in V} w_u  \left( \delta x_u^p + \theta x_u^{*p} \right) \leq (1 + \theta) \cost^*(w) + \delta \cost(w) .
\end{align*}
Choosing $\varepsilon$ according to Lemma~\ref{lem:delta-theta}, we get the desired bound.
\end{proof}

\subsection{Proof of Inequality~\ref{eq:localsearchguarantee}}\label{sec:prooflocalsearch}
Applying~\Cref{thm:basic-local-search}, we get for $\alpha = 10Mp/m$,
\begin{align}
\Delta_i  &\leq \frac{m}{2k}\left(\alpha^p z^*_i - z_i\right)\label{eq:Delta-large}\\
\Delta_i  &\leq \frac{(1+2^{2p})M}{k}(z_i+z^*_i).\label{eq:Delta-small}
\end{align}
From~\Cref{thm:basic-local-search}, item 3, with $\delta_0 = 1/(5r)$ and \[
\theta_0 = 2^p\left(\frac{25pr}{4}\right)^{p-1},
\]
we get
\begin{equation}\label{eq:hat-zi}
\widehat z_i \le (1+\theta_0)z_i^*+(1+\delta_0)z_i
\le
\left(1+\frac1{5r}\right)z_i+(1+\theta_0)z_i^* .
\end{equation}
Fix a threshold \(
T=\max\left(\frac{5r(1+\theta_0)}4,\;2(10eq)^p\right)=O(q)^p
\).
Consider two cases.
\paragraph*{Current cost $z_i$ is small: $z_i \leq T z_i^*$.}
From~\eqref{eq:hat-zi} and $1+\theta_0 \leq \frac{4}{5r}T$,
\[\hat{z}_i \leq \frac65 T z_i^* + \frac{4}{5} Tz_i^* = 2Tz_i^*.\]
Accordingly, $M \leq \varphi'(\hat z_i) = r\hat{z}_i^{r-1} \leq r (2Tz_i^*)^{r-1}$. Also, $m \leq M$.
By \eqref{eq:Delta-small}
\begin{align*}
\Delta_i &\leq \frac{(4^{p}+1)M}{k}(z_i^*+z_i) =
\frac{(4^{p}+1)}{k}\left(M\cdot (z_i^*+ z_i +\frac{m}{M}z_i) - m\cdot z_i\right) \\
&\leq 
\frac{(4^{p}+1)}{k}((2T+1)Mz_i^*-mz_i) \leq \frac{1}{2k}\left(4^{p+2}\cdot r \cdot  (2T)^r\varphi(z_i^*) - \varphi(z_i)\right).
\end{align*}
Letting $\beta = (2\cdot 4^{p+2}\cdot r \cdot (2T)^r)^{1/q} = O(q)$, we get 
\[\Delta_i\leq \frac{1}{4k}\left(\beta^q \varphi(z^*_i) - \varphi(z_i)\right).\]

\paragraph*{Current cost $z_i$ is large: $z_i \geq Tz_i^*$.}
We show that $2\alpha^p \leq T$.
From~\eqref{eq:hat-zi} and $1+\theta_0 \leq \frac{4}{5r}T$, we have
\[\hat z_i \leq \left(1+ \frac{1}{5r} + \frac{1+\theta_0}{T}\right) z_i\leq \left(1+\frac1r\right) z_i.\]
Hence, $M\leq \varphi'(\hat z_i) = r {\hat z}_{i}^{r-1}\leq er z_i^{r-1}$ and $\alpha \leq 10 p\cdot (M/m) \leq 10 p \cdot (er) = 10 eq$. 
Using that $z_i \geq Tz_i^* \geq 2 (10eq)^p z_i^* \geq 2\alpha^p z_i^*$ and $\varphi(z_i) = m z_i$, we get from \eqref{eq:Delta-large},
\[\Delta_i \leq \frac{m}{2k}(\alpha^p z_i^* - z_i) \leq -\frac{m}{2k}\cdot \frac{z_i}{2} = -\frac{1}{4k}\varphi(z_i).\]

\section{Socially fair facility location}
\label{sec:facility}
In this section, we consider two variants of the Socially Fair Facility Location problem and provide approximation algorithms for them.

We are given a set of $n$ points $X$ and a metric $d$ on it, forming a metric space $(X, d)$, as well as a set of possible facility locations $F$. There are $\ell$ groups on $X$ represented by a weight function $w: X \to \mathbb{R}_{\geq 0}^\ell$ where $w_i(u) \geq 0$  denotes the weight of client $u$ in group $i$. Each facility $c \in F$ is associated with a facility opening cost $f(c)$. We consider the following two settings with different types of facility opening costs: (1) group homogeneous facility cost; (2) group heterogeneous facility cost.

\paragraph*{Group homogeneous facility costs.}
Each facility $c\in F$ has a scalar opening cost $f(c)\ge 0$, paid equally across groups. The objective is to choose $C\subseteq F$ minimizing
\[
\mathrm{FL}_{\mathrm{hom}}(w,f,C):= \;:=\;
\left(\sum_{i=1}^{\ell} \cost(w_i,C)^{q/p}\right)^{1/q} + \sum_{c \in C} f(c),
\]
where $\cost(w_i,C) = \sum_{u\in X} w_i(u)\, d(u,C)^p$.

\cite{abbasi2021fair} studied this objective with $p=1$ and $q=\infty$ and gave a $4$-approximation algorithm for it.  
\paragraph*{Group heterogeneous facility costs.}
Each facility $c\in F$ has a vector opening cost $f(c)\in\mathbb{R}_{\ge 0}^{\ell}$, where group $i$ pays $f_i(c)$ to open $c$.
\footnote{We interpret $f_i(c)$ as a group-specific \emph{fixed} charge for opening facility $c$ (e.g., taxes/subsidies or administrative overhead). Thus, group $i$ pays $f_i(c)$ whenever $c$ is opened, regardless of whether any group-$i$ client is assigned to $c$; this keeps opening costs dependent only on the opened set $C$, as in standard fixed-charge facility location.}

In this case, the objective is\footnote{If the heterogeneous opening costs are \emph{group-uniform}, i.e., $f_i(c)=f(c)$ for all $i\in[\ell]$ and $c\in F$, then the objective becomes
$\left(\sum_{i=1}^{\ell}\big(\cost(C,w_i)^{1/p}+\sum_{c\in C} f(c)\big)^q\right)^{1/q}$.
We refer to this as the \emph{group-uniform cost} variant.
This objective is generally \emph{different} from the group homogeneous facility costs above.
The two coincide when $q=\infty$, but they differ for finite $q$.
By using our heterogeneous convex relaxation with $f_i(c)=f(c)$ and applying the rounding scheme used for the homogeneous costs, we obtain an $(8+\varepsilon)$-approximation for this group-uniform cost variant.}

\[
\mathrm{FL}_{\mathrm{het}}(w, f, C)
\;:=\;
\left(\sum_{i=1}^{\ell}
\left(\cost(w_i,C)^{1/p} + \sum_{c\in C} f_i(c)\right)^q
\right)^{1/q}.
\]

We get the following results for homogeneous and heterogeneous facility costs.

\FLscalar*
\FLvector*

\begin{proof}[Proof of Theorem~\ref{thm:fl_scalar}]
    We provide a convex relaxation and rounding algorithm as follows. 
    We use the following convex programming relaxations inspired by~\cite{chlamtavc2022approximating}. We use the convex program below. 
    \begin{align}
    \text{min} \quad & 
    H + \sum_{c \in F} y_c \cdot f(c) \nonumber \\
    \text{s.t.} \quad 
    & \sum_{i=1}^{\ell} z_i \leq H^q \nonumber \\
    & z_i \geq \left( \sum_{u \in X} w_j(u) \sum_{c \in F} d(u, c)^p \cdot x_{u,c} \right)^{q/p} 
    &\quad& \forall i \in [\ell] \label{eq:fl_constraint} \\
    & x_{u,c} \leq y_c &\quad& \forall u \in P,\ \forall c \in F \nonumber \\
    &\sum_{c \in F} x_{u,c} \geq 1 &\quad& \forall u \in P \\
    & 0 \leq y_c \leq 1 &\quad& \forall c \in F \nonumber \\
    & 0 \leq x_{u,c} \leq 1 &\quad& \forall u \in P,\ \forall c \in F \nonumber
    \end{align}

    Note that this program is convex for every fixed $H$. We do not know the optimal $H$, so we try all values of $H$ separated by a multiplicative factor of $(1 + \epsilon)$, and then solve this relaxation for each $H$.

    We then show an algorithm that rounds the fractional solution of the above convex relaxation to an integer solution. 
    Let $(x^*,y^*,z^*, H^*)$ be the fractional solution of the above convex program. We use a rounding algorithm similar to that for the standard facility location problem by~\cite{shmoys1997approximation}. For each point $u \in X$, let $A_u = \sum_{c \in F} d(u,c)^p x^*_{u,c}$ denote the connection cost of point $u$ given by $(x^*,y^*, z^*, H^*)$. Then, we sort all points in $X$ by their connection costs in increasing order. We iteratively perform the following steps until all points are assigned to open facilities. We first pick the unassigned point $u$ with the smallest connection cost $A_u$ as an anchor point. Let $B_u = \{c \in F: d(u,c)^p \leq \alpha\cdot A_u\}$ be the set of all facilities in a ball with radius $(\alpha A_u)^{1/p}$ centered at $u$ for some constant $\alpha > 1$ specified later. Then, we open the facility $c_u$ in $B_u$ with the smallest opening cost. Assign all unassigned points $v$ with distance $d(v,u) \leq (\alpha A_u)^{1/p}+(\alpha A_v)^{1/p}$ to this new facility $c_u$. Let $\widehat{C}$ be the set of centers chosen by this rounding algorithm.

    We first bound the total cost for opening facilities. Since in each iteration, all unassigned points $v$ with distance $d(v,u) \leq (\alpha A_u)^{1/p}+(\alpha A_v)^{1/p}$ are assigned to the new facility, and $A_u$ is the smallest connection cost, the balls with radius $(\alpha A_u)^{1/p}$ centered at $u$ are disjoint across all iterations. 
    Thus, the sets $B_u$ are disjoint. 
    We then show that for any iteration with anchor point $u$
    \[
    \sum_{c \in B_u} x^*_{u,c} \geq \frac{\alpha-1}{\alpha}.
    \]
    Since $d(u,c)^p \geq \alpha A_u$ for $c \not \in B_u$, if $\sum_{c \in B_u} x^*_{u,c} < (\alpha - 1)/\alpha$, then $\sum_{c \not\in B_u} x^*_{u,c} > 1/\alpha$, which implies
    \[\sum_{c\in F} d(u,c)^p x^*_{u,c} = \sum_{c\in B_u} d(u,c)^p x^*_{u,c} + \sum_{c\not\in B_u} d(u,c)^p x^*_{u,c} > A_u,\] which contradicts the definition of $A_u$.
    Since $x^*_{u,c} \leq y^*_c$ for any $u$ and $c$, we have $\sum_{c\in B_u} y^*_c \geq \sum_{c\in B_u} x^*_{u,c} \geq (\alpha-1)/\alpha$.
    Thus, we have the opening cost for the facility $c_u$ chosen in this iteration is at most
    \[
    f(c_u) \leq f(c_u) \cdot \frac{\alpha}{\alpha-1} \sum_{c \in B_u} x^*_{u,c} \leq \frac{\alpha}{\alpha-1} \sum_{c \in B_u} x^*_{u,c} f(c) \leq \frac{\alpha}{\alpha-1} \sum_{c \in B_u} y^*_{c} f(c),
    \]
    where the second inequality is because the facility $c_u$ has the smallest opening cost in $B_u$. Since the sets $B_u$ are disjoint, we have the total opening cost given by the algorithm is at most 
    \[
    \sum_{c\in\widehat{C}} f(c) \leq \frac{\alpha}{\alpha-1} \cdot \sum_{c \in F} y^*_c \cdot f(c).
    \]

    We then bound the connection cost for each point $v \in X$. If $v$ is an anchor point chosen at some iteration, then this point $v$ is assigned to a center $c \in B_v$, which means $d(v,c)^p \leq \alpha \cdot A_v$. If $v$ is not an anchor point, then let $u$ be the anchor point of the iteration when $v$ is assigned to the facility $c_u$. By the triangle inequality, we have
    \[
    d(v,c) \leq d(u,c) + d(u,v) \leq (\alpha A_u)^{1/p} + (\alpha A_u)^{1/p}+(\alpha A_v)^{1/p} \leq 3 (\alpha A_v)^{1/p}.
    \]
    Thus, we have $d(v,c)^p \leq 3^p \alpha A_v$ for every $v \in X$. Then, for each group $i$, the connection cost for this group is at most
    \[
    \cost(w_i, \widehat{C})^{1/p} 
    = \left(\sum_{u\in X} w_i(u) \cdot d(u, \widehat{C})^p\right)^{1/p} \leq \left(3^p \alpha \sum_{u\in X} w_i(u) \cdot A_u \right)^{1/p} \leq 3 \alpha^{1/p} (z_i^*)^{1/q}
    \]
    Therefore, the total connection cost for this clustering is at most $3 \cdot \alpha^{1/p} \cdot H^*$.

    Combining the connection cost and the facility opening cost, the cost of the integral solution given by the rounding algorithm is at most 
    \[
    \max\left\{\frac{\alpha}{\alpha-1}, 3\cdot \alpha^{1/p}\right\} \cdot \mathrm{CP},
    \]
    where $\mathrm{CP}$ is the convex relaxation objective value given by $(x^*,y^*,z^*,H^*)$.
    By taking $\alpha = 4/3$, we have $\alpha/(\alpha-1) = 4$ and $3\cdot \alpha^{1/p} \leq 4$ for $p \geq 1$. Thus, by finding a $(1+\varepsilon/4)$ optimal solution for the convex relaxation, this rounding algorithm achieves a $4+\varepsilon$ approximation.
\end{proof}

We now provide the approximation algorithm for the group heterogeneous setting. 

\begin{proof}[Proof of Theorem~\ref{thm:fl_vector}]
    We use a convex relaxation of the problem and then use a randomized rounding procedure to get the integer solution. We consider the following convex relaxation
    \begin{align}
    \text{min} \quad & 
    H \nonumber\\
    \text{s.t.} \quad 
    & \sum_{i=1}^{\ell} z_i \leq H^q \nonumber \\
    & z_i \geq \left(\sum_{u} w_i(u) \sum_{c \in F} d(u, c)^p \cdot x_{u,c} \right)^{\frac{q}{p}} 
    &\quad& \forall i \in [\ell] \\
    & z_i \geq \left(\sum_{c\in F} y_c \cdot f_i(c)\right)^{q} &\quad& \forall i \in [\ell] \\
    & z_i \geq \sum_{c\in F} y_c \cdot f_i(c)^q
    &\quad& \forall i \in [\ell] \\
    &\sum_{c \in F} x_{u,c} \geq 1 &\quad& \forall u \in P\\
    & x_{u,c} \leq y_c &\quad& \forall u \in P,\ \forall c \in F \nonumber \\
    & 0 \leq y_c \leq 1 &\quad& \forall c \in F \nonumber \\
    & 0 \leq x_{u,c} \leq 1 &\quad& \forall u \in P,\ \forall c \in F \nonumber
    \end{align}

    We then apply the following randomized rounding algorithm inspired by~\cite{kasperski2015approximability,srinivasan1999approximation}. Let $(x^*,y^*, z^*, H^*)$ denote the fractional solution to the convex program above. 
    The rounding procedure proceeds iteratively, similar to the one used in the group homogeneous setting. 
    The key difference is that, instead of opening the cheapest facility in $B_u$, we open exactly one facility $c_u$ in $B_u$ selected randomly with probability proportional to $y^*_c$. 
    For each point $u \in X$, let $A_u = \sum_{c \in F} d(u,c)^p x^*_{u,c}$ denote the connection cost of point $u$ given by the fractional solution. Then, we sort all points in $X$ by their connection costs in increasing order. We iteratively perform the following steps until all points are assigned to open facilities. 
    We first pick the unassigned point $u$ with the smallest connection cost $A_u$ as an anchor point. Let $B_u = \{c \in F: d(u,c)^p \leq 2 A_u\}$ be all facilities in a ball with radius $(2 A_u)^{1/p}$ centered at $u$. 
    Then, we sample exactly one facility $c_u$ in $B_u$ with probability proportional to $y^*_c$.
    Assign all unassigned points $v$ with distance $d(v,u) \leq (2 A_u)^{1/p}+(2 A_v)^{1/p}$ to this new facility $c_u$. Let $\widehat{C}$ be the set of centers sampled by the rounding algorithm.

    We now analyze the cost of this integer solution given by the rounding algorithm. Let $\hat{y}_c \in \{0,1\}$ be the indicator variable to denote whether facility $c$ is selected in $\widehat{C}$. We first analyze the facility opening cost of this solution. With the analysis in Theorem~\ref{thm:fl_scalar}, we know that for each iteration with anchor node $u$, we have $\sum_{c \in B_u} y^*_c \geq 1/2$. Since we open exactly one facility $c_u$ in $B_u$ with probability proportional to $y^*_c$, we have the for each facility $c\in B_u$,
    \[
    \mathbb{E}[\hat{y}_c] = \frac{y^*_c}{\sum_{c\in B_u} y^*_c} \leq 2 y^*_c.
    \]
    Consider any fixed group $i$. For each iteration with anchor point $u$, we define a random variable $Y_u = \sum_{c\in B_u}\hat{y}_c f_i(c)$ which represents the contribution to the facility cost of group $i$ from that iteration. Since each facility is chosen independently in each iteration, the collection $\{Y_u\}$ forms a set of independent non-negative random variables. By the Rosenthal inequality (see ~\cite{johnson1985best}), we have 
    \[
    \mathbb{E}\left[\left(\sum_u Y_u\right)^q\right]^{1/q} \leq O\left(\frac{q}{\log q}\right) \max\left\{\left(\sum_u \mathbb{E}[Y_u^q]\right)^{1/q}, \sum_u \mathbb{E}[Y_u]\right\}.
    \]
    Note that in each iteration, exactly one facility $c_u$ in $B_u$ is opened, which means exactly one random variable $\hat{y}_{c_u} = 1$ and all other $\hat{y}_c = 0$ in $B_u$. Thus, we have 
    \[
    \mathbb{E}[Y_u^q] = \mathbb{E}\left[\left(\sum_{c \in B_u} \hat{y}_c\cdot f_i(c)\right)^q\right] = \mathbb{E}\left[\sum_{c \in B_u} \hat{y}_c\cdot f_i(c)^q\right] \leq 2 \sum_{c \in B_u} y^*_c \cdot f_i(c)^q.
    \]
    Hence, we have $\sum_u \mathbb{E}[Y_u^q] \leq 2 \sum_{c \in \bigcup_u B_u} y^*_c \cdot f_i(c)^q \leq 2 z^*_i$. We also have 
    \[
    \sum_u \mathbb{E}[Y_u] = \sum_{c \in \bigcup_u B_u} f_i(c) \mathbb{E}[\hat{y}_c] \leq 2 \sum_{c \in \bigcup_u B_u}f_i(c) y^*_c.
    \]
    Therefore, we have 
    \[
    \mathbb{E}\left[\left(\sum_{c \in \bigcup_u B_u} \hat{y}_c f_i(c)\right)^q\right] \leq O\left(\left(\frac{2q}{\log q}\right)^q\right) z^*_i.
    \]
    By taking the sum over all groups and applying Markov's inequality, we have with probability at least $1/2$,
    \[
    \sum_{i=1}^\ell \left(\sum_{c \in \bigcup_u B_u} \hat{y}_c f_i(c)\right)^q \leq O\left(\left(\frac{2q}{\log q}\right)^q\right) \sum_{i=1}^\ell z^*_i \leq O\left(\left(\frac{2q}{\log q}\right)^q\right) (H^*)^q
    \]

    We then analyze the connection cost of this integer solution. Since the algorithm picks exactly one center from each ball $B_u$, by the analysis in Theorem~\ref{thm:fl_scalar}, we have the connection cost for each group $i$ given by $\widehat{C}$ is at most 
    \[
    \cost(w_i,\widehat{C})^{1/p} 
    = \left(\sum_{u\in X} w_i(u) \cdot d(u, \widehat{C})^p\right)^{1/p} \leq \left(3^p \cdot 2 \sum_{u\in X} w_i(u) \cdot A_u \right)^{1/p} \leq 6 (z_i^*)^{1/q}.
    \]
    By combining with the facility opening cost, we have the facility location cost is at most
    \begin{align*}
    &\left(\sum_{i=1}^\ell \left(\cost(w_i,\widehat{C})^{1/p} + \sum_{c \in \widehat{C}} f_i(c)\right)^q \right)^{1/q} 
    \\
    \leq&  \left(\sum_{i=1}^\ell 2^q\cost(w_i, \widehat{C})^{q/p} + 2^q\left(\sum_{c \in \widehat{C}} f_i(c) \right)^q \right)^{1/q} 
    \leq O\left(\frac{q}{\log q}\right) \cdot H^*.
    \end{align*}
    Therefore, by computing a $(1+\varepsilon)$ approximate solution to the convex relaxation, this randomized rounding provides the desired approximation with probability at least $1/2$. 
\end{proof}

We then present hardness results for the Socially Fair Facility Location problem with group-heterogeneous facility costs and $\ell_1$-Clustering cost.
Specifically, we show that this problem can not be approximated within $O(\log^{1-\varepsilon} \ell)$ for any $\varepsilon > 0$ unless $NP \subseteq DTIME(n^{\mathrm{poly}(\log n)})$. 
For this problem, by replacing the convex relaxation used in Theorem~\ref{thm:fl_vector} by a linear programming relaxation, we achieve the same approximation factor of $O(\log\ell / \log\log \ell)$. 
We further show that this LP relaxation has an integrality gap of $\Omega(\log\ell / \log\log \ell)$. 
Both results are established via a reduction from the min-max representation selection problem studied in~\cite{kasperski2015approximability}. In the min-max representation selection problem, we are given $n$ tools partitioned into $k$ disjoint subsets, where each tool is associated with a cost vector in $\mathbb{R}^\ell$. The goal is to select one tool from each subset to minimize the $\ell_\infty$ norm of the sum of the selected tools’ cost vectors.

\begin{theorem}\label{thm:fl_vector_hardness}
    For the Socially Fair Facility Location with group heterogeneous facility cost and $\ell_1$-Clustering cost, there is no $O(\log^{1-\varepsilon} \ell)$ approximation algorithm for any $\varepsilon > 0$ unless $NP \subseteq DTIME(n^{\mathrm{poly}(\log n)})$.
    Moreover, the linear programming relaxation has an $\Omega(\log\ell / \log\log \ell)$ integrality gap.
\end{theorem}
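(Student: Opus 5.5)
The plan is to give an approximation-preserving reduction from the min-max representation selection problem of \cite{kasperski2015approximability}. For that problem they prove that no $O(\log^{1-\varepsilon}\ell)$-approximation exists unless $NP\subseteq DTIME(n^{\mathrm{poly}(\log n)})$, and that its natural LP has integrality gap $\Omega(\log \ell/\log\log \ell)$. Both statements of the theorem should then transfer through a single construction.

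\textbf{Construction.} Take an instance with tools partitioned into $k$ disjoint subsets $S_1,\dots,S_k$, where each tool $e$ has a cost vector $g(e)\in\mathbb{R}^\ell_{\ge 0}$.
\begin{itemize}
\item For every tool $e$ we create a facility $c_e$ with heterogeneous opening cost $f_i(c_e)=g_i(e)$.
\item For every subset $S_j$ we create one client $u_j$, and place all facilities $c_e$ with $e\in S_j$ at distance $0$ from $u_j$.
\item Distances between different subsets are a huge value $L$, closed under the shortest-path metric.
\item Each client gets weight $w_i(u_j)=W$ for all groups $i$, with $W$ large, say $W\cdot L$ exceeding $k$ times the total tool cost.
\end{itemize}
Set $p=1$ and $q=\infty$, i.e.\ the $\ell_1$-clustering cost with max over groups.

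\textbf{Correctness of the reduction.}
\begin{itemize}
\item Any solution opening at least one facility in every $S_j$ has connection cost $0$. Its objective is $\max_i\sum_{c\in C} f_i(c)$.
\item Removing extra facilities in a subset only decreases the cost, since opening costs are nonnegative. So we may keep exactly one facility per subset, which is precisely a selection, with identical objective.
\item A solution missing some $S_j$ pays at least $W L$ in every group, which exceeds the cost of any selection. So we can replace it by an arbitrary selection without loss.
\end{itemize}
Hence the optima coincide, and any $\rho$-approximate facility-location solution maps in polynomial time to a $\rho$-approximate selection. Since $\ell$ is preserved and the size blowup is polynomial, the hardness transfers directly.

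\textbf{Integrality gap.} I would write the LP analogue of the relaxation in Theorem~\ref{thm:fl_vector} for $q=\infty$: minimize $H$ subject to $H\ge \sum_u w_i(u)\sum_c d(u,c)x_{u,c}+\sum_c y_c f_i(c)$ for all $i$, with $\sum_c x_{u,c}\ge 1$ and $x_{u,c}\le y_c$. On the reduced instance I would show that any fractional solution can be pushed, without increasing the objective, to one with $x_{u_j,c}$ supported on $S_j$ and $\sum_{e\in S_j}y_{c_e}\ge 1$. Mass placed outside $S_j$ costs $WL$ per unit, so moving it inside $S_j$ only helps. What remains is exactly the representation-selection LP $\min\max_i\sum_e y_e g_i(e)$ subject to $\sum_{e\in S_j}y_e=1$. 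Thus the gap instances of \cite{kasperski2015approximability} give gap $\Omega(\log\ell/\log\log\ell)$.

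\textbf{Main obstacle.} The step needing most care is this LP-equivalence argument. The LP could use partial mass outside $S_j$ together with fractional $y$ values, and I must verify that the large penalty $WL$ dominates, so the LP value is not artificially lowered. A cleaner alternative I would try first is simply setting $d=\infty$ across subsets, i.e.\ forbidding $x_{u_j,c}$ for $c\notin S_j$. This makes the equivalence immediate, with the metric only needed on allowed pairs.
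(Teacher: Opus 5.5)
Your proposal is correct and follows the paper's reduction essentially exactly: facilities are co-located with the clients of each subset, their opening-cost vectors are the tool vectors, locations are far apart, and any solution missing a location costs more than every selection. The differences are cosmetic (one heavy client per subset instead of one unit-weight client per tool), and your replace-by-any-selection argument works for every ratio $\rho$, not only $\rho<\ell$. For the integrality gap, which the paper's proof does not spell out, you need only the easy direction: setting $y_{c_e}=x_{u_j,c_e}=y_e$ shows that the facility-location LP value is at most the selection LP value, while the integral optima coincide. So the ``pushing'' argument you flag as the main obstacle is valid but unnecessary.
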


\begin{proof}
    We consider the following reduction. 
    Consider any instance of the min-max representation selection problem with $n$ tools, $k$ disjoint subsets $T_1,\cdots, T_k$, and cost vector $f(i) \in \mathbb{R}^\ell$ for each tool $i$.
    Without loss of generality, we assume that the cost vector for each tool is bounded in $[0,1]^\ell$. Otherwise, we can scale the cost vector to satisfy this.  
    We construct an instance for the Socially Fair Facility Location problem as follows. 
    We define a metric space consisting of $k$ points $u_1,\cdots, u_k$ such that every pair of points is placed far apart, $d(u_i,u_j) \geq n\ell$. 
    The client set $X$ consists of $n$ points with $|T_i|$ clients located at $u_i$ for each $i \in [k]$. 
    Then, each client corresponds to a tool.
    There are $\ell$ groups, each assigning weight $w_i(u) = 1$ to every client $u$.
    For each client point in $X$, we place one facility at the same location. The facility opening cost is set to match the cost vector associated with the corresponding tool.

    We now argue the correctness of the reduction.
    First, observe that any solution achieving better than an $\ell$ approximation must open exactly one facility at each location $u_1,\cdots, u_k$.
    Suppose, for contradiction, that no facility is opened at some point $u_i$.
    Then the clients at $u_i$ would be assigned to facilities at a distance of at least $n\ell$, resulting in a connection cost of at least $n\ell$ for every group.
    In contrast, any solution which opens one facility at each $u_i$ incurs a total cost of at most $n$ - the facility cost is at most $n$ for each group, and the connection cost is $0$.
    Therefore, this solution incurs a total cost worse than an $\ell$ approximation.
    Hence, any solution with an approximation factor better than $\ell$ must open one facility at each point $u_i$, leading to zero connection cost for every group.
    In this case, the total cost equals the facility opening cost, which exactly matches the objective value of the corresponding min-max representation selection instance.
\end{proof}

\bibliographystyle{plainnat}
\bibliography{references}

\clearpage
\appendix

 \section{\texorpdfstring{$\ell_1$}{l-1}-Clustering: 3-approximation with 
 \texorpdfstring{$k + f(\ell)$}{additional} centers}
 \label{sec:extra-centers}
 In this section, we show a constant approximation for $\ell_1$-Clustering with $k+f(\ell)$ centers using local search.

 \begin{theorem}\label{thm:localsearchconstant}
 Given an instance of $\ell_1$-Clustering, for every constant $0 < \varepsilon' < \frac{1}{10}$ there is an algorithm that runs in $g(\ell,\frac{1}{\varepsilon'})n^{O\left(\frac{1}{\varepsilon'}\right)}$ time, opens $k + f(\ell, \frac{1}{\varepsilon'})$ centers and gives a $(3 + 2\varepsilon')$ approximation. 
  \end{theorem}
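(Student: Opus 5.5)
Our plan is a multi-swap local search on $k+t$ centers, with $t=f(\ell,1/\varepsilon')$, combined with the Arya et al.\ analysis applied to one aggregated group. Fix an integer $s=\lceil 1/\varepsilon'\rceil$. The algorithm keeps a set $C$ of $k+t$ centers. It performs any swap that closes at most $s$ centers and opens at most $s$ new ones, provided that swap lowers the $\ell_1$-objective $\max_i \cost(w_i,C)$ by a factor of at least $(1-\varepsilon'/\mathrm{poly}(n))$. Each step costs $n^{O(s)}$, and the number of steps is polynomial. At a local optimum we want $\max_i\cost(w_i,C)\le(3+2\varepsilon')\OPT$. The extra $t$ centers are there so that local optimality becomes an \emph{additive} statement about a single linear objective.

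\textbf{Step 1: linearize the max using the slack.} Suppose $C$ is locally optimal but $\max_i\cost(w_i,C)>(3+2\varepsilon')\OPT$. Let $\lambda$ be a probability vector on $[\ell]$ that is optimal for the minimax dual over the current configuration. Then $\max_i\cost(w_i,C)=\sum_i\lambda_i\cost(w_i,C)$, and the aggregated weight $w_\lambda=\sum_i\lambda_iw_i$ is a single-group $k$-median weighting with $\cost(w_\lambda,C^*)\le\OPT$. The difficulty is that a swap improving $\cost(w_\lambda,\cdot)$ need not lower the max, because a coordinate with $\lambda_i=0$ can rise. We handle this with the extra centers.

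First we take a swap distribution $\mathcal{D}$ in the style of Arya et al.\ with swaps of size at most $s$. It satisfies, for \emph{every} weighting $w$,
\[
\E{\cost_{\text{swap}}(w)-\cost(w,C)}\le \tfrac{1}{k}\bigl((3+2/s)\cost(w,C^*)-\cost(w,C)\bigr).
\]
Because this inequality is linear in $w$, it holds simultaneously for all $\ell$ groups. So for every $i$ the expected change is at most $\frac1k((3+\varepsilon')\OPT-\cost(w_i,C))$.

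\textbf{Step 2: from expectation to a single swap.} A swap that is good in expectation for each coordinate separately may still fail to lower the max. We use the $t$ spare centers to fix this. By Carathéodory/LP-basis arguments, the set of distributions satisfying the $\ell$ linear constraints $\E{\Delta_i}\le\frac1k(\ldots)$ has a vertex supported on at most $\ell+1$ swaps. We then \emph{combine} these $\le\ell+1$ swaps, each of size $\le s$, into one composite move. It opens all their $\le s(\ell+1)$ new centers and closes a correspondingly chosen set of old ones, and the $t=s(\ell+1)$ surplus absorbs the mismatch. Closing in this move follows only the high-probability swaps, scaled by repeating swaps $O(1/\varepsilon')$ times, which is where $f(\ell,1/\varepsilon')$ comes from. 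Alternatively, we scale the distribution so that every coordinate with $\cost(w_i,C)$ near the max strictly decreases, while coordinates far below the max have room to grow by Theorem-style bound \eqref{eq:basic-local-search-3}.

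\textbf{Main obstacle.} We expect Step 2 to be the crux. We must turn an expected per-coordinate improvement into one deterministic move of bounded size that decreases \emph{every} near-maximal coordinate simultaneously, without coordinates below the max overshooting it. We would first try to discretize the coordinates into the near-max ones and the rest. For the near-max coordinates we would apply the sparse-support argument; for the rest we would use the additive bound $(1+\theta_0)\OPT+(1+\delta_0)\cost$ to control them. We would then tune $\delta_0$ against $\varepsilon'$, so that the extra centers are $f(\ell,1/\varepsilon')$ and the final ratio is $3+2\varepsilon'$.
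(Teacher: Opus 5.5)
There is a genuine gap, and it is where you expect it: Step~2 is not an argument, and neither of the routes you suggest works.

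\textbf{Why Step~2 fails.} Executing the $\le \ell+1$ swaps $S_1,\dots,S_{\ell+1}$ of a sparse vertex distribution simultaneously does not realize their $\pi$-weighted average effect, for two reasons.
\begin{itemize}
\item The Arya-style bound for $S_j$ reroutes clients of a closed center to some $\gamma(c^*)$. Another swap $S_{j'}$ may close that very center.
\item Even if the effects were additive, you would get $\sum_j \Delta_i(S_j)$ rather than $\sum_j \pi_j\Delta_i(S_j)$. This can be positive for a near-maximal coordinate, since a low-probability swap that hurts coordinate $i$ now counts with weight $1$.
\end{itemize}
``Repeating a swap $O(1/\varepsilon')$ times'' is meaningless once the swap has been executed.

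If the composite move instead only opens the new centers, every near-maximal coordinate does drop weakly. But each such move spends up to $s(\ell+1)$ extra centers, and you have no lower bound on the gain. So nothing bounds the number of such moves, and your invariant of exactly $k+t$ centers breaks after the first one.

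The alternative via \eqref{eq:basic-local-search-3} also fails. For $p=1$ it only gives $\cost_{ab}(w_i)\le 3\cost^*(w_i)+(1+\delta_0)\cost(w_i)$, an additive jump of about $3\OPT$, so coordinates below the max can overshoot it. Underlying all of this, local optimality for the non-smooth objective $\max_i \cost(w_i,C)$ does not give you a single improving swap. Your dual vector $\lambda$ lives on the argmax coordinates and says nothing about the others.

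\textbf{What the paper does instead.} Its proof supplies three ingredients you are missing.
\begin{itemize}
\item It normalizes $\OPT=1$ and replaces the max by the smooth potential $\Phi=\sum_i z_i^q$ with $q=\Theta(\varepsilon'^{-2}\log\ell)$. Then $\ell^{1/(q-1)}\approx 1$, so small coordinates cannot dominate.
\item It linearizes $\Phi$ with slope $q(z_i-\varepsilon)_+^{q-1}$ on decreasing coordinates and $q(z_i+\varepsilon)^{q-1}$ on increasing ones, with $\varepsilon=1/(20tq)$. These slopes differ by a factor $1+O(1/t)$ when $z_i\ge 3$. Applying the Gupta--Tangwongsan $t$-swap distribution (ratio $3(1+1/t)$) per group and summing yields a single $t$-swap $S$ with linearized change $\Delta(S)\le -\varepsilon^q/k$ whenever some $z_i\ge 3(1+2\varepsilon')$.
\item It splits into two cases. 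If every coordinate moves by at most $\varepsilon$ under $S$, the mean value theorem shows $\Phi$ truly drops by $\varepsilon^q/k$, and $S$ is executed with the number of centers unchanged. Otherwise, opening $S\setminus C$ without closing anything drops $\Phi$ by at least $\varepsilon^q$. When the large move is an increase, $\Delta(S)<0$ forces the decreasing coordinates to carry a linearized drop of at least $q\varepsilon^q$, which opening realizes.
\end{itemize}

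Starting from the $O(\log\ell)$-approximation of \Cref{thm:local_search} gives $\Phi_0=O(\ell\log^q\ell)$. Hence at most $t\Phi_0/\varepsilon^q=f(\ell,1/\varepsilon')$ centers are ever added. The essential idea absent from your plan is that extra centers are spent only when they buy a fixed, quantified decrease of a bounded potential.
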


 As in our previous local search algorithm in~Section~\ref{sec:local-search}, we maintain a solution $C$ throughout the course of the algorithm. Given the instance of $\ell_1$-clustering, let $z_i = \cost(w_i) = \sum_{u \in X} w_i(u) d(u,C)$ for $i \in [\ell]$. Let $z_i^* = \cost^*(w_i) = \sum_{u \in X} w_i(u) d(u,C^*)$, where $C^*$ is an optimal solution, and $z_i^*$ is the optimal cost for weight $i \in [\ell]$.

 Since we are dealing with an instance of $\ell_1$-clustering, without loss of generality, let us fix $\OPT = \max_{i \in [\ell]} z_i^* = 1$ by scaling distances. First, we start with the definition of $t$-swaps.

\begin{definition}
Given a set $C$ of current centers, a $t$-swap is a set $S$ satisfying $|S \setminus C| = |S \cap C| \leq t$ which produces a new set $C' = (C \setminus S) \cup (S \setminus  C)$. In other words, at most $t$ centers of $C$ are swapped out for centers outside. Note that after the $t$-swap, the number of centers in $C$ remains unchanged.

\end{definition}

 We start with a simple lemma about $t$-swap local search closely following our analysis in~\Cref{thm:basic-local-search}. The difference is that \Cref{thm:basic-local-search} relies on single swaps ($t = 1$), while we use $t$-swaps with $t > 1$ in this section to obtain tighter bounds, and in particular, the approximation ratio of $3$. Let $\cost_S(w)$ denote the new cost with respect to weights $w$ after applying a $t$-swap $S$.

 \begin{lemma}\label{lemma:basic-local-search-multiswap}

Let \((X,d)\) be a metric space. Given weights \(w\), the sets \(C,C^*\) as defined above,
parameters \(M\ge m\ge 0\), and coefficients \(\xi_S\in[m,M]\), there exists a probability distribution
\(\mathcal D\) over \(t\)-swaps \(S\) such that
\[
\mathbb E_{S\sim\mathcal D}\!\left[\xi_S\cdot(\cost_S(w)-\cost(w))\right]
\le
\frac{3M(1+1/t)\cost^*(w)-m\cost(w)}{k}.
\] 
 \end{lemma}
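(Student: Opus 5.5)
We adapt the multi-swap analysis of \cite{Arya} (for $p=1$) to the weighted framework used in the proof of Lemma~\ref{thm:basic-local-search}. As there, let $\gamma(c^*)$ be the nearest center of $C$ to $c^*\in C^*$. For $c\in C$ let $\gamma^{-1}(c)$ be its preimages. We build the standard Arya et al.\ partition: group the centers of $C$ with at least one preimage together with their preimages, and pad each group with centers of $C$ having no preimage. This yields pairs $(A_j,B_j)$ with $A_j\subseteq C$, $B_j\subseteq C^*$, $|A_j|=|B_j|$, the $A_j$ partitioning $C$ and the $B_j$ partitioning $C^*$. Moreover, $\gamma(B_j)\subseteq A_j$, and every center of $A_j$ outside $\gamma(B_j)$ has no preimage.

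\textbf{Distribution $\mathcal D$.} If $|A_j|\le t$, the swap $A_j\to B_j$ gets weight $1$. If $|A_j|>t$, then $A_j$ contains at least $|A_j|-1$ centers with no preimage. For each of these centers $a$ and each $b\in B_j$, the single swap $a\to b$ gets weight $1/(|A_j|-1)$. Every $c_b^*$ then receives total weight exactly $1$, and every $c_a$ total weight at most $(|A_j|)/(|A_j|-1)\le 1+1/t$ in large groups and $1$ in small groups. The total weight is $k$, so dividing by $k$ gives a probability distribution $\mathcal D$ with $\Pr[u\in V_b^*\text{ swapped in}]=1/k$ and $\Pr[c_a\text{ swapped out}]\le (1+1/t)/k$.

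\textbf{Cost of one swap.} Fix a swap $S$ closing $A\subseteq C$ and opening $B\subseteq C^*$. Clients $u\in V^*_b$ with $b\in B$ are moved to $c_b^*$, changing their cost by $x_u^*-x_u$. Clients $u\in V_a$ with $a\in A$ and $u\in V_t^*$, $c_t^*\notin B$, are moved to $\gamma(c_t^*)$. This center is still open: every closed $a$ is either a $\gamma$-image of centers in $B$ only, or has no preimage (this is the analogue of~\eqref{eq:no-hit}). By the triangle-inequality chain from the proof of Lemma~\ref{thm:basic-local-search} with $p=1$, their change is at most $2x_u^*$. All other clients do not increase. Thus
\[
\cost_S(w)-\cost(w)\le \sum_{b\in B}\sum_{u\in V_b^*}w_u(x_u^*-x_u)+\sum_{a\in A}\sum_{u\in V_a}2w_ux_u^*.
\]
Multiplying by $\xi_S$, we use $M$ on the nonnegative terms and $m$ on the negative terms. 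For this, split the first sum as $w_ux_u^*-w_ux_u$; there is no $\delta$-term since $p=1$.

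\textbf{Taking expectation.} Averaging gives
\[
k\,\mathbb E[\xi_S(\cost_S-\cost)]\le M\cost^*(w)-m\cost(w)+2M(1+1/t)\cost^*(w),
\]
which is at most $3M(1+1/t)\cost^*(w)-m\cost(w)$.

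The main obstacle is the bookkeeping of the partition and weights. We must ensure every $b$ is covered with weight exactly $1$ while keeping every $a$ at weight at most $1+1/t$. We also need the "no-hit" property in the group swaps: when $A_j$ is closed, each $c_t^*\notin B_j$ has $\gamma(c_t^*)\notin A_j$. This holds because $\gamma(B_j)\subseteq A_j$ exhausts the preimages of the $\gamma$-images in $A_j$. The same property must hold for single swaps, which only close preimage-free centers.
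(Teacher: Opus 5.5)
You follow the same route as the paper, just unpacked. The paper gets the distribution by citing the Gupta--Tangwongsan lemma (Lemma~\ref{lemma:multiswap}), whose content is exactly your Arya et al.\ grouping, and then uses the same per-swap bound and the same replacement of $\xi_S$ by $M$ on positive terms and $m$ on negative terms. Your per-swap bound, the no-hit argument, and the weight counts per center are correct: each $c_b^*$ gets weight exactly $1$, and each closed $c_a$ gets at most $|A_j|/(|A_j|-1)\le 1+1/t$.

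The step that fails is the normalization: the total weight is not $k$. A group swap $A_j\to B_j$ with $|A_j|\le t$ has weight $1$ but opens $|B_j|$ optimal centers. So the total weight is $W=\#\{j:|A_j|\le t\}+\sum_{j:|A_j|>t}|B_j|$, which is strictly smaller than $k$ whenever some group has $2\le |A_j|\le t$. For instance, $k=3$, $\gamma(c_1^*)=\gamma(c_2^*)=c_1$, $\gamma(c_3^*)=c_2$, $t\ge 2$ gives $W=2$. So dividing by $k$ does not give a probability distribution. Dividing by $W$ instead gives $\Pr[c_b^*\in S]=1/W>1/k$, so the bound comes out with $1/W$ in place of $1/k$, which is weaker than claimed whenever $3M(1+1/t)\cost^*(w)-m\cost(w)>0$. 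In fact, no distribution supported on nonempty swaps that uses a genuine multi-swap can have all marginals $\Pr[c^*\in S]=1/k$: these marginals sum to $\mathbb{E}\,|S\cap C^*|$, which would then exceed $1$.

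The repair is one line. Each $c^*$ has total weight $1$ and each swap in your family opens at least one optimal center, so $W\le k$. Put the remaining mass $(k-W)/k$ on the empty swap $S=\emptyset$. It is a legal $t$-swap ($|S\setminus C|=|S\cap C|=0$) and contributes $0$ to $\xi_S(\cost_S(w)-\cost(w))$. With this padding the marginals are exactly $1/k$ and at most $(1+1/t)/k$, and the rest of your argument goes through verbatim. The paper's cited lemma implicitly relies on the same null-swap padding.

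A smaller point: in the algorithm of Theorem~\ref{thm:localsearchconstant}, $C$ can have more than $k$ centers once extra centers are opened, and Lemma~\ref{lemma:multiswap} allows $|C|\ge |C^*|=k$. So your $A_j$ need not partition $C$. There are still enough preimage-free centers for the padding, and the leftover ones are simply never closed, which is harmless.
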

 \begin{proof}

 The lemma follows from an easy application of the following lemma which follows directly from~\cite[Lemma~2.7]{GT}.

 \begin{lemma}\label{lemma:multiswap}
 Given a set of centers $C$, a corresponding optimal assignment $\beta: X \rightarrow C$, an optimal set of centers $C^*$ with corresponding optimal assignment $\beta^*: X \rightarrow C^*$ such that $|C| \geq |C^*| = k$, there is a distribution $\mathcal{D}$ over $t$-swaps $S$, such that the following hold.
 \begin{enumerate}
 \item $S \subseteq C \cup C^*$.
 \item $\Pr_{S \sim \mathcal{D}}[c \in S] \leq \frac{1 + 1/t}{k}$ for every $c \in C$.
 \item $\Pr_{S \sim \mathcal{D}}[c^* \in S] = \frac{1}{k}$ for every $c^* \in C^*$.
 \item $\cost_S(w) - \cost(w) \leq \sum_{u: \beta(u) \in S \cap C} 2w(u)x_u^* - \sum_{u: \beta^*(u) \in S \cap C^*} w(u) (x_u - x_u^*)$ for every $S \in \mathrm{supp}(\mathcal{D})$, where $x_u^* = d(u,C^*)$ and $x_u = d(u,C)$ for each $u \in X$.

 \end{enumerate}

 \end{lemma}

 We use the same distribution $\mathcal{D}$ as given by this lemma. Now we have

 \begin{align*}
\Exp_{S\sim\mathcal{D}}[\xi_{S}(\cost_S(w)-\cost(w))]
&\leq M\sum_{S:\,\cost_S(w)\ge\cost(w)}
   \Pr[S]\bigl[\cost_S(w)-\cost(w)\bigr]\\
&\quad +m\sum_{S:\,\cost_S(w)\le\cost(w)}
   \Pr[S]\bigl[\cost_S(w)-\cost(w)\bigr]\\
&\leq \sum_{u\in X}\frac{3M(1+1/t)}{k}\,w(u)x_u^*
   \;-\;\sum_{u\in X}\frac{m}{k}\,w(u)x_u\\
\end{align*}
where in the last inequality we use all properties in~\Cref{lemma:multiswap}, together with the fact that for each $u \in X$, $\beta(u)$ and $\beta^*(u)$ map to exactly one center of $C$ and $C^*$ respectively, and multiply all positive terms by $M$, and negative terms by $m$. The right hand side is then at most
$$\frac{3M(1+1/t)\,\cost^*(w)\;-\;m\,\cost(w)}{k}$$
as desired. This concludes the proof.
\end{proof}

As in~Section~\ref{sec:local-search}, we define a potential function:
 \[\Phi = \sum_{i \in [\ell]} z_i^q,\]
 where $q \approx \log \ell$ will be determined from the analysis.

Fix a $t$-swap $S$, a parameter $\varepsilon$ to be determined later, and suppose the cost of weight $i$ changes from $z_i$ to $z_i'$. For a real number \(a\), write \(a_+ = \max\{a,0\}\).

We define the linearized potential change

\[
\Delta(S) =
\sum_{i:z_i\ge z'_i} q(z_i-\varepsilon)_+^{q-1}(z'_i-z_i)
+
\sum_{i:z_i\le z'_i} q(z_i+\varepsilon)^{q-1}(z'_i-z_i).
\]
Intuitively, $\Delta(S)$ upper bounds the change in $\Phi$ for swap $S$ assuming that $z_i' \in [z_i - \varepsilon, z_i + \varepsilon]$. Note that $\Delta(S)$ linearly depends on $z_i'-z_i$. The next lemma shows that if our current solution is not a $3$-approximation, then there must exist a distribution over swaps such that the linearized potential change drops in expectation (and in particular, that there is some swap $S$ for which $\Delta(S)$ < 0). 

 \begin{lemma}\label{lemma:lpcreduce}
 For any $0 < \varepsilon' < \frac{1}{10}$ and for $t \geq \frac{1000}{\varepsilon'}$, $q \geq 1 + \frac{100}{\varepsilon'^2}{\log \ell}$, $\varepsilon = \frac{1}{20tq}$ if there exists $i \in [\ell]$ such that $z_i \geq 3(1 + 2\varepsilon')$, then there exists a distribution $\mathcal{D}$ over $t$-swaps $S$, such that $\mathbb{E}_{S\sim\mathcal{D}}[\Delta(S)] \leq \frac{-\varepsilon^q}{k}$.
 \end{lemma}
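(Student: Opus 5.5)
The plan is to reuse the distribution $\mathcal{D}$ from \Cref{lemma:multiswap} and apply \Cref{lemma:basic-local-search-multiswap} coordinate by coordinate. The key observation is that the distribution in \Cref{lemma:multiswap} depends only on $C$, $C^*$, $\beta$ and $\beta^*$, and not on the weight function. Hence a single $\mathcal{D}$ serves all groups $i\in[\ell]$ simultaneously.

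For each $i$, I set $m_i = q(z_i-\varepsilon)_+^{q-1}$ and $M_i = q(z_i+\varepsilon)^{q-1}$, so that $0\le m_i\le M_i$. I then define $\xi^{(i)}_S = m_i$ if $z_i'\le z_i$ and $\xi^{(i)}_S = M_i$ otherwise. With these coefficients, $\Delta(S)=\sum_i \xi^{(i)}_S (z_i'-z_i)$ exactly, by the definition of $\Delta$. Linearity of expectation and \Cref{lemma:basic-local-search-multiswap} applied with $w=w_i$, together with $z_i^*\le \OPT = 1$, then give
\[
k\,\mathbb{E}_{S\sim\mathcal{D}}[\Delta(S)] \le \sum_{i=1}^{\ell} q\Bigl(3(1+1/t)(z_i+\varepsilon)^{q-1} - (z_i-\varepsilon)_+^{q-1} z_i\Bigr) =: \sum_{i=1}^{\ell} T_i .
\]
It therefore remains to show $\sum_i T_i \le -\varepsilon^q$. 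This is a purely one-dimensional estimate.

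\textbf{Large coordinates.} I split the coordinates according to whether $z_i \ge 3(1+\varepsilon')$ (large) or $z_i < 3(1+\varepsilon')$ (small). For a large $z_i$, I factor out $(z_i-\varepsilon)^{q-1}$ and bound the ratio
\[
\Bigl(\tfrac{z_i+\varepsilon}{z_i-\varepsilon}\Bigr)^{q-1} \le \exp\Bigl(\tfrac{2\varepsilon(q-1)}{z_i-\varepsilon}\Bigr) \le e^{1/(25t)},
\]
using $\varepsilon = 1/(20tq)$ and $z_i\ge 3$. Since $t\ge 1000/\varepsilon'$, the factor $3(1+1/t)e^{1/(25t)}$ is at most $3(1+\varepsilon'/2)$. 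This yields
\[
T_i \le -q(z_i-\varepsilon)^{q-1}\bigl(z_i - 3(1+\varepsilon'/2)\bigr) \le 0 .
\]
For the coordinate with $z_{i_0}\ge 3(1+2\varepsilon')$, which exists by hypothesis, this sharpens to $T_{i_0} \le -q\,(3(1+2\varepsilon')-\varepsilon)^{q-1}\cdot \varepsilon'$ (up to constants).

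\textbf{Small coordinates.} For a small $z_i$, I drop the negative part and use $T_i \le 3(1+1/t)\,q\,(3(1+\varepsilon')+\varepsilon)^{q-1}$. There are at most $\ell$ such coordinates.

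\textbf{Comparison.} Comparing the small-coordinate total to $|T_{i_0}|$ gives the ratio
\[
O\bigl(\ell/\varepsilon'\bigr)\cdot\Bigl(\tfrac{1+\varepsilon'+\varepsilon}{1+2\varepsilon'-\varepsilon}\Bigr)^{q-1} \le O\bigl(\ell/\varepsilon'\bigr)\, e^{-\varepsilon'(q-1)/3}.
\]
By the choice $q-1\ge 100\log\ell/\varepsilon'^2$, this is at most $O(1/\varepsilon')\,\ell^{1-33/\varepsilon'}$, which is tiny (the case $\ell=1$ has no competing small coordinates besides possibly none, so it is trivial). Hence $\sum_i T_i \le -\tfrac12 |T_{i_0}| \le -\Omega(\varepsilon' q\,3^{q-1})$, which is far below $-\varepsilon^q$ because $\varepsilon<1$.

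\textbf{Main obstacle.} The step I expect to be the main obstacle is this final exponential comparison. One must check that the slack $\varepsilon'$ between the thresholds $3(1+\varepsilon')$ and $3(1+2\varepsilon')$ survives the perturbations $\pm\varepsilon$ and the factor $(1+1/t)$. It must also beat the factor $\ell$ through the choice of $q$. I would handle this by taking logarithms and using $\ln\frac{1+2\varepsilon'}{1+\varepsilon'} \ge \varepsilon'/2$ for $\varepsilon'<1/10$.
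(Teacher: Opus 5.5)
Your proposal is correct and follows essentially the paper's argument: a single weight-independent distribution from the multiswap lemma, per-group linearization with $m_i=q(z_i-\varepsilon)_+^{q-1}$ and $M_i=q(z_i+\varepsilon)^{q-1}$, nonpositive contributions from large groups, and domination of the small-group total by the group exceeding $3(1+2\varepsilon')$. The only difference is bookkeeping. You split at $3(1+\varepsilon')$ and beat the factor $\ell$ directly through the exponential gap $\bigl((1+\varepsilon'+\varepsilon)/(1+2\varepsilon'-\varepsilon)\bigr)^{q-1}$. The paper instead splits at $3(1+5\delta)$ with $\delta=1/t$, and absorbs $\ell$ via the intermediate threshold $3\ell^{1/(q-1)}(1+\varepsilon')(1+5\delta+\varepsilon)\le 3(1+2\varepsilon')$.
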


 \begin{proof}
 We consider the distribution $\mathcal{D}$ output by~\Cref{lemma:basic-local-search-multiswap} and show that it satisfies the condition of the lemma. Fix an $i \in [\ell]$. We let $m_i = q(z_i-\varepsilon)_+^{q-1}$, and $M_i = q(z_i + \varepsilon)^{q-1}$. 
For every $t$-swap $S$, define 

$$
\xi_{S,i} = 
\begin{cases}
m_i, & \text{if } z_i' < z_i\\
M_i, & \text{otherwise}
\end{cases}
$$

Then~\Cref{lemma:basic-local-search-multiswap} guarantees that
\[
\mathbb E_{S\sim\mathcal D}[\xi_{S,i}(z'_i-z_i)]
\le
\frac{3(1+\delta)M_i z_i^* - m_i z_i}{k},
\]
where \(\delta=1/t\).


Notice that since $t \geq 1000/\epsilon'$, we have $\delta \leq \epsilon'/1000$. Further note that $\sum_{i} \mathbb{E}_{S}[\xi_{S,i}(z_i' - z_i)] = \mathbb{E}_{S}[\Delta(S)]$.

Observe that for any $i \in [\ell]$ with $z_i \geq 3$, we have
\begin{equation}\label{eq:boundratio}
M_i/m_i = (z_i + \varepsilon)^{q-1} /(z_i - \varepsilon)^{q-1} \leq (1 + 3\varepsilon/z_i)^q \leq 1 + \delta/10
\end{equation}
where we use the inequality $(1 + v) \leq e^v $ and $e^u \leq 1 + 3u$ for $0 \leq u,v \leq 1$ and since we set $\varepsilon = \frac{\delta}{20q}$.

Now we divide the indices $i \in [\ell]$ into two groups, depending on if their current cost is large or small.

\begin{enumerate}
\item  Large $i$, satisfying $z_i \geq 3(1 + 5\delta)$.
In this case, note that the contribution of this term to $k\mathbb{E}_S[\Delta(S)]$ is

\begin{align*}
3M_i(1 + \delta)z_i^* - m_i z_i &= m_i\bigl(\frac{3M_i}{m_i
}(1 + \delta)z_i^* - z_i\bigr) \\
  &\leq m_i\bigl(3(1 + \delta/10)(1 + \delta) - z_i\bigr) \;\;\text{(using~\Cref{eq:boundratio} and $z_i^* \leq 1$)}\\ &\leq -3\delta m_i \\  
  &\leq 0
\end{align*}

\item Small $i$, satisfying $z_i \leq  3(1 + 5\delta)$. In this case, note that the contribution of this term to $k\mathbb{E}_S[\Delta(S)]$ is at most $3(1 + \delta)M_iz_i^* \leq 3q(1 + \delta)(3(1 + 5\delta) + \varepsilon)^{q-1}$.
\end{enumerate}

Thus the total contribution of all the small $i$ is at most $$\lambda_{\text{small}} = 3q\ell(1 + \delta)(3(1 + 5\delta + \varepsilon))^{q-1}.$$

We wish to show that if there is an $i \in [\ell]$ satisfying $z_i \geq 3(1 + 2\varepsilon')$, then $k\mathbb{E}_{S}[\Delta(S)] < 0$.

If there is an $i$ with $z_i \geq 3\ell^{\frac{1}{q-1}}(1 +\varepsilon')(1 + 5\delta + \varepsilon)$, then its contribution to $k\Exp_S[\Delta(S)]$ is at most

\begin{align*}
3M_i(1 + \delta)z_i^* - m_i z_i &= m_i\bigl(\frac{3M_i}{m_i}(1 + \delta)z_i^* - z_i\bigr) \\
  &\leq m_i(3(1 + \delta/10)(1 + \delta) - z_i) \;\;\text{(using~\Cref{eq:boundratio} and $z_i^* \leq 1$)}\\ &\leq -3\varepsilon'm_i \\
  &\leq -3\varepsilon'q(3\ell^{1/(q-1)}(1 + \varepsilon')(1 + 5\delta + \varepsilon) - \varepsilon)^{q-1} \\
  &\leq -{3q\ell \varepsilon'} \bigl(3(1 + \varepsilon'/2)(1 + 5\delta + \varepsilon)\bigr)^{q-1}\;\;\text{(since $\varepsilon = \frac{\delta}{20q} \leq \delta \leq \frac{\varepsilon'}{1000}$)}
\end{align*}

Thus if

\begin{equation}\label{eqn:decrease}
{3q\ell \varepsilon'} \bigl(3(1 + \varepsilon'/2)(1 + 5\delta + \varepsilon)\bigr)^{q-1} \geq 2\lambda_{\text{small}} = 6q\ell(1 + \delta)(3(1 + 5\delta + \varepsilon))^{q-1}
\end{equation}

we must have $k\Exp_{S}[\Delta(S)] \leq -{1.5q\ell \varepsilon'} \bigl(3(1 + \varepsilon'/2)(1 + 5\delta + \varepsilon)\bigr)^{q-1} \leq -\varepsilon^q$ as desired, where we again use the fact $\epsilon = \frac{\delta}{20q} \leq \delta \leq \epsilon'/1000$. The inequality~\ref{eqn:decrease} is satisfied if

\begin{align*}
{\varepsilon'} &\geq 2(1 + \delta)\Bigl(\frac{1}{1 + \varepsilon'/2}\Bigr)^{q-1}\\
\frac{2}{\varepsilon'}(1 + \delta) &\leq (1 + \varepsilon'/2)^{q-1}
\end{align*}

which is satisfied since $q \geq \frac{100}{\varepsilon'^2}$.

Finally, we show that $3\ell^{\frac{1}{{q-1}}}(1 +\varepsilon')(1 + 5\delta + \varepsilon) \leq 3(1 + 2\varepsilon')$.

We have $\delta \leq \varepsilon'/100$, $\varepsilon \leq \frac{\varepsilon'}{100q}$. Since $q - 1\geq \frac{100}{\varepsilon'^2}\log {\ell}$, we must have $\ell^{\frac{1}{q-1}} \leq 2^{\frac{\varepsilon'^2}{100}} \leq (1 + \frac{\varepsilon'}{100})$. It therefore follows that
\begin{align*}
3\ell^{\frac{1}{{q-1}}}(1 +\varepsilon')(1 + 5\delta + \varepsilon) &\leq 3(1 + \frac{\varepsilon'}{100}) (1 + \varepsilon')(1 + \frac{\varepsilon'}{10}) \;\;\text{(since $\varepsilon = \frac{\delta}{20q} \leq \delta \leq \frac{\varepsilon'}{1000}$)}
\\&\leq 3(1 + 2\varepsilon').
\end{align*}

Thus we have shown that if there exists even one $i \in [\ell]$ with $z_i \geq 3(1 + 2\varepsilon')$, we must have $\mathbb{E}_S[\Delta(S)] \leq -\frac{\varepsilon^q}{k}$, which finishes the proof.

\end{proof}

Henceforth, we will fix $\varepsilon'$, $t =  \frac{1000}{\varepsilon'}$, $q = 1 +  \frac{100}{\varepsilon'^2}{\log \ell}$ and $\varepsilon = \frac{1}{20tq}$ .

The next lemma shows that if there is a swap $S$ for which the cost decreases significantly for a group, then simply opening the centers in $S \setminus C$ \emph{without} closing the centers in $C \cap S$ must reduce the potential significantly.

\begin{lemma}\label{lemma:decrease}
Let $S$ be a $t$-swap such that $z_i' \leq z_i - \varepsilon$ for some group $i \in [\ell]$. Then opening the centers in $S \setminus C$ (without closing any centers of $C$) reduces the potential function by at least $\varepsilon^q$.
\end{lemma}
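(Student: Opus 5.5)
The plan is a monotonicity argument on the set of open centers. Let $C' = (C\setminus S)\cup(S\setminus C)$ be the center set after the $t$-swap $S$, and let $C'' = C \cup (S\setminus C)$ be the set obtained by only opening the new centers. Since $C' \subseteq C''$, every point $u$ satisfies $d(u,C'') \le d(u,C')$. Since $C\subseteq C''$, we also have $d(u,C'')\le d(u,C)$. Weights are non-negative, so writing $z_j''$ for the group-$j$ cost under $C''$ gives
\[
z_j'' \le \min(z_j, z_j') \qquad \text{for every } j\in[\ell].
\]

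Next I compare potentials coordinate by coordinate. For every $j \ne i$ we have $(z_j'')^q \le z_j^q$, so these terms never increase $\Phi$. For the distinguished group $i$ we have $z_i'' \le z_i' \le z_i - \varepsilon$; in particular $z_i \ge \varepsilon$. Because $x\mapsto x^q$ is increasing on $[0,\infty)$,
\[
z_i^q - (z_i'')^q \;\ge\; z_i^q - (z_i-\varepsilon)^q \;\ge\; \varepsilon^q .
\]
The last inequality is the superadditivity of $x^q$ for $q\ge 1$: $(a+b)^q \ge a^q + b^q$ for $a,b\ge 0$, applied with $a = z_i-\varepsilon$ and $b=\varepsilon$. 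Summing over all groups gives $\Phi(C) - \Phi(C'') \ge \varepsilon^q$, which is the claim.

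I expect no serious obstacle here. The only points needing care are that $z_i - \varepsilon \ge 0$, which follows from $z_i'\ge 0$, and that $q \ge 1$, which holds for the fixed choice $q = 1+\frac{100}{\varepsilon'^2}\log\ell$ so that the superadditivity inequality applies. The substantive content is just that adding centers can only decrease every group's cost. This is what lets the algorithm later trade a $t$-swap for opening at most $t$ extra centers, each time reducing $\Phi$ by at least $\varepsilon^q$. That bounds the number of such openings by $\Phi_0/\varepsilon^q$, a function of $\ell$ and $1/\varepsilon'$ after normalization.
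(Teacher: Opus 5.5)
Your proof is correct and follows the paper's argument. Since $C\subseteq C\cup(S\setminus C)$ and $C'\subseteq C\cup(S\setminus C)$, no group cost increases and group $i$'s cost is at most $z_i-\varepsilon$, so $z_i^q-(z_i-\varepsilon)^q\ge\varepsilon^q$ for $q\ge 1$. You also make explicit two steps the paper leaves implicit: the containment that gives $z_i''\le z_i'$, and superadditivity of $x^q$ together with $z_i-\varepsilon\ge 0$.
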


\begin{proof}
Let $z_i''$ be the cost of group $i$ in the solution obtained by opening $S$ as additional centers. Clearly, $z_i'' \leq z_i - \varepsilon$ by our assumption, and since we do not close any existing centers, we must have $z_j'' \leq z_j$ for each group $j \in [\ell]$.

But then $z_i''^q - z_i^q \leq (z_i - \varepsilon)^q - z_i^q \leq -\varepsilon^q$ since $q \geq 1$, and this finishes the proof.
\end{proof}

The following lemma shows that if there is a swap $S$ for which the cost \emph{increases} significantly for a group and additionally $\Delta(S) < 0$ for the swap $S$,
then simply opening the centers in $S$ \emph{without} closing the centers in $C \cap S$ must reduce the potential significantly.

\begin{lemma}\label{lemma:increase}
Let $S$ be a $t$-swap such that $z_i' \geq z_i + \varepsilon$ for some group $i \in [\ell]$, and $\Delta(S) < 0$. Then opening the centers in $S \setminus C$ (without closing any centers of $C$) reduces the potential function by at least $\varepsilon^q$.
\end{lemma}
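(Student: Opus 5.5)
We show that opening $S\setminus C$ without closing anything, giving costs $z_j''\le \min(z_j,z_j')$ for every $j$, lowers $\Phi$ by at least $\varepsilon^q$. We exploit the negativity of $\Delta(S)$: the large positive term contributed by group $i$ must be offset by groups whose costs drop under the swap.

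\textbf{Step 1 (isolate the positive term).} Group $i$ has $z_i'-z_i\ge\varepsilon$, so it contributes at least $q(z_i+\varepsilon)^{q-1}\varepsilon\ge q\varepsilon^{q}$ to $\Delta(S)$. All other terms with $z_j'\ge z_j$ are nonnegative. Since $\Delta(S)<0$, the set $J=\{j: z_j'<z_j\}$ satisfies
\[
\sum_{j\in J} q(z_j-\varepsilon)_+^{q-1}(z_j-z_j') \;>\; q\varepsilon\,(z_i+\varepsilon)^{q-1}\;\ge\; q\varepsilon^q .
\]

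\textbf{Step 2 (the decrease for $J$ is realized by opening).} For $j\in J$ we have $z_j''\le z_j'<z_j$, and for all other $j$ we have $z_j''\le z_j$. Hence
\[
\Phi-\Phi''\ \ge\ \sum_{j\in J}\bigl(z_j^q-(z_j')^q\bigr).
\]
We lower bound each summand by $q\,(z_j-\varepsilon)_+^{q-1}(z_j-z_j')$, split by the size of the drop. If $z_j'\ge z_j-\varepsilon$, the mean value theorem gives $z_j^q-(z_j')^q=q\eta^{q-1}(z_j-z_j')$ for some $\eta\ge z_j'\ge z_j-\varepsilon$, which is the desired bound. If $z_j'<z_j-\varepsilon$, Lemma~\ref{lemma:decrease} already applies to $S$ and yields a decrease of at least $\varepsilon^q$, so we are done. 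We may therefore assume every $j\in J$ has $z_j-z_j'<\varepsilon$, and then the first case holds for all of $J$.

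\textbf{Step 3 (conclude).} Combining Steps 1 and 2 gives $\Phi-\Phi''\ge q\varepsilon^q\ge\varepsilon^q$ because $q\ge1$.

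\textbf{Main obstacle.} The delicate point is Step 2: the linearization coefficient $(z_j-\varepsilon)_+^{q-1}$ is a valid lower slope only when the drop is at most $\varepsilon$. The case split that routes large drops through Lemma~\ref{lemma:decrease} is what resolves this. A minor check is that this lemma's hypothesis $z_j'\le z_j-\varepsilon$ matches our strict case, which it does.
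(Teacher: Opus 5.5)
Your proof is correct and is essentially the paper's argument. You dispose of drops larger than $\varepsilon$ via Lemma~\ref{lemma:decrease}, use $\Delta(S)<0$ together with the group-$i$ term to force the linearized decrease over $\{j: z_j'<z_j\}$ above $q\varepsilon^q$, and then use $z_j''\le\min(z_j,z_j')$ and the mean value theorem to show that opening $S\setminus C$ realizes at least that decrease. The only difference is that the paper does the large-drop case split up front, whereas you do it inside Step 2.
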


\begin{proof}
First, if $z_j' \leq z_j - \varepsilon$ for some $j \in [\ell]$, by the previous lemma, we are already done.

Therefore we can assume that $z_j' \geq z_j - \varepsilon$ for each group $j \in [\ell]$. Since $\Delta(S) < 0$, we have

$$\sum_{j: z_j \geq z_j'} q(z_j - \varepsilon)_+^{q-1} (z_j' - z_j) + \sum_{j:z_j \leq z_j'} q(z_j + \varepsilon)^{q-1}(z_j'- z_j) < 0$$

In particular, this means that
\begin{equation}\label{eq:lpc}
\sum_{j:z_j \geq z_j'} q(z_j - \varepsilon)_+^{q-1} (z_j' - z_j) \leq - q(z_i + \varepsilon)^{q-1}(z_i' -z_i) \leq -q\varepsilon^{q}.
\end{equation}

Now consider what happens when we open $S \setminus C$. If the cost of each group $j$ after opening these centers is $z_j''$ and the final potential is $\Phi''$, then we have:

\begin{align*}
 \Phi'' - \Phi &\leq \sum_{j: z_j'' < z_j} (z_j''^q - z_j^q) \\
 &\leq \sum_{j: z_j' \leq  z_j} (z_j'^q - z_j^q) \\
 &\leq \sum_{j: z_j' \leq  z_j} q(z_j - \varepsilon)_+^{q-1} (z_j' - z_j)\;\; \intertext{where the last inequality follows from the Mean value theorem and the fact that $z_j' \geq z_j - \varepsilon$ for each $j$.}
 \end{align*}
 Finally, by~\Cref{eq:lpc}, we have 
 \begin{align*}
 \sum_{j: z_j' \leq z_j} q(z_j - \varepsilon)_+^{q-1}(z_j'-z_j)&\leq -q(z_i + \varepsilon)^{q-1}(z_i'-z_i) \\ 
 &\leq -q\varepsilon^q
\end{align*}
as desired.
\end{proof}

\begin{proof}[Proof of~\Cref{thm:localsearchconstant}]
First, run the algorithm of~\Cref{thm:local_search} with $p = 1$ and $q = \log \ell$. The $\ell_q$ norm then gives a constant approximation to the $\ell_{\infty}$ norm, and hence we obtain an $O(\log \ell)$ approximation for $\ell_1$-clustering with $k$ centers. Thus at this point we have a set $C$ of $k$ centers, and the cost $z_i$ of each group $i \in [\ell]$ satisfies $z_i \leq O(\log \ell)$. Choose $\varepsilon' > 0$, redefine $q = \frac{100}{\varepsilon'^2} \log \ell$ and choose $t =  \frac{1000}{\varepsilon'}$, $\varepsilon = \frac{1}{20tq}$ as in the above analysis.

If $z_i \leq 3(1 + 2\varepsilon')$ for every $i \in [\ell]$, stop and output the currrent solution $C$.  Else, by~\Cref{lemma:lpcreduce}, there must exist a $t$-swap $S$ such that $\Delta(S) \leq \frac{-\epsilon^q}{k}$. By trying all $t$-swaps, find a $t$-swap $S$ such that $\Delta(S)$ is minimized. If after applying this swap, there is some group $i$ such that $|z_i' - z_i| > \varepsilon$, then open the set of centers in $S \setminus C$ without closing any center of $C$ to obtain a new set of centers $C'$. Else, execute the swap $S$ to obtain a new set of centers $C' = (C \setminus S) \cup (S \setminus C)$. We now repeat the procedure with $C$ replaced by $C'$. This concludes the description of the algorithm.

The next lemma shows that throughout the course of the algorithm, $\Phi$ decreases in every step.

\begin{lemma}

After every iteration of the algorithm, $\Phi$ decreases by at least $\frac{\varepsilon^q}{k}$.
\end{lemma}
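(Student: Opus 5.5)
We argue by cases on which branch the algorithm takes in an iteration. In every non-terminating iteration some group has $z_i > 3(1+2\varepsilon')$, so \Cref{lemma:lpcreduce} supplies a distribution over $t$-swaps with $\mathbb{E}_{S\sim\mathcal D}[\Delta(S)] \le -\varepsilon^q/k$. Hence some $t$-swap has $\Delta(S) \le -\varepsilon^q/k$. The algorithm picks the minimizer over all $t$-swaps, so the chosen $S$ also satisfies $\Delta(S) \le -\varepsilon^q/k < 0$.

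\textbf{Case 1: some group $i$ has $|z_i'-z_i|>\varepsilon$.} The algorithm then opens $S\setminus C$ without closing anything. If $z_i' < z_i-\varepsilon$, \Cref{lemma:decrease} gives a drop in $\Phi$ of at least $\varepsilon^q$. If $z_i' > z_i+\varepsilon$, we have $\Delta(S)<0$, so \Cref{lemma:increase} gives the same drop of at least $\varepsilon^q$. In both sub-cases the drop is at least $\varepsilon^q \ge \varepsilon^q/k$.

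\textbf{Case 2: $|z_j'-z_j|\le\varepsilon$ for every $j$.} The algorithm executes the swap. We compare the true change $\Phi'-\Phi=\sum_j (z_j'^q - z_j^q)$ with $\Delta(S)$ term by term, using the mean value theorem. For a coordinate with $z_j'\le z_j$, we have $z_j'^q-z_j^q = q\eta^{q-1}(z_j'-z_j)$ for some $\eta\in[z_j',z_j]\subseteq[z_j-\varepsilon,z_j]$. Since $z_j'-z_j\le 0$ and $\eta^{q-1}\ge (z_j-\varepsilon)_+^{q-1}$, the true term is at most $q(z_j-\varepsilon)_+^{q-1}(z_j'-z_j)$. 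For a coordinate with $z_j'\ge z_j$, the intermediate point satisfies $\eta\le z_j+\varepsilon$, and since $z_j'-z_j\ge0$ the true term is at most $q(z_j+\varepsilon)^{q-1}(z_j'-z_j)$. Summing over $j$ gives $\Phi'-\Phi\le\Delta(S)\le-\varepsilon^q/k$.

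The main point requiring care is Case 2, specifically the sign handling. The slope bound has to be taken in the correct direction for decreasing versus increasing coordinates, and this is exactly what the definition of $\Delta(S)$ was designed to capture. The coordinates with $z_j-\varepsilon<0$ are covered by the positive part $(\cdot)_+$. In Case 1 we also need the side condition $\Delta(S)<0$ in the increasing sub-case, which is already guaranteed by the choice of $S$.
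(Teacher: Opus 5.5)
Your proof is correct and follows the paper's argument. It uses the same case split on whether some $|z_i'-z_i|$ exceeds $\varepsilon$: \Cref{lemma:decrease} and \Cref{lemma:increase} handle the branch that opens $S\setminus C$, and the mean value theorem, with the slope bounds taken in the right direction for each sign, gives $\Phi'-\Phi\le\Delta(S)$ in the swap branch. You also spell out why the chosen minimizing swap satisfies $\Delta(S)\le-\varepsilon^q/k$ (via \Cref{lemma:lpcreduce}), which the paper leaves implicit.
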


\begin{proof}

If we open the set of centers $S \setminus C$, it must be the case that $|z_i - z_i'| \geq \epsilon$ for some group $i$. In this case, by~\Cref{lemma:decrease,lemma:increase}, $\Phi$ decreases by at least $\varepsilon^q$.
Otherwise, for the swap $S$, for every group $j$, we must have $|z_j - z_j'| \leq \varepsilon$, and $\Delta(S) \leq \frac{-\varepsilon^q}{k}$.

Now applying the mean-value theorem, we have

\begin{align*}
z_j'^q - z_j^q = q\theta^{q-1}(z_j' - z_j) 
\end{align*}

for some $\theta$ between $z_j$ and $z_j'$. Since $|z_j' - z_j| \leq \varepsilon$, we obtain the two inequalities

\[
z_j'^q - z_j^q \le
\begin{cases}
q(z_j + \varepsilon)^{q-1}(z_j' - z_j), & \text{if } z_j' \geq z_j,\\
q(z_j-\varepsilon)_+^{q-1}(z_j' - z_j), & \text{if } z_j' \leq z_j.
\end{cases}
\]

Let $\Phi'$ be the new potential after applying the swap $S$. This implies

\begin{align*}
\Phi' - \Phi &= \sum_{j}(z_j'^q - z_j^q) \\
&\leq \sum_{j: z_j' \geq z_j}q(z_j + \varepsilon)^{q-1}(z_j' - z_j) + \sum_{j: z_j' \leq z_j}q(z_j-\varepsilon)_+^{q-1}(z_j' - z_j) \\
&= \Delta(S) \leq \frac{-\varepsilon^q}{k}.
\end{align*}

\end{proof}

\paragraph{Bound on the number of additional centers.}
Note that since we start with a $O(\log \ell)$ approximation, $z_i \leq O(\log \ell)$ for each $i \in [\ell]$ initially. Thus the initial potential $\Phi_0 = \sum_{i \in [\ell]} z_i^q \leq O(\ell \log^q \ell)$. Throughout the algorithm $\Phi$ is non-increasing, and whenever we open at most $t$ additional centers (without swapping), by~\Cref{lemma:decrease,lemma:increase} we decrease the potential by at least $\varepsilon^q$. Thus the total number of additional centers is at most $t\phi_0 \leq f(\ell, \frac{1}{\varepsilon'}) = O\bigl(\frac{\ell}{\epsilon'} (\frac{\log \ell}{\varepsilon})^q\bigr)$.

\paragraph{Bound on the number of iterations and running time.} Every time we make a swap $S$, the potential reduces by at least $-\Delta(S) > \frac{1}{k}\varepsilon^{q}$. Everytime we open a center without closing existing centers, the potential reduces by at least $\epsilon^q$. Thus the number of iterations is at most $O(k\ell(\frac{\log \ell}{\varepsilon})^q)$. In each iteration the algorithm tries all possible $t$-swaps, taking $n^{O(t)}$ time. Since $t = O(\frac{1}{\varepsilon'})$, and the number of iterations is at most $kg(\ell, \frac{1}{\varepsilon'})$ for some function $g$, the running time of the algorithm is bounded by $g(\ell, \frac{1}{\varepsilon'})n^{O\left(\frac{1}{\varepsilon'}\right)}$.

\end{proof}

\end{document}